\documentclass{article}
\usepackage{amsmath}
\usepackage{wright}
\usepackage{tikz}
\usetikzlibrary{matrix}

\newtheorem*{theorem*}{Theorem}

\newcommand{\mixed}{\textup{\textbf{Mix}}}
\newcommand{\gps}{\calA_{\mathrm{GPS}}}
\newcommand{\gkkt}{\calA_{\mathrm{GKKT}}}
\newcommand{\hayashi}{\calA_{\mathrm{Hayashi}}}

\newcommand{\symsep}{\mathrm{SoS}}
\newcommand{\purifychan}{\Phi_{\mathrm{Purify}}}
\newcommand{\swap}{\mathrm{SWAP}}

\newcommand{\reg}[1]{\mathsf{#1}}
\newcommand{\schur}{U_{\mathrm{Schur}}}

\title{Mixed state tomography reduces to pure state tomography}
\author{Angelos Pelecanos\thanks{UC Berkeley. \texttt{\{apelecan,jspilecki,ewin,jswright\}@berkeley.edu}}  \and Jack Spilecki\footnotemark[1] \and Ewin Tang\footnotemark[1]\and John Wright\footnotemark[1]}
\date{}

\begin{document}

\maketitle

\begin{abstract}
    A longstanding belief in quantum tomography is that estimating a mixed state is far harder than estimating a pure state.
    This is borne out in the mathematics, where mixed state algorithms have always required more sophisticated techniques to design and analyze than pure state algorithms.
    We present a new approach to tomography demonstrating that, contrary to this belief, state-of-the-art mixed state tomography follows easily and naturally from pure state algorithms.
    
    We analyze the following strategy: given $n$ copies of an unknown state $\rho$, convert them into copies of a purification $\ket{\rho}$; run a pure state tomography algorithm to produce an estimate of $\ket{\rho}$; and output the resulting estimate of $\rho$.
    The purification subroutine was recently discovered via the ``acorn trick'' of Tang, Wright, and Zhandry~\cite{TWZ25}.
    With this strategy, we obtain the first tomography algorithm which is sample-optimal in all parameters.
    For a rank-$r$ $d$-dimensional state, it uses $n = O((rd + \log(1/\delta))/\eps)$ samples to output an estimate which is $\eps$-close in fidelity with probability at least $1-\delta$.
    This algorithm also uses $\poly(n)$ gates, making it the first gate-efficient tomography algorithm which is sample-optimal even in terms of the dimension $d$ alone.
    Moreover, with this method we recover essentially all results on mixed state tomography, including its applications to tomography with limited entanglement, classical shadows, and quantum metrology.
    Our proofs are simple, closing the gap in conceptual difficulty between mixed and pure tomography.
    Our results also clarify the role of entangled measurement in mixed state tomography: the only step of the algorithm which requires entanglement across copies is the purification step, suggesting that, for tomography, the reason entanglement is useful is for \emph{consistent purification}.
\end{abstract}

\newpage

\hypersetup{linktocpage}
\tableofcontents
\thispagestyle{empty}

\newpage

\section{Introduction}
Given $n$ copies of a mixed state $\rho \in \C^{d \times d}$, \emph{mixed state tomography} refers to the task of producing an estimate of $\rho$.
\emph{Pure state tomography} refers to the special case when $\rho$ is promised to be a pure state.
It has long been believed that pure state tomography is significantly easier than mixed state tomography.
As Holevo~\cite[Section 4.11]{Hol11} puts it:
\begin{center}
\textit{The full model clearly displays another feature of quantum estimation problem:\\the complexity sharply increases with passage from pure to mixed states.}
\end{center}
This sharp increase in complexity appears not just in concrete resource requirements, but also in abstract conceptual difficulty.
Pure state tomography can be solved using $n = \Theta(d)$ copies~\cite{Hay98,SSW25} with measurements which are  unentangled across the $n$ copies of $\rho$,  computationally efficient, and conceptually simple~\cite{KRT14,GKKT20}.
On the other hand, optimal mixed state tomography requires $n = \Theta(d^2)$ copies~\cite{HHJ+16,OW16}, and the measurements which achieve this \emph{must} be entangled~\cite{CHL+23,CLL24a}, are not known to be computationally efficient, and are conceptually difficult, making use of heavy representation theory~\cite{HHJ+16,OW16,PSW25}.
Put together, this state of affairs suggests that pure and mixed state tomography could not be more hopelessly dissimilar.

We overturn this intuition by showing that mixed state tomography actually \emph{reduces} to pure state tomography.
To do so, we make use of the following recent result of Tang, Wright, and Zhandry~\cite{TWZ25}.
\begin{theorem}[Efficiently generating random purifications]\label{thm:acorn}
    Let $n\geq 1$, $d \geq 1$, and $r \leq d$ be integers. 
    There is a quantum channel $\purifychan^{d, r}(\cdot)$ which acts as follows.
    Given $n$ copies of a rank-$r$ mixed state $\rho \in \C^{d \times d}$,
    \begin{equation*}
        \purifychan^{d, r}(\rho^{\otimes n}) = \E_{\ket{\brho}} \ketbra{\brho}^{\otimes n},
    \end{equation*}
    where the expectation is over a uniformly random purification\footnote{
        When we refer to a uniformly random purification, we are referring to the distribution over purifications which is unitarily invariant in the purification register.
        For example, one can generate a sample from this distribution by taking a fixed purification $\ket{\rho} \in \C^d \otimes \C^r$ and applying a Haar-random unitary $\bU$ to the second register.
    } $\ket{\brho} \in \C^d \otimes \C^r$ of $\rho$.
    In addition, $\purifychan^{d,r}(\cdot)$ can be implemented to $\delta$ error in diamond distance in time $\poly(n, \log(d), \log(1/\delta))$.
\end{theorem}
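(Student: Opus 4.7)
The plan is to exploit Schur--Weyl duality to establish a common block-diagonal form for both $\rho^{\otimes n}$ and the target state $\E_{\ket{\brho}} \ketbra{\brho}^{\otimes n}$, and then design a channel that transports between these two forms.

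First, I would characterize the target explicitly. Writing a random purification as $(I \otimes \bU) \ket{\rho_0}$ for a fixed purification $\ket{\rho_0} \in \C^d \otimes \C^r$ and $\bU \sim U(r)$ Haar, the target equals
\begin{equation*}
    \E_{\bU}\, (I \otimes \bU)^{\otimes n}\, \ketbra{\rho_0}^{\otimes n}\, (I \otimes \bU^\dagger)^{\otimes n},
\end{equation*}
so the Haar average projects onto the commutant of the $U(r)$ action on the $n$ purifying registers. Combined with permutation invariance, Schur--Weyl duality on $(\C^r)^{\otimes n}$ then shows that this state is block-diagonal in the Schur basis of $(\C^d \otimes \C^r)^{\otimes n}$, supported only on Young diagrams $\lambda$ with at most $r$ rows, and with block weights determined solely by the spectrum of $\rho$.

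Next, I would construct $\purifychan^{d,r}$ as a composition of three steps: (i) apply the Schur transform $\schur$ to put $\rho^{\otimes n}$ into standard Schur--Weyl form on $(\C^d)^{\otimes n}$, producing blocks indexed by $\lambda$ with at most $d$ rows; (ii) within each such block whose diagram has at most $r$ rows, append an $r$-dimensional ancilla per copy and apply an isometry that embeds the $U(d)$-Weyl module tensored with the symmetric-group multiplicity space into the corresponding $U(d) \times U(r)$ Schur block of $(\C^{dr})^{\otimes n}$, by preparing a canonical maximally entangled state between the $U(r)$-Weyl module and the symmetric-group multiplicity space; (iii) apply the inverse Schur transform on $(\C^{dr})^{\otimes n}$. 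Here the purpose of the maximally entangled state in (ii) is precisely to reproduce the Haar-averaged structure identified above.

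The main obstacle will be verifying correctness: showing that the output is exactly $\E_{\ket{\brho}} \ketbra{\brho}^{\otimes n}$. This reduces to matching the scalar weights on each $\lambda$-block --- the weights on $\rho^{\otimes n}$ are Schur polynomials $s_\lambda$ evaluated at the eigenvalues of $\rho$, and one must verify that the Haar-averaged purification reproduces these with the correct normalization involving the Weyl-module dimensions for $U(r)$. For the efficiency claim, I would invoke the Bacon--Chuang--Harrow efficient Schur transform, observe that the per-block isometries admit efficient implementations as standard Young-tableau-indexed unitaries, and propagate diamond-norm errors through the resulting $\poly(n, \log d, \log(1/\delta))$-size circuit.
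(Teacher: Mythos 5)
Your overall architecture---Schur transform, a per-$\lambda$-block operation, an inverse (double) Schur transform on $(\C^{dr})^{\otimes n}$, correctness checked block by block, and efficiency from Bacon--Chuang--Harrow---matches the paper's, as does your characterization of the target via the Haar twirl and Schur's lemma. But the key per-block state in your step (ii) is wrong. Writing the $\lambda$-block of the doubly-Schur-transformed output space as registers $\reg{P}\reg{Q}\reg{P}'\reg{Q}'$ carrying $\mathrm{Sp}_\lambda \otimes V_\lambda^d \otimes \mathrm{Sp}_\lambda \otimes V_\lambda^r$, the symmetric subspace $\lor^n(\C^d \otimes \C^r)$ forces the two \emph{symmetric-group multiplicity registers} $\reg{P}\reg{P}'$ into the pure maximally entangled state $\ket{\mathrm{EPR}_\lambda}$ (\Cref{lem:double-schur-pi-sym}), while the Haar twirl places the \emph{maximally mixed} state $I/\dim(V_\lambda^r)$ on the $U(r)$-Weyl module $\reg{Q}'$. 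You instead prepare a maximally entangled state \emph{between} the $U(r)$-Weyl module and the multiplicity space. That state does not type-check in general, since $\dim(\mathrm{Sp}_\lambda) \neq \dim(V_\lambda^r)$ (e.g.\ for $\lambda = (n/2, n/2)$ and $r = 2$ the former is a Catalan number while the latter equals $1$), and any state entangling $\reg{Q}'$ with a multiplicity register leaves $\reg{P}\reg{P}'$ in a state other than the pure $\ket{\mathrm{EPR}_\lambda}$, so the output would fall outside the symmetric subspace and could not equal $\E_{\ket{\brho}}\ketbra{\brho}^{\otimes n}$.

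A related structural problem is that the per-block map cannot be an isometry from $V_\lambda^d \otimes \mathrm{Sp}_\lambda$ into the output block: the input multiplicity register must be \emph{discarded} and replaced (its dimension generically differs from anything it could be isometrically matched to), and the fresh $\reg{Q}'$ register must come out exactly maximally mixed. The paper implements this as a genuine channel: weak Schur sampling, discard $\reg{P}$, prepare $\ket{\mathrm{EPR}_{\blambda}}_{\reg{P}\reg{P}'}$ and a maximally mixed $\reg{Q}'$, then invert both Schur transforms. Finally, verifying correctness is not only a matter of matching scalar weights $s_\lambda(\rho)$: you also need the operator identity that the $\reg{Q}$-marginal of the $\lambda$-component of $\ketbra{\rho_0}^{\otimes n}$ equals $\dim(\lambda)\cdot\nu_\lambda^d(\rho)$ (\Cref{lem:double_schur_transform_pure_state}), which is what guarantees the $U(d)$-Weyl-module content of the output agrees with that of $\rho^{\otimes n}$ on every block.
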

\noindent
This result is an example of their ``acorn trick'',
in which copies of a resource (the mixed state $\rho$)
are randomly ``lifted'' to copies of a stronger resource (the purification $\ket{\brho}$)
in a manner which is consistent across all $n$ copies.
Similar results have previously appeared in the quantum property testing literature in the works of \cite[Theorem 35]{SW22} and especially \cite{CWZ24}. These works showed that having access to purifications does not help solve property testing tasks, and used this to convert lower bounds for mixed state property testing problems into lower bounds for purified property testing problems. 
Here, we do the opposite: we will use \Cref{thm:acorn} to convert pure state learning \emph{upper} bounds into mixed state learning upper bounds.

In particular, we advocate for constructing mixed state tomography algorithms in the following manner.

{
\floatstyle{boxed} 
\restylefloat{figure}
\begin{figure}[htbp]
Given $n$ copies of $\rho$:
\begin{enumerate}
    \item Apply $\purifychan^{d, r}$ to produce $n$ copies of a random purification $\ket{\brho} \in \C^d \otimes \C^r$.
    \item Run an off-the-shelf pure state tomography algorithm $\calA$ to learn an estimate $\widehat{\bsigma}$ of $\ketbra{\brho}$.
    \item Convert $\widehat{\bsigma}$ to an estimate $\widehat{\brho} = \tr_{\reg{2}}(\widehat{\bsigma})$ of $\rho$. Output $\widehat{\brho}$.
\end{enumerate}
\caption{A generic reduction from mixed state tomography to pure state tomography. We refer to the resulting mixed state tomography algorithm as $\mixed(\calA)$.}
\label{fig:reduction}
\end{figure}
}
\noindent
Both the mixed state $\rho$ and its random purification $\ket{\brho}$ have $\Theta(rd)$ real parameters, so this reduction should incur little loss in terms of sample complexity.
In exchange, it makes mixed state tomography significantly more intuitive to understand.
For example, while most entangled mixed state tomography algorithms  involve complicated representation theory,
here the representation theory is fairly elementary and confined to the description and analysis of the random purification channel.
And if one takes the random purification channel as a black box, then one only needs to analyze  $\ket{\brho}^{\otimes n}$, which can be done entirely with symmetric subspace computations.

Through this reduction, we give the first mixed state tomography algorithm which is sample-optimal in all parameters and gate-efficient.
Moreover, we reproduce essentially all mixed state tomography bounds which exist in the literature with conceptually simpler algorithms and proofs.
Occasionally, this will require us to make a minor modification to this generic reduction; we will discuss what this modification is and why it seems to be necessary when it arises below.

There are two main types of pure state tomography algorithms, corresponding to the unentangled and entangled measurement settings.
This leads to two types of mixed state tomography algorithms, both of which we investigate.

\subsection{Mixed state tomography: simpler, faster, and with high probability}
\label{sec:simpler}

We begin by instantiating the reduction with an unentangled pure state tomography algorithm.
The ``standard'' such algorithm is given in \Cref{fig:unentangled-tomography}.

{
\floatstyle{boxed} 
\restylefloat{figure}
\begin{figure}[H]
Given $n$ copies of $\sigma \in \C^{d \times d}$:
\begin{enumerate}
    \item For each $1 \leq i \leq n$:
        \begin{enumerate}
            \item Measure the $i$-th copy of $\sigma$ with the uniform POVM $\{d \cdot \ketbra{u} \cdot du\}$. Let $\ket{\bv_i}$ be the outcome.
            \item Set $\widehat{\bsigma}_i = (d+1) \cdot \ketbra{\bv_i} - I_d$.
        \end{enumerate}
    \item Output $\widehat{\bsigma}_{\mathrm{avg}} = \frac{1}{n} \cdot (\widehat{\bsigma}_1 + \cdots + \widehat{\bsigma}_n)$.
\end{enumerate}
\caption{The standard unentangled measurement tomography algorithm.}
\label{fig:unentangled-tomography}
\end{figure}
}

\noindent
When $\sigma$ is promised to be a pure state $\ketbra{\psi}$,
one might hope for the estimator produced by the algorithm to also be a pure state.
In this case, it is natural to post-process the output of this algorithm by computing the top eigenvector $\ket{\bv}$ of $\widehat{\bsigma}_{\mathrm{avg}}$
and using this as the estimator for $\ket{\psi}$ instead.
This pure state tomography algorithm was introduced in the work of Guta, Kahn, Kueng, and Tropp,
who showed that $\abs{\braket{\bv}{\psi}}^2 \geq 1 - \epsilon$ with probability at least $1 - \delta$ when using $n = O((d+ \log(1/\delta))/\epsilon)$ copies~\cite[Theorem 5]{GKKT20}.
One can even make this algorithm computationally efficient
via a straightforward application of unitary $t$-designs, as we show in the next theorem.
We prove this result in \Cref{sec:unentangled} below.

\begin{theorem}[Efficient pure state tomography]
    \label{thm:efficient-pure}
    There is an algorithm $\gkkt$ which, given
    \begin{equation*}
        n = O\Big(\frac{d + \log(1/\delta)}{\epsilon}\Big)
    \end{equation*}
    copies of a pure state $\ket{\psi} \in \C^d$,
    outputs a pure state $\ket{\bv} \in \C^d$ such that $\abs{\braket{\bv}{\psi}}^2 \geq 1 - \epsilon$ with probability at least $1 - \delta$.
    Furthermore, this algorithm can be implemented in $\poly(n)$ time\footnote{
        When we state running time of algorithms, we imagine that the Hilbert space $\C^d$ is being represented on a system of $\lceil \log_2(d) \rceil$ qubits on a quantum computer.
        So, running time refers to the number of one- and two-qubit gates used by the algorithm, including classical post-processing.
    } and performs independent measurements across the copies of $\ket{\psi}$.
\end{theorem}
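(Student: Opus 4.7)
The plan is to take the GKKT algorithm of \Cref{fig:unentangled-tomography}, replace the Haar-random uniform POVM with a measurement drawn from a unitary $2$-design, and post-process by extracting the top eigenvector of $\widehat{\bsigma}_{\mathrm{avg}}$. Concretely, for each copy of $\ket{\psi}$ I would sample a uniformly random Clifford $\bU_i$ on $k = \lceil \log_2 d \rceil$ qubits, apply $\bU_i^\dagger$, measure in the computational basis to obtain $\ket{\bx_i}$, set $\ket{\bv_i} = \bU_i \ket{\bx_i}$, and form $\widehat{\bsigma}_i = (d+1)\ketbra{\bv_i} - I_d$; the final output would be the top eigenvector of $\widehat{\bsigma}_{\mathrm{avg}} = \frac{1}{n}\sum_i \widehat{\bsigma}_i$.

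For the sample-complexity bound I would invoke GKKT's analysis of the Haar version directly. Their proof applies a matrix Bernstein/Chernoff inequality to the rank-one summands $(d+1)\ketbra{\bv_i}$ to control $\|\widehat{\bsigma}_{\mathrm{avg}} - \ketbra{\psi}\|_{\mathrm{op}}$, then converts operator-norm closeness into fidelity closeness via a perturbation argument (e.g.\ the elementary bound $|\braket{\bv}{\psi}|^2 \geq 1 - 2\|\widehat{\bsigma}_{\mathrm{avg}} - \ketbra{\psi}\|_{\mathrm{op}}$ or a Davis--Kahan variant). The essential observation is that the concentration inequality sees the ensemble of $\ket{\bv_i}$ only through its first and second moments $\E[\ketbra{\bv_i}]$ and $\E[\ketbra{\bv_i}^{\otimes 2}]$, together with the deterministic bound $\ketbra{\bv_i} \preceq I$; all three quantities agree exactly between Haar measure and any unitary $2$-design, so GKKT's bound $n = O((d + \log(1/\delta))/\epsilon)$ transfers verbatim to the Clifford-based algorithm. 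For $d$ not a power of $2$, I would pad to the next power of $2$ at the cost of a constant factor.

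The gate complexity is then immediate: a uniformly random Clifford on $k$ qubits (which in fact forms a $3$-design) can be sampled and implemented in $\poly(k) = \poly(\log d)$ gates, so each measurement round is efficient and the measurements are independent across copies by construction. Computing the top eigenvector of the $d \times d$ Hermitian matrix $\widehat{\bsigma}_{\mathrm{avg}}$ takes $\poly(d) \leq \poly(n)$ classical time.

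I expect the main obstacle to be the first step: explicitly verifying that GKKT's concentration argument truly consumes only second-moment information about the ensemble. Since matrix Bernstein-style bounds depend on the $X_i$ only through $\|\E X_i\|_{\mathrm{op}}$, $\|\E X_i^2\|_{\mathrm{op}}$, and the almost-sure operator-norm bound $\|X_i\|_{\mathrm{op}}$, this verification should be a straightforward inspection of their proof. The remaining ingredients---efficient Clifford sampling, eigenvector extraction, and dimension padding---are all textbook.
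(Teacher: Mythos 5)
There is a genuine gap, and it sits exactly at the step you flagged as the ``main obstacle'': the concentration argument does \emph{not} consume only second-moment information about the measurement ensemble, so a unitary $2$-design (or the Clifford group, a $3$-design) is not enough. The paper's proof does not use matrix Bernstein on the rank-one summands. Instead, for each fixed direction $\ket{z}$ it bounds \emph{all} moments of the scalar random variable $\bs_z = \bra{z}(\widehat{\brho}_i - \ketbra{v})\ket{z}$ by $\E[\abs{\bs_z}^k] \leq 41 \cdot 6^{k-2} k!$ (\Cref{lem:k-th-moment-bound}) --- crucially, a dimension-independent, sub-exponential-type bound --- then applies the scalar Bernstein inequality and a union bound over a covering net of size $3^{4d}$ (\Cref{lem:concentration-in-op-norm-gkkt}). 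Computing the $k$-th moment of $\bs_z$ requires evaluating $\E[\ketbra{\bv}^{\otimes k+1}]$ over the measurement ensemble, so to control moments up to $k \approx 6d$ (beyond which the trivial bound $(d+1)^k \leq k!$ takes over) one needs an approximate $t$-design with $t = 6d+1 = \Theta(d)$; this is exactly why the paper invokes the O'Donnell--Servedio--Paredes construction (\Cref{thm:t-design-theorem}) rather than Cliffords.

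If you instead retain only first and second moments and run matrix Bernstein, the relevant parameters are the almost-sure bound $\norm{\widehat{\brho}_i - \ketbra{v}}_\infty \approx d$ and the per-sample variance $\norm{\E[(\widehat{\brho}_i - \ketbra{v})^2]}_\infty = \Theta(d)$, which yields a tail of the form $d\cdot\exp(-\Omega(nq^2/d))$ and hence $n = O\big(d(\log d + \log(1/\delta))/\epsilon\big)$: an extra $\log d$ factor in the leading term and, more importantly, a \emph{multiplicative} $d\log(1/\delta)$ rather than the additive $d + \log(1/\delta)$ claimed in the theorem. The paper states explicitly that getting $n = O(d)$ already forces $t = \Omega(\log d)$ (as in the earlier work it cites) and that the correct $\delta$-dependence forces $t = \Theta(d)$. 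Your remaining ingredients (top-eigenvector extraction via a Mirsky/Davis--Kahan-type perturbation bound, padding $d$ to a power of two, per-copy gate efficiency) all match the paper and are fine; the fix is to swap the $2$-design for an efficiently implementable $\Theta(d)$-design and to redo the analysis at the level of high scalar moments plus a net, rather than at the level of a second-moment matrix concentration inequality.
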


Combining our reduction from mixed state tomography to pure state tomography with this pure state tomography algorithm results in the strongest mixed state tomography algorithm currently known.
We attain the correct dependence on failure probability $\delta$, making this algorithm sample-optimal in all parameters.
Further, we attain this optimal dependence with a gate-efficient algorithm.

\begin{theorem}[Efficient mixed state tomography]
    \label{thm:efficient-mixed}
    Let $\widehat{\brho}$ be the output of the algorithm $\mixed(\gkkt)$ when run on
    \begin{equation*}
        n = O\Big(\frac{rd + \log(1/\delta)}{\epsilon}\Big)
    \end{equation*}
    copies of a rank-$r$ mixed state $\rho \in \C^{d\times d}$.
    Then $\fidelity(\rho, \widehat{\brho}) \geq 1 - \epsilon$ with probability at least $1 - \delta$.
    Furthermore, $\mixed(\gkkt)$ can be implemented in $\poly(n)$ time.
\end{theorem}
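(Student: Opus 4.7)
The plan is to combine the three pieces of the reduction in Figure~\ref{fig:reduction} --- the purification channel of Theorem~\ref{thm:acorn}, the pure state algorithm $\gkkt$ from Theorem~\ref{thm:efficient-pure}, and partial trace --- and track sample complexity, failure probability, and gate count across them. The key structural observation is that the random purification $\ket{\brho}$ lives in $\C^d \otimes \C^r$, a space of dimension $d' = rd$, so the pure-state bound specializes to exactly the target rate $n = O((rd + \log(1/\delta))/\epsilon)$.

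First I would analyze the ideal scenario in which $\purifychan^{d,r}$ is implemented with zero diamond-distance error. By Theorem~\ref{thm:acorn}, after Step~1 the algorithm holds $\E_{\ket{\brho}} \ketbra{\brho}^{\otimes n}$, which by linearity is equivalent to sampling a Haar-random purification $\ket{\brho}$ once and then handing $n$ copies of it to $\gkkt$. Conditioned on any fixed purification, Theorem~\ref{thm:efficient-pure} applied in dimension $rd$ with failure budget $\delta/2$ produces $\ket{\bv} \in \C^d \otimes \C^r$ with $\abs{\braket{\bv}{\brho}}^2 \geq 1-\epsilon$ with probability at least $1-\delta/2$, using $n = O((rd + \log(1/\delta))/\epsilon)$ copies. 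Since this conditional guarantee holds for every $\ket{\brho}$, it survives averaging over the purification distribution.

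Next I would pass this fidelity bound through the partial trace in Step~3. Since $\tr_{\reg{2}}(\ketbra{\brho}) = \rho$ for every purification and fidelity is monotone under quantum channels,
\[
\fidelity(\widehat{\brho},\rho) = \fidelity\bigl(\tr_{\reg{2}}(\ketbra{\bv}),\,\tr_{\reg{2}}(\ketbra{\brho})\bigr) \geq \fidelity(\ketbra{\bv},\ketbra{\brho}) = \abs{\braket{\bv}{\brho}}^2 \geq 1-\epsilon.
\]
To absorb the diamond-distance error of $\purifychan^{d,r}$, I would implement it to accuracy $\delta/2$ and invoke the data-processing inequality: the total-variation distance between ``ideal algorithm succeeds'' and ``actual algorithm succeeds'' is at most $\delta/2$, so the total failure probability is at most $\delta/2 + \delta/2 = \delta$.

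Finally, for gate efficiency, note that $n \geq rd$ implies $\log d = O(\log n)$. Thus Theorem~\ref{thm:acorn} implements $\purifychan^{d,r}$ in $\poly(n,\log d,\log(1/\delta)) = \poly(n)$ time; Theorem~\ref{thm:efficient-pure} applied in dimension $rd \leq n$ runs in $\poly(n)$ time; and the partial trace is free. The total running time is $\poly(n)$. The entire proof is essentially a black-box composition of the two cited theorems, so I don't expect a genuine obstacle; the only care needed is splitting the $\delta$ budget between the two sources of error and confirming that the purification contributes the desired dimension factor of $rd$ (rather than, say, $d^2$).
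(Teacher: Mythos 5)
Your proposal is correct and follows essentially the same route as the paper: purify, run $\gkkt$ in dimension $rd$ with failure budget $\delta/2$, trace out, and charge the remaining $\delta/2$ to the diamond-distance error of implementing $\purifychan^{d,r}$. The only cosmetic difference is that you justify the final fidelity step via monotonicity of fidelity under the partial-trace channel, whereas the paper invokes Uhlmann's theorem directly with $\ket{\bv}$ and $\ket{\brho}$ as the competing purifications --- these are equivalent arguments.
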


\begin{proof}
    The first step of $\mixed(\gkkt)$ is to apply $\purifychan^{d,r}$ to $\rho^{\otimes n}$ to produce $n$ copies of a random purification $\ket{\brho} \in \C^d \otimes \C^r \cong \C^{dr}$.
    Next, we apply $\gkkt$ to learn an estimate $\ket{\bv}$ of $\ket{\brho}$.
    By \Cref{thm:efficient-pure}, with probability at least $1-\delta/2$, this estimate will satisfy $\abs{\braket{\bv}{\brho}}^2 \geq 1 - \epsilon$.
    Now, set $\widehat{\brho} = \tr_{\reg{2}}(\ketbra{\bv})$.
    By Uhlmann's theorem,
    \begin{equation*}
        \fidelity(\rho, \widehat{\brho})
        = \max_{\ket*{\psi_{\rho}}, \ket*{\psi_{\widehat{\brho}}}} \abs{\braket*{\psi_{\widehat{\brho}}}{\psi_{\rho}}}^2
        \geq \abs{\braket{\bv}{\brho}}^2,
    \end{equation*}
    where the maximization is over all purifications $\ket*{\psi_{\rho}}$ and $\ket*{\psi_{\widehat{\brho}}}$ of $\rho$ and $\widehat{\brho}$, respectively.
    Therefore, $\fidelity(\rho, \widehat{\brho}) \geq 1-\epsilon$ with probability at least $1 - \delta/2$.

    To make this efficient, note that the pure state tomography algorithm $\gkkt$ can be implemented in time $\poly(n) = \poly(d, 1/\epsilon, \log(1/\delta))$ by \Cref{thm:efficient-pure} .
    Similarly, we can implement the purifying channel $\purifychan^{d, r}$ to error $\delta/2$ in diamond distance in time $\poly(n, \log(d), \log(1/\delta)) = \poly(d, 1/\epsilon, \log(1/\delta))$ by \Cref{thm:acorn}.
    This will introduce an additional $\delta/2$ probability of failure, meaning that the algorithm succeeds with probability at least $1-\delta$.
    This completes the proof.
\end{proof}

This is the first tomography algorithm which is sample-optimal in all four parameters $d, r, \epsilon$, and $\delta$;
optimality of the $O(rd/\epsilon)$ term follows from the lower bound of~\cite{Yue23} (see also the lower bound of~\cite{SSW25}),
and optimality of the $O(\log(1/\delta)/\epsilon)$ term follows from estimating the bias of a coin with high probability.
Prior to this work, the best known bounds were
\begin{equation*}
    n = O\Big(\frac{rd}{\epsilon} \cdot \log\Big(\frac{rd + \log(1/\delta)}{\epsilon}\Big) + \frac{\log(1/\delta)}{\epsilon}\Big)
    \quad \text{and} \quad
    n = O\Big(\frac{rd}{\epsilon} \cdot \log(1/\delta)\Big),
\end{equation*}
which follow from \cite[Equation (14)]{HHJ+16} and from combining \cite[Theorem 1.6]{PSW25} with standard amplification results (for example, \cite[Proposition 2.4]{HKOT23}), respectively.
Our bound gives a strict improvement if $\delta$ is smaller than constant and larger than $(rd)^{-rd}$.

This also marks a major improvement in the sample-complexity of time-efficient tomography; we are not aware of an explicit theorem statement in the existing literature, but the works which could be made time-efficient with standard techniques~\cite{GKKT20,KRT14} achieve a far-from-optimal scaling of $n = O(r^2 d)$ when $\eps$ and $\delta$ are constant.

Even including proofs of intermediate results, the proof of \Cref{thm:efficient-mixed} is simple: ignoring time-efficiency, all it needs are basic representation theory (used in the proof of \Cref{thm:acorn}) and a scalar concentration inequality (used in the proof of \Cref{thm:efficient-pure}).
By comparison, all prior work requires more advanced representation theory, especially of the unitary group.
In addition, most prior works achieve suboptimal sample complexities for learning in fidelity even for constant settings of the error probability~$\delta$ \cite{HHJ+16,OW16,OW17a}, and so their techniques seem unable to establish optimal sample complexity bounds in terms of all four parameters.
The one exception is the recent work of~\cite{PSW25},
which shows a tight sample complexity bound of $n = O(rd/\epsilon)$ for learning with constant error probability $\delta$.
They do so by analyzing the second moment of their estimator, and it seems plausible that one could extend their result to arbitrary error probabilities $\delta$ by analyzing higher moments of their estimator.
However, their second moment proof is already quite involved, spanning dozens of pages of complicated representation-theoretic calculations.
A proof for higher moments, without further ideas, seems like it would be at best a tremendous chore and at worst completely intractable.\footnote{
    One could also consider the easier task of learning $\rho$ to error $\eps$ in trace distance.
    There, the situation is similar.
    \Cref{thm:efficient-mixed} implies an algorithm with an optimal sample complexity of
    $
        n = O((rd + \log(1/\delta))/\epsilon^2)
    $.
    This bound was not known in the literature; the previous best bounds come from the fidelity algorithms discussed here, translated to trace distance.
    Note that, for this easier regime, the upper bound of $O(rd\log(1/\delta) / \eps^2)$ was proved earlier~\cite{OW15} and the lower bound of $\Omega(rd/\eps^2)$ was proved later~\cite{SSW25}.
    Even for trace distance, it does not seem like any prior works had a clear pathway towards achieving a fully sample-optimal algorithm.
}

An interesting feature of the algorithm in \Cref{thm:efficient-mixed} is that it only uses entanglement across the copies of $\rho$ in order to produce consistent purifications of $\rho$.
After doing this, as our algorithm shows,
it suffices to only use unentangled measurements on the purified copies.
This suggests that entangled measurements are helpful in mixed state tomography \emph{because} they allow us to produce consistent purifications across all the copies of $\rho$.
This is consistent with our understanding of tomography in these settings: for mixed states, it is known that sample-optimal mixed state tomography requires entangled measurements~\cite{CHL+23,CLL24a}, but for pure states, as \Cref{thm:efficient-pure} shows, sample-optimal tomography can be achieved with algorithms which perform independent measurements across the input copies.
We will explore this theme in further detail in \Cref{subsec:discussion} below.

\subsection{Simpler unbiased estimators for mixed state tomography}

The estimator described above is sample-optimal for the task of generic full-state tomography.
However, there are other, more fine-grained tomographic tasks which have gained importance in the literature,
and it turns out that this estimator performs sub-optimally at these tasks.
We will discuss exactly why this is the case in \Cref{subsec:discussion},
but at a high level, these tasks require having an estimator which has low variance about its mean, and the above estimator has prohibitively high variance.
To tackle these other applications, we will consider applying our reduction to another pure state tomography algorithm, one due to Hayashi.

\subsubsection{Hayashi's algorithm}
Next, we instantiate the reduction with the ``standard'' \emph{entangled} pure state tomography algorithm.
When performing tomography on $n$ copies of a pure state $\ket{\psi} \in \C^d$,
the input $\ket{\psi}^{\otimes n}$ is an element of the symmetric subspace $\lor^n \C^d$.
This means that a pure state tomography algorithm's measurement operators need only be specified on the symmetric subspace.
Motivated by this, Hayashi~\cite{Hay98} introduced the following natural pure state tomography algorithm $\hayashi$:
simply perform the POVM
\begin{equation}
    \{d[n] \cdot \ketbra{u}^{\otimes n} \cdot du\}
\end{equation}
and output the pure state $\ket*{\bv}$ that it returns. (Here, $d[n] = \dim(\lor^n \C^d)$.)
This is indeed a valid POVM on the symmetric subspace, as
\begin{equation*}
    \int_{\ket{u}} d[n] \cdot \ketbra{u}^{\otimes n} \cdot du
    = d[n] \cdot \E_{\ket{\bu} \sim \mathrm{Haar}} \ketbra{\bu}^{\otimes n}
    = \Pi_{\mathrm{sym}},
\end{equation*}
where $\Pi_{\mathrm{sym}}$ is the projector onto $\lor^n \C^d$,
as we discuss in the Preliminaries below
(cf.\ \Cref{eq:pi-sym}).
Hayashi showed that this algorithm is in fact the \emph{optimal} pure state tomography algorithm in a certain precise technical sense,
and so it is perhaps the most natural algorithm to apply our reduction to.

It is well-known that the output of $\hayashi$ satisfies $\abs{\braket*{\bv}{\psi}}^2 \geq 1- \epsilon$ with probability 99\% when $n = O(d/\epsilon)$.
We show that it also achieves optimal dependence on the failure probability $\delta$.

\begin{proposition}[Hayashi's algorithm with high probability] \label{prop:hayashi}
Given $n$ copies of a pure state $\ket{\psi} \in \C^d$,
suppose $\hayashi$ outputs the state $\ket*{\bv}$. Then $\abs{\braket*{\bv}{\psi}}^2 \geq 1 - \epsilon$ with probability at least $1 - \delta$ when
\begin{equation*}
    n = O\Big(\frac{d + \log(1/\delta)}{\epsilon}\Big).
\end{equation*}
\end{proposition}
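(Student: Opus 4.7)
The plan is to derive the exact distribution of the squared fidelity $\bF := \abs{\braket{\bv}{\psi}}^2$ output by $\hayashi$, identify it as a Beta distribution, and then apply a Chernoff-type tail bound. By unitary invariance we may assume $\ket{\psi} = \ket{0}$ without loss of generality. The POVM $\{d[n] \cdot \ketbra{u}^{\otimes n} \cdot du\}$ assigns outcome $\ket{u}$ the probability density $d[n] \cdot \abs{\braket{u}{\psi}}^{2n}$ against the uniform measure $du$ on the complex unit sphere; under $du$ alone, $\abs{\braket{\bu}{\psi}}^2$ is $\mathrm{Beta}(1,d-1)$-distributed (the squared modulus of a single complex coordinate of a Haar-random unit vector in $\C^d$), and so reweighting its density by $f^n$ and renormalizing yields $\bF \sim \mathrm{Beta}(n+1, d-1)$. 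Consequently the failure probability is exactly $\Pr[\bG > \epsilon]$ for $\bG := 1 - \bF \sim \mathrm{Beta}(d-1, n+1)$.

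Next I would invoke the classical Beta--Binomial identity: for positive integers $a, b$ and $x \in (0,1)$, $\Pr[\mathrm{Beta}(a,b) \leq x] = \Pr[\mathrm{Bin}(a+b-1, x) \geq a]$. Applied with $a = d-1$, $b = n+1$, and $x = \epsilon$, this gives
\[
\Pr[\bG > \epsilon] \;=\; \Pr[\mathrm{Bin}(n+d-1, \epsilon) \leq d-2].
\]
Taking $n \geq C(d + \log(1/\delta))/\epsilon$ for a sufficiently large absolute constant $C$, the mean $\mu := (n+d-1)\epsilon$ satisfies $\mu \geq 2(d-1)$ and $\mu = \Omega(d + \log(1/\delta))$, so the multiplicative Chernoff lower-tail bound then yields
\[
\Pr[\mathrm{Bin}(n+d-1, \epsilon) \leq d-2] \;\leq\; \Pr[\mathrm{Bin}(n+d-1, \epsilon) \leq \mu/2] \;\leq\; e^{-\mu/8} \;\leq\; \delta,
\]
as desired.

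The only step that is not fully mechanical is recognizing the Beta distribution of $\bF$ and reducing the tail bound to a binomial tail via the Beta--Binomial identity; the rest is a one-line Chernoff computation. An equivalent route that bypasses the identity is to bound the incomplete beta integral directly, using $(1-g)^n \leq e^{-ng}$ in the numerator and the elementary estimate $B(n+1, d-1) \geq (d-2)!/(n+d-1)^{d-1}$ in the denominator, which recovers the same sample complexity after a short calculation.
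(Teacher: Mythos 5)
Your proof is correct, and it shares the paper's central observation: the squared overlap $\abs{\braket{\bv}{\psi}}^2$ is exactly $\mathrm{Beta}(n+1,d-1)$-distributed, derived in the same way (reweighting the $\mathrm{Beta}(1,d-1)$ density of a Haar overlap by $x^n$). Where you diverge is in the concentration step. The paper invokes a Bernstein-type tail bound for Beta random variables due to Skorski, computes the variance proxy $v \leq d/n^2$ and scale $c \leq 2/n$, and combines this with the mean bound $\E[\abs{\braket{\bv}{\psi}}^2] \geq 1 - (d-1)/(n+d)$; it even has to patch a sign bug in the cited theorem via a reduction from the lower tail to the upper tail. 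You instead pass through the classical Beta--Binomial CDF identity $\Pr[\mathrm{Beta}(a,b) \leq x] = \Pr[\mathrm{Bin}(a+b-1,x) \geq a]$ (valid here since $a = d-1$ and $b = n+1$ are positive integers, with $d \geq 2$ handled trivially otherwise) and finish with a one-line multiplicative Chernoff bound. Both routes give the same sample complexity; yours is more elementary and self-contained, avoids the external concentration inequality and its erratum entirely, and does not even need the separate first-moment computation, while the paper's route generalizes more readily to non-integer Beta parameters (irrelevant here). One small check worth making explicit: the Chernoff step needs $d-2 \leq \mu/2$ where $\mu = (n+d-1)\epsilon$, which indeed follows from $\mu \geq 2(d-1)$ for $C \geq 2$, and $e^{-\mu/8} \leq \delta$ needs $C \geq 8$; both are absorbed into the constant in the $O(\cdot)$.
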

\noindent
To our knowledge, this bound has not been previously observed in the literature for Hayashi's algorithm.
It actually follows immediately from combining Hayashi's optimality result for his algorithm with the fact that the Guta--Kahn--Kueng--Tropp pure state tomography algorithm also achieves this bound~\cite[Theorem 5]{GKKT20}.
We give an alternative proof of this fact which analyzes the output of Hayashi's algorithm directly.
Combined with our reduction,
this gives a second mixed state algorithm $\mixed(\hayashi)$ which achieves the sample complexity bound in \Cref{thm:efficient-mixed}.
However, it is not computationally efficient,
as Hayashi's algorithm is not known to be computationally efficient.

\subsubsection{The Grier--Pashayan--Schaeffer algorithm}

Grier, Pashayan, and Schaeffer~\cite{GPS24} considered the following modification to Hayashi's algorithm.

{
\floatstyle{boxed} 
\restylefloat{figure}
\begin{figure}[H]
Given $n$ copies of $\sigma \in \C^{d \times d}$:
\begin{enumerate}
    \item Measure the copies with the POVM $\{d[n] \cdot \ketbra{u}^{\otimes n} \cdot du\}$. Let $\ket{\bv}$ be the outcome.
    \item Output the estimator
    \begin{equation*}\widehat{\sigma}_{\bv} \coloneqq \frac{d+n}{n} \cdot \ketbra{\bv} - \frac{1}{n} \cdot I_d.\end{equation*}
\end{enumerate}
\caption{The Grier--Pashayan--Schaeffer tomography algorithm $\gps$.}
\label{fig:entangled-tomography}
\end{figure}
}

\noindent
They observed that this modification results in an \emph{unbiased estimator} for pure state tomography,
meaning that if this algorithm is performed on $n$ copies of a pure state $\sigma \in \C^{d\times d}$,
then its output satisfies $\E [\widehat{\sigma}_{\bv}] = \sigma$.
The quality of an unbiased estimator is governed by how much it deviates from its mean,
which we can quantify using its variance $\E[(\widehat{\sigma}_{\bv} - \sigma)^{\otimes 2}] = \E [\widehat{\sigma}_{\bv}^{\otimes 2}] - \sigma^{\otimes 2}$.
This entails calculating its second moment $\E [\widehat{\sigma}_{\bv}^{\otimes 2}]$,
which we do as follows.

\begin{theorem}[Moments of Grier--Pashayan--Schaeffer]
\label{thm:gps-moments}
Let $\widehat{\sigma}_{\bv}$ be the output of $\gps$ when run on $n$ copies of a pure state $\sigma \in \C^{d \times d}$.
Then $\widehat{\sigma}_{\bv}$ is an unbiased estimator for $\sigma$, i.e.\ $\E \widehat{\sigma}_{\bv} = \sigma$.
In addition,
\begin{equation*}
    \E[\widehat{\sigma}_{\bv}\otimes \widehat{\sigma}_{\bv}] = \frac{n-1}{n} \cdot \sigma \otimes \sigma + \frac{1}{n} \cdot \big(\sigma \otimes I_d + I_d \otimes \sigma\big) \cdot \swap + \frac{1}{n^2} \cdot \swap-\mathrm{Lower}_{\sigma},
    \end{equation*}
    where $\mathrm{Lower}_{\sigma} \in \symsep(d)$.
\end{theorem}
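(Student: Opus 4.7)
The plan is to expand
\begin{equation*}
\widehat{\sigma}_{\bv}^{\otimes 2} = \frac{1}{n^2}\bigl((d+n)\ketbra{\bv} - I_d\bigr)^{\otimes 2}
\end{equation*}
and reduce the calculation to the first two moments $\E\ketbra{\bv}$ and $\E\ketbra{\bv}^{\otimes 2}$ under the POVM outcome distribution (which has density $d[n]\,\abs{\braket{v}{\psi}}^{2n}$). Using the Haar identity $\int \ketbra{u}^{\otimes(n+k)}\,du = \Pi_{\mathrm{sym}}^{(n+k)}/d[n+k]$, each of these moments equals $\tfrac{d[n]}{d[n+k]}$ times a partial trace of $\sigma^{\otimes n}$ against the symmetric projector on $n+k$ factors.

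For $k=1$, I would decompose $\Pi_{\mathrm{sym}}^{(n+1)} = \tfrac{1}{n+1}\sum_{i=1}^{n+1}\swap_{1,i}\,(I_d \otimes \Pi_{\mathrm{sym}}^{(n)})$; because $\sigma$ is pure, the $i=1$ summand contributes $I_d$ while each of the $n$ genuine swap terms contributes $\sigma$ at position $1$, yielding $\E\ketbra{\bv} = (I_d + n\sigma)/(d+n)$. Unbiasedness is then immediate: $(d+n)\E\ketbra{\bv} - I_d = n\sigma$, hence $\E\widehat{\sigma}_{\bv} = \sigma$. For $k=2$ I would run the analogous computation, expanding $\Pi_{\mathrm{sym}}^{(n+2)} = \tfrac{1}{(n+2)!}\sum_\pi P_\pi$ and classifying each $\pi\in S_{n+2}$ by its cycle action on $\{1,2\}$: if $1$ and $2$ lie in disjoint cycles with $k_1$ and $k_2$ auxiliary elements, the residual operator on positions $1,2$ after tracing is $\sigma^{k_1}\otimes\sigma^{k_2}$, which for pure $\sigma$ collapses to one of $I_d\otimes I_d,\ \sigma\otimes I_d,\ I_d\otimes\sigma,\ \sigma\otimes\sigma$; if $1$ and $2$ share a common cycle with $s$ auxiliaries on the ``$1\!\to\!2$'' arc and $t$ on the ``$2\!\to\!1$'' arc, the residual is $\swap\,(\sigma^s\otimes\sigma^t)$. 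Counting permutations of each type and summing yields
\begin{equation*}
\E\ketbra{\bv}^{\otimes 2} = \frac{I_d\otimes I_d + n(\sigma\otimes I_d + I_d\otimes\sigma) + n(n-1)\sigma\otimes\sigma + \swap + n(\sigma\otimes I_d + I_d\otimes\sigma)\swap}{(d+n)(d+n+1)}.
\end{equation*}

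Plugging both moments into the tensor expansion of $\widehat{\sigma}_{\bv}^{\otimes 2}$ and using $\tfrac{d+n}{d+n+1} = 1 - \tfrac{1}{d+n+1}$, the $\sigma\otimes\sigma$, $\swap$, and $(\sigma\otimes I_d + I_d\otimes\sigma)\swap$ terms assemble into exactly the stated main expression, while the leftover collapses to $-\tfrac{d+n}{n^2}\,\E\ketbra{\bv}^{\otimes 2}$. This leftover is by construction a nonnegative mixture of rank-one tensors $\ketbra{v}^{\otimes 2}$, hence belongs to $\symsep(d)$, and we may read off $\mathrm{Lower}_\sigma = \tfrac{d+n}{n^2}\,\E\ketbra{\bv}^{\otimes 2}$. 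The main technical obstacle is the cycle-structure bookkeeping for $k=2$, especially the common-cycle case where one must track both the auxiliary counts $s,t$ and the resulting $\swap$-factor; a cleaner alternative is to invoke $U(d)$-covariance to argue a priori that $\E\ketbra{\bv}^{\otimes 2}$ lies in the five-dimensional span of $\{I_d^{\otimes 2},\ \sigma\otimes I_d + I_d\otimes\sigma,\ \sigma^{\otimes 2},\ \swap,\ (\sigma\otimes I_d + I_d\otimes\sigma)\swap\}$ and then pin down the coefficients by computing five scalar Haar moments.
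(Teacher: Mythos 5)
Your proposal is correct and follows essentially the same route as the paper: both reduce the moments to Haar integrals $\int \ketbra{u}^{\otimes n+k}\,du = \Pi_{\mathrm{sym}}^{n+k}/d[n+k]$, decompose $\Pi_{\mathrm{sym}}^{n+k}$ via the projector recurrence into transpositions touching the appended registers, classify the traced permutation terms, and identify $\mathrm{Lower}_\sigma = \tfrac{d+n}{n^2}\,\E\ketbra{\bv}^{\otimes 2}$ as a nonnegative mixture of $\ketbra{v}^{\otimes 2}$, hence in $\symsep(d)$; your intermediate formulas for $\E\ketbra{\bv}$ and $\E\ketbra{\bv}^{\otimes 2}$ and the final assembly all check out. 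The only substantive difference is that the paper packages the computation as ``moment operators'' in the symmetric group algebra and carries it out for an arbitrary state in $\lor^n\C^d$ (needed later for quasi-purification), whereas you specialize to $\ketbra{\psi}^{\otimes n}$ and use purity to collapse powers of $\sigma$.
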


This theorem uses the following definition for the $\mathrm{Lower}_{\sigma}$ term.
\begin{definition}[Sum of Hermitian squares] \label{def:symsep}
Given an integer $d$, we define
\begin{equation*}
    \symsep(d) \coloneqq \mathrm{cone}(\{X \otimes X \mid \text{$X \in \C^{d \times d}$ is Hermitian}\}),
\end{equation*}
where $\mathrm{cone}(\cdot)$ is the conical hull of its input,
i.e.\ the set of all nonnegative linear combinations of matrices in its input set.
\end{definition}

\noindent
The $\mathrm{Lower}_{\sigma}$ term is so-named because it is lower order in the parameter $d$ and tends towards $0$ as $d \rightarrow \infty$.
        What is important to us is that for the applications we care about, its contribution always turns out to be negative and hence can be discarded, as we will discuss in more detail below.

This suggests a natural unbiased estimator for mixed state tomography:
simply apply our reduction to the Grier--Pashayan--Schaeffer algorithm.
The result is the following algorithm.

{
\floatstyle{boxed} 
\restylefloat{figure}
\begin{figure}[H]
Given $n$ copies of $\rho$:
\begin{enumerate}
    \item\label{item:purify} First apply $\purifychan^{d, r}$ to produce $n$ copies of a random purification $\ket{\brho} \in \C^d \otimes \C^r$.
    \item Apply the Grier--Pashayan--Schaeffer algorithm to learn an estimate $\widehat{\sigma}_{\bv}$ of $\ketbra{\brho}$.
    \item Set $\widehat{\rho}_{\bv} = \tr_{\reg{2}}(\widehat{\sigma}_{\bv})$ of $\rho$. Output $\widehat{\rho}_{\bv}$.
\end{enumerate}
\caption{The mixed state tomography algorithm $\mixed(\gps)$.}
\label{fig:gps-reduction-basic}
\end{figure}
}

\noindent
By construction, this produces an unbiased estimator, as for any purification $\ket{\brho}$ of $\rho$,
\begin{equation*}
    \E[\widehat{\rho}_{\bv}]
    = \E[\tr_{\reg{2}}(\widehat{\sigma}_{\bv})]
    = \tr_{\reg{2}}(\E[\widehat{\sigma}_{\bv}])
    = \tr_{\reg{2}}(\ketbra{\brho})
    = \rho.
\end{equation*}
Using \Cref{thm:gps-moments},
we will show the following expression for the second moment of this estimator.

\begin{theorem}[Moments of the Grier--Pashayan--Schaeffer mixed state tomography algorithm] \label{thm:gps-moments-loose}
    Let $\widehat{\rho}_{\bv}$ be the output of $\mixed(\gps)$ when run on $n$ copies of a rank-$r$ state $\rho \in \C^{d \times d}$.
    Then $\widehat{\rho}_{\bv}$ is an unbiased estimator for $\rho$ with second moment
    \begin{equation*}
    \E[\widehat{\rho}_{\bv}\otimes \widehat{\rho}_{\bv}] = \frac{n-1}{n} \cdot \rho^{\otimes 2} + \frac{1}{n} \cdot \big(\rho \otimes I_d + I_d \otimes \rho\big) \cdot \swap + \frac{r}{n^2} \cdot \swap-\mathrm{Lower}_{\rho},
    \end{equation*}
    where $\mathrm{Lower}_{\rho} \in \symsep(d)$.
\end{theorem}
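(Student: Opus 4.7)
The plan is to apply the pure-state moments theorem (\Cref{thm:gps-moments}) to each fixed purification $\ket{\brho}$, then partial-trace out the purification registers and average over the randomness in $\ket{\brho}$.

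Unbiasedness is already recorded in the paragraph preceding the statement. For the second moment, condition on the random purification $\ket{\brho} \in \C^d \otimes \C^r$ produced by $\purifychan^{d,r}$. Given $\ket{\brho}$, the subsequent call to $\gps$ acts on $n$ copies of the pure state $\ketbra{\brho}$, now viewed as living in the $(dr)$-dimensional Hilbert space $\C^{dr}$. By \Cref{thm:gps-moments} applied with $d$ replaced by $dr$,
\begin{equation*}
\E[\widehat{\sigma}_{\bv} \otimes \widehat{\sigma}_{\bv} \mid \ket{\brho}] = \frac{n-1}{n} \ketbra{\brho}^{\otimes 2} + \frac{1}{n}\bigl(\ketbra{\brho} \otimes I_{dr} + I_{dr} \otimes \ketbra{\brho}\bigr) \cdot \swap_{dr} + \frac{1}{n^2} \swap_{dr} - L_{\brho},
\end{equation*}
where $L_{\brho} \in \symsep(dr)$ and $\swap_{dr}$ denotes the swap on $\C^{dr} \otimes \C^{dr}$.

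Since $\widehat{\rho}_{\bv} = \tr_{\reg{2}}(\widehat{\sigma}_{\bv})$, we have $\widehat{\rho}_{\bv}^{\otimes 2} = \tr_{\reg{2,4}}(\widehat{\sigma}_{\bv}^{\otimes 2})$, where registers $\reg{2}$ and $\reg{4}$ are the purification registers of the two copies. Taking $\tr_{\reg{2,4}}$ and then expectation over $\ket{\brho}$, I treat each of the four summands. The crucial structural fact is that under the identification $\C^{dr} = \C^d \otimes \C^r$ the full swap factors as $\swap_{dr} = \swap \otimes \swap_r$, where $\swap$ acts on $\C^d \otimes \C^d$ and $\swap_r$ on $\C^r \otimes \C^r$. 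Using this, three short identities suffice: $\tr_{\reg{2,4}}(\ketbra{\brho}^{\otimes 2}) = \rho^{\otimes 2}$ for every $\brho$; a direct matrix-element (or diagrammatic) calculation gives $\tr_{\reg{2,4}}((\ketbra{\brho} \otimes I_{dr}) \cdot \swap_{dr}) = (\rho \otimes I_d) \cdot \swap$, with the symmetric identity for the other cross term; and $\tr_{\reg{2,4}}(\swap_{dr}) = \tr(\swap_r) \cdot \swap = r \cdot \swap$. Applied term by term, these produce the first three summands of the claimed formula.

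The only remaining piece is to check that the error contribution is in $\symsep(d)$. The key observation is that if $X$ is Hermitian on $\C^{dr}$, then $\tr_{\reg{2}}(X)$ is Hermitian on $\C^d$, and $\tr_{\reg{2,4}}(X \otimes X) = \tr_{\reg{2}}(X) \otimes \tr_{\reg{2}}(X) \in \symsep(d)$. Since $L_{\brho}$ is by definition a nonnegative combination of such tensor squares, $\tr_{\reg{2,4}}(L_{\brho}) \in \symsep(d)$ for every $\brho$, and taking the expectation over $\ket{\brho}$ preserves membership in the cone. We therefore set $\mathrm{Lower}_{\rho} \coloneqq \E[\tr_{\reg{2,4}}(L_{\brho})] \in \symsep(d)$ and assemble the four terms. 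The computational heart of the argument is the partial trace of the cross term containing $\swap_{dr}$; this is the only step requiring explicit index manipulation, and I expect it to be the main obstacle, though an elementary one. Everything representation-theoretic has already been absorbed into \Cref{thm:gps-moments} and \Cref{thm:acorn}, so no further Haar averaging or Schur--Weyl decomposition appears here.
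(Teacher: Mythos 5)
Your proposal is correct and follows essentially the same route as the paper's proof: condition on the purification $\ket{\brho}$, invoke the pure-state moment formula on $\C^{dr}$, trace out the purification registers term by term using $\swap_{dr} = \swap \otimes \swap_r$, and observe that the cone $\symsep$ is closed under partial trace (the paper isolates this as \Cref{prop:symset-partial-trace} and the cross-term identity as \Cref{prop:weird-prop}). The cross-term partial trace you flag as the main obstacle is indeed the only computation, and it goes through exactly as you describe.
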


Historically, designing good unbiased estimators for mixed state tomography has been a challenging task.
Until recently,
our only known unbiased estimators,
such as the estimator from \Cref{fig:unentangled-tomography}, 
were not sample optimal,
and our only known sample-optimal estimators
were not unbiased~\cite{OW16,HHJ+16}.
This changed with the work of Pelecanos, Spilecki, and Wright~\cite{PSW25}, who introduced the \emph{debiased Keyl's algorithm},
the first estimator for mixed state tomography which is both sample-optimal and unbiased.
Using it, they proved a number of new and optimal sample complexity upper bounds on a number of interesting tomographic tasks,
which we will describe in \Cref{sec:our-estimator} below.

Let us compare the performance of their estimator to the estimator produced by $\mixed(\gps)$.
Writing $\widehat{\brho}$ for the output of their debiased Keyl's algorithm on $n$ copies of a rank-$r$ mixed state $\rho \in \C^{d \times d}$, they showed that
\begin{equation}\label{eq:debiased-keyl-2nd-moment}
    \E[\widehat{\brho} \otimes \widehat{\brho}]
    = \frac{n-1}{n} \cdot \rho^{\otimes 2} + \frac{1}{n} \cdot \big(\rho \otimes I_d + I_d \otimes \rho\big) \cdot \swap + \frac{\E[\ell(\blambda)]}{n^2} \cdot \swap-\mathrm{Lower}_{\rho},
    \end{equation}
where $\mathrm{Lower}_{\rho} \in \symsep(d)$~\cite[Theorem 1.4]{PSW25}.
This matches the bound in \Cref{thm:gps-moments-loose} on all terms except the factor of $r$ on the third term is replaced with the term $\E[\ell(\blambda)]$.
We will explain what exactly this notation means below;
for now, it suffices to know that it can be upper-bounded by $\E[\ell(\blambda)] \leq \min\{r, 2\sqrt{n}\}$,
which is smaller than $r$ whenever $n < r^2/4$.
This means that the debiased Keyl's algorithm will actually outperform $\mixed(\gps)$ when the number of copies $n = o(r^2)$.
As we describe below, some of our applications operate in the regime of $n = o(r^2)$ copies, in which case this difference is significant,
and some of our applications operate in the regime of $n = \omega(r^2)$ copies, in which case this difference is insignificant and the two algorithms behave similarly.

Our main goal, however, will be to improve our algorithm so that its second moment matches that of the debiased Keyl's algorithm.
Doing so will require us to design a modified version of the purification channel, which we describe in the next section.

\begin{remark}[On the lower term]\label{rem:lower}
    Let us point out one subtle difference between the way we have stated the second moment formula for the debiased Keyl's algorithm and the way it appears in \cite[Theorem 1.4]{PSW25}.
    In \cite{PSW25}, the $\mathrm{Lower}_{\rho}$ term is characterized as being a positive linear combination of matrices of the form $(P \otimes P) \cdot \swap$,
    where $P$ is Hermitian and positive semidefinite.
    As it turns out, any matrix of this form can be shown to be an element of $\symsep(d)$, and so their $\mathrm{Lower}_{\rho} \in \symsep(d)$ as well.
    This means that \Cref{eq:debiased-keyl-2nd-moment} is actually a slightly weaker characterization of the second moment than the one given in \cite{PSW25},
    but as we show below, this is still sufficient to derive all of their applications.
\end{remark}

\subsubsection{Quasi-purification}

\paragraph{The single copy case.}
To understand why $\mixed(\gps)$ performs sub-optimally in the regime of small~$n$,
it will help to first gain some understanding for how the purification channel operates in the simplest case of $n = 1$ copy.
Let $\rho \in \C^{d \times d}$ be the rank-$r$ state to be purified,
and write $\rho = \sum_{i=1}^r \alpha_i \cdot \ketbra{v_i}$ for its eigendecomposition.
Consider the purification of $\rho$ given by
\begin{equation*}
    \ket{\rho_0}_{\reg{A}\reg{B}} \coloneqq \sum_{i=1}^r \sqrt{\alpha_i} \cdot \ket{v_i}_{\reg{A}} \otimes \ket{i}_{\reg{B}},
\end{equation*}
where  $\reg{B}$ is an $r$-dimensional register.
One can generate a random purification of $\rho$ by sampling a Haar random unitary $\bU \in U(r)$
and outputting $\ket{\brho} \coloneqq \bU_{\reg{B}} \cdot \ket{\rho_0}_{\reg{A} \reg{B}}$.
The mixture over these random purifications is given by
\begin{equation}\label{eq:append-a-reg}
    \E \ketbra{\brho} = \E[\bU_{\reg{B}} \cdot \ketbra{\rho_0}_{\reg{A} \reg{B}} \cdot \bU_{\reg{B}}^{\dagger}]
    = \tr_{\reg{B}}(\ketbra{\rho_0}) \otimes (I_r/r)_{\reg{B}}
    = \rho_{\reg{A}} \otimes (I_r/r)_{\reg{B}}.
\end{equation}
Implementing the random purification channel $\purifychan^{d,r}(\rho)$ is therefore easy: simply take $\rho$ and append to it a maximally mixed state in a second register.

After performing this purification,
$\mixed(\gps)$ will perform pure state tomography on both registers of the purified state in \Cref{eq:append-a-reg}.
By \Cref{thm:gps-moments-loose}, it will output an estimator $\widehat{\rho}_{\bv}$ satisfying
\begin{equation}\label{eq:bad-second-moment}
\E[\widehat{\rho}_{\bv}\otimes \widehat{\rho}_{\bv}] = \big(\rho \otimes I_d + I_d \otimes \rho\big) \cdot \swap + r \cdot \swap-\mathrm{Lower}_{\rho}.
\end{equation}
However, the purification register in \Cref{eq:append-a-reg} contains no information whatsoever,
and so it seems wasteful to include it when performing the pure state tomography step.
Instead, what we can do is omit the purification register altogether and simply perform pure state tomography on $\rho$ itself.
This at least ``type checks'' because although $\rho$ is itself not necessarily a pure state,
it is still a mixture over the pure states corresponding to its eigenvectors.
So perform the $\gps$ algorithm on $\rho$ directly and let $\widehat{\sigma}_{\bv}$ be its output.
Then by \Cref{thm:gps-moments},
\begin{equation*}
\E[\widehat{\sigma}_{\bv}\otimes \widehat{\sigma}_{\bv}] = \big(\rho \otimes I_d + I_d \otimes \rho\big) \cdot \swap + \swap-\mathrm{Lower}_{\rho}.
\end{equation*}
This improves on \Cref{eq:bad-second-moment} by a factor of $r$ on the $\swap$ term.

\paragraph{The general case.}
Generalizing this to larger values of $n$
is conceptually more interesting than simply discarding the purification register
and requires the use of some representation theory.
In particular, let us recall \emph{Schur--Weyl duality},
which states that there is a unitary change of basis $\schur$ known as the \emph{Schur transform} under which the $n$ copy state $\rho^{\otimes n}$ becomes block diagonal,
with a block for every partition $\lambda \vdash n$ of height $\ell(\lambda) \leq d$.
We illustrate this in \Cref{fig:schur-weyl}.
It is common for entangled tomography algorithms to begin with a step known as \emph{weak Schur sampling}, in which one measures $\rho^{\otimes n}$ with the projective measurement $\{\Pi_{\lambda}\}$,
where $\Pi_{\lambda}$ is the projector onto the subspace corresponding to $\lambda$ in the Schur basis.
This produces a random partition $\blambda = (\blambda_1, \ldots, \blambda_d)$ as a measurement outcome and collapses $\rho^{\otimes n}$ to the state $\rho|_{\blambda}$.
Typically, one then performs a further measurement within the $\blambda$-subspace on the state $\rho|_{\blambda}$ in order to learn $\rho$.

{
\floatstyle{plain}
\restylefloat{figure}
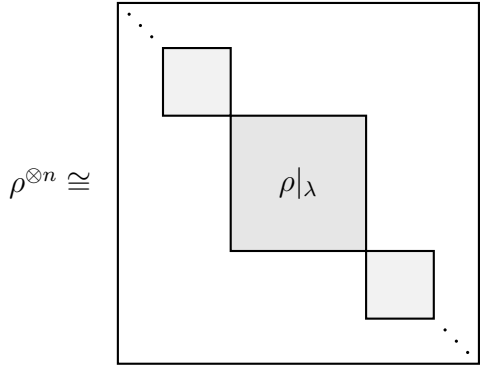
\begin{figure}
\centering
\usetikzlibrary{arrows.meta}
\begin{tikzpicture}[scale=1.2, font=\large]

\node at (-0.75, 2) {$\rho^{\otimes n} \cong$};

\draw[thick] (0, 0) rectangle (4, 4);

\draw[thick, fill=gray!10] (0.5, 3.5) rectangle (1.25, 2.75);

\draw[thick, fill=gray!20] (1.25, 1.25) rectangle (2.75, 2.75);
\node at (2, 2) {$\rho|_{\lambda}$};

\draw[thick, fill=gray!10] (2.75, 0.5) rectangle (3.5, 1.25);

\fill (0.125, 3.875) circle (0.5pt);
\fill (0.25, 3.75) circle (0.5pt);
\fill (0.375, 3.625) circle (0.5pt);
\fill (3.625, 0.375) circle (0.5pt);
\fill (3.75, 0.25) circle (0.5pt);
\fill (3.875, 0.125) circle (0.5pt);

\end{tikzpicture}
\caption{Schur--Weyl duality applied to $\rho^{\otimes n}$. Here, $\rho|_{\lambda}$ is the normalized restriction of $\rho$ to the $\lambda$-block.}
\label{fig:schur-weyl}
\end{figure}
}

The length $\ell(\blambda)$ of the measurement outcome $\blambda$, defined to be the number of nonzero coordinates $\blambda_i$ of the partition, can be viewed as a loose estimate for the rank $r$ of $\rho$.
It is in fact an \emph{under}estimate,
as it always satisfies $\ell(\blambda) \leq r$,
and for small values of $n$ it can underestimate $r$ by a significant amount.
For example, when $n = 1$, the length $\ell(\blambda) = 1$ always.
More generally, for larger $n$, its expectation satisfies $\E[\ell(\blambda)] \leq 2 \sqrt{n}$, which is significantly smaller than $r$ so long as $n = o(r^2)$.

What is so nice about this is that for the sake of purification, it turns out that we can treat $\rho|_{\ell(\blambda)}$ as if it came from a rank-$\ell(\blambda)$ state rather than a rank-$r$ state.
In particular, we will see that the rank-$k$ purification channel applied to this state, which outputs the state
\begin{equation*}\purifychan^{d, k}(\rho|_{\blambda}),
\end{equation*}
is at the very least well-defined so long as $k \geq \ell(\blambda)$.
In addition, its output, while no longer necessarily a mixture over states of the form $\ket*{\brho}^{\otimes n}$, is still guaranteed to be an element of the symmetric subspace $\lor^n (\C^d \otimes \C^{\ell(\blambda)})$, and so it at least ``type checks'' to run pure state tomography on it.

Our basic purification algorithm  can be viewed as the algorithm which always performs rank $k = r$ purification on this state, no matter what the length $\ell(\blambda)$ of the partition is.
We will now consider 
a more fine-grained purification algorithm which performs rank $k = \ell(\blambda)$ purification, the smallest value of $k$ possible.
We refer to this operation as \emph{quasi-purification},
as although it is applying the purification channel,
it can no longer be viewed as outputting purified copies of $\rho$.
In addition, it is not even a channel anymore,
as its output space $\lor^n (\C^d \otimes \C^{\ell(\blambda)})$ 
is no longer fixed, but depends on the measured $\blambda$.
However, we can still integrate into our generic reduction,
which yields the following ``upgraded'' reduction from mixed to pure state tomography.

{
\floatstyle{boxed} 
\restylefloat{figure}
\begin{figure}[H]
Given $n$ copies of $\rho$:
\begin{enumerate}
    \item Run weak Schur sampling to produce a Young diagram $\blambda \vdash n$; the $n$ states collapse to $\rho|_{\blambda}$. 
    \item Apply $\purifychan^{d, \ell(\blambda)}$ to the resulting state to produce $\purifychan^{d, \ell(\blambda)}(\rho|_{\blambda}) \in (\C^d \otimes \C^{\ell(\blambda)})^{\otimes n}$.
    \item Run an off-the-shelf pure state tomography algorithm $\calA$ on $\purifychan^{d, \ell(\blambda)}(\rho|_{\blambda})$; call the output $\widehat{\bsigma}^{\blambda}$.
    \item Convert $\widehat{\bsigma}^{\blambda}$ to an estimate $\widehat{\brho}^{\blambda} = \tr_2(\widehat{\bsigma}^{\blambda})$ of $\rho$. Output $\widehat{\brho}^{\blambda}$.
\end{enumerate}
\caption{A tighter version of our main reduction. We refer to the resulting mixed state tomography algorithm as $\mixed^+(\calA)$.}
\label{fig:reduction-llambda}
\end{figure}
}

\noindent
Note that this generalizes our $n = 1$ example from above: when $n = 1$, we have that $\ell(\blambda) = 1$ as well.
In this case, we want to apply the purification channel $\purifychan^{d, 1}(\cdot)$, but it is easy to see that this channel should not modify its input,
as a rank-1 state is already pure.

\subsubsection{Our new unbiased estimator}
\label{sec:our-estimator}

We now combine quasi-purification with the Grier--Pashayan--Schaeffer algorithm.
The resulting algorithm is still unbiased and has the following second moment guarantee.

\begin{theorem}[Moments of our unbiased estimator] \label{thm:gps-moments-tight}
    Let $\widehat{\rho}_{\bv}^{\blambda}$ be the output of $\mixed^+(\gps)$ when run on $n$ copies of a rank-$r$ state $\rho \in \C^{d \times d}$.
    Then $\widehat{\rho}_{\bv}^{\blambda}$ is an unbiased estimator for $\rho$ with second moment
    \begin{equation*}
    \E[\widehat{\rho}_{\bv}^{\blambda}\otimes \widehat{\rho}_{\bv}^{\blambda}] = \frac{n-1}{n} \cdot \rho^{\otimes 2} + \frac{1}{n} \cdot \big(\rho \otimes I_d + I_d \otimes \rho\big) \cdot \swap + \frac{\E[\ell(\blambda)]}{n^2} \cdot \swap-\mathrm{Lower}_{\rho},
    \end{equation*}
    where $\mathrm{Lower}_{\rho} \in \symsep(d)$.
\end{theorem}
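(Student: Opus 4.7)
The plan is to condition on the outcome $\blambda$ of weak Schur sampling, after which the state collapses to $\rho|_{\blambda}$ and is quasi-purified to $\tau_{\blambda} := \purifychan^{d, \ell(\blambda)}(\rho|_{\blambda})$, a symmetric state in $\lor^n(\C^d \otimes \C^{\ell(\blambda)})$. Writing $\widehat{\sigma}^{\blambda}_{\bv}$ for the $\gps$ output on input $\tau_{\blambda}$, so that $\widehat{\rho}^{\blambda}_{\bv} = \tr_{\reg{2}}(\widehat{\sigma}^{\blambda}_{\bv})$, we have
\[
    \E[\widehat{\rho}^{\blambda}_{\bv} \otimes \widehat{\rho}^{\blambda}_{\bv}] = \E_{\blambda}\bigl[\tr_{\reg{2,4}}\bigl(\E[\widehat{\sigma}^{\blambda}_{\bv} \otimes \widehat{\sigma}^{\blambda}_{\bv} \mid \blambda]\bigr)\bigr],
\]
so the task reduces to (i) computing the conditional second moment of $\gps$ on the generically \emph{mixed} symmetric state $\tau_{\blambda}$, and (ii) partially tracing the purification registers and averaging over $\blambda$.

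The key technical ingredient is an extension of \Cref{thm:gps-moments} from pure-state inputs to arbitrary symmetric states. Namely, for any symmetric $\tau \in \lor^n \C^D$ with one- and two-copy marginals $\tau_1, \tau_2$, the $\gps$ estimator satisfies $\E[\widehat{\tau}_{\bv}] = \tau_1$ and
\[
    \E[\widehat{\tau}_{\bv} \otimes \widehat{\tau}_{\bv}] = \frac{n-1}{n}\,\tau_2 + \frac{1}{n}\bigl(\tau_1 \otimes I_D + I_D \otimes \tau_1\bigr)\,\swap + \frac{1}{n^2}\,\swap - L_{\tau},
\]
where $L_\tau \in \symsep(D)$. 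This reduces to \Cref{thm:gps-moments} when $\tau = \ketbra{\psi}^{\otimes n}$, in which case $\tau_1 = \sigma = \ketbra{\psi}$ and $\tau_2 = \sigma \otimes \sigma$. The proof mimics the pure-state analysis: the density of $\ket{\bv}$ under $\gps$ on symmetric $\tau$ is $d[n]\, \bra{v}^{\otimes n} \tau \ket{v}^{\otimes n}$, so each moment reduces via the identity $\int \ketbra{v}^{\otimes (n+k)}\, dv = \Pi_{\mathrm{sym}, n+k} / d[n+k]$ to a partial trace of $(\tau \otimes I)\,\Pi_{\mathrm{sym}, n+k}$. Expanding the symmetric projector as an average over permutations in $S_{n+k}$ isolates the leading contributions involving $\tau_1, \tau_2, \swap$, and collects the remaining cycle structures into the $\symsep(D)$ remainder.

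Applied to $\tau = \tau_{\blambda}$ with $D = d\ell(\blambda)$, we partially trace over the two purification registers. By the defining property of the purification channel, tracing out the purification registers of $\tau_{\blambda}$ returns $\rho|_{\blambda}$; hence $\tr_{\reg{2}}(\tau_{\blambda,1}) = (\rho|_{\blambda})_1$ and $\tr_{\reg{2,4}}(\tau_{\blambda,2}) = (\rho|_{\blambda})_2$, the one- and two-copy marginals of $\rho|_{\blambda}$ in $\C^d$. Elementary partial-trace identities give $\tr_{\reg{2,4}}(\swap_D) = \ell(\blambda)\,\swap_d$ (using $\swap_D = \swap_{1,3}\swap_{2,4}$ and $\tr_4 \swap_{2,4} = I_{\ell(\blambda)}$) and $\tr_{\reg{2,4}}\bigl((\tau_{\blambda,1} \otimes I_D)\,\swap_D\bigr) = \bigl((\rho|_{\blambda})_1 \otimes I_d\bigr)\,\swap_d$, while $\tr_{\reg{2,4}}(L_{\tau_{\blambda}}) \in \symsep(d)$ since the partial trace of $X \otimes X$ for Hermitian $X$ equals $Y \otimes Y$ with $Y := \tr_{\reg{2}}(X)$ Hermitian. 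Finally, averaging over $\blambda$ uses the fact that the Schur projectors $\Pi_{\blambda}$ commute with $\rho^{\otimes n}$, so $\sum_{\blambda} p_{\blambda}\, \rho|_{\blambda} = \rho^{\otimes n}$ and hence $\E_{\blambda}[(\rho|_{\blambda})_k] = \rho^{\otimes k}$. This produces the $\frac{n-1}{n}\,\rho^{\otimes 2}$ and $\frac{1}{n}(\rho \otimes I_d + I_d \otimes \rho)\,\swap$ terms, the $\swap$ coefficient picks up the factor $\E[\ell(\blambda)]/n^2$, and the Lower remainder stays in $\symsep(d)$.

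The main obstacle is proving the symmetric-state extension of \Cref{thm:gps-moments}. The integrals involved are the same as in the pure case, but the bookkeeping is subtler because a general symmetric $\tau$ has distinct one- and two-copy marginals $\tau_1$ and $\tau_2$ (in the pure case both reduce to powers of one pure state). Separating the genuine $\tau_2$ contribution and the $\swap$ term from the $\symsep(D)$ remainder in the permutation expansion of $\Pi_{\mathrm{sym}, n+2}$ is where the real care is needed; everything downstream is then routine partial-trace and expectation calculations.
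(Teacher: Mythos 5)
Your proposal is correct and follows essentially the same route as the paper: the paper likewise proves the GPS moment formulas for arbitrary mixed states on the symmetric subspace (its Lemmas on the first and second moments of $\gps$ are stated for general $\psi_{\mathrm{sym}} \in \lor^n \C^d$, via the moment-operator/symmetric-projector expansion you describe), then conditions on $\blambda$, traces out the purification registers to pick up the $\ell(\blambda)$ factor from $\tr(\swap_{\reg{B}})$ and the closure of $\symsep$ under partial trace, and finally averages over $\blambda$ using the non-destructiveness of weak Schur sampling so that $\E_{\blambda}[(\rho|_{\blambda})_k] = \rho^{\otimes k}$. The only cosmetic difference is that the paper organizes the permutation expansion via Jucys--Murphy elements and obtains the $\symsep$ remainder from the $(d+n)^2$ versus $(d+n)^{\uparrow 2}$ correction rather than from "remaining cycle structures," but this does not change the substance of the argument.
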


This matches the second moment formula that Pelecanos, Spilecki, and Wright proved for the debiased Keyl's algorithm, which we saw previously in \Cref{eq:debiased-keyl-2nd-moment}.
They showed how to use this second moment formula to derive a number of applications of the debiased Keyl's algorithm.
We will briefly survey these applications below;
for more background on these applications, see \cite[Section 1]{PSW25},
and for proofs that they can be derived from the debiased Keyl's algorithm, see \cite[Part II]{PSW25}.
Because our estimator $\mixed^+(\gps)$ has the same second moment formula as the debiased Keyl's algorithm, these applications hold for it as well with essentially identical proofs.
The one minor modification needed to adapt these proofs to our algorithm comes from the fact that the $\mathrm{Lower}_{\rho}$ term takes a slightly different form in our algorithm than it does in the debiased Keyl's algorithm, as we saw in \Cref{rem:lower}.
To address this, we will explain below what properties of the $\mathrm{Lower}_{\rho}$ term each of these applications need, and we will show the $\mathrm{Lower}_{\rho}$ term from our algorithm does indeed satisfy these properties.

\paragraph{Application 1: tomography with limited entanglement.}
In the \emph{$k$-entangled tomography} problem, the goal is to estimate an unknown mixed state $\rho$ while performing entangled measurements on at most $k$ copies of $\rho$ at a time. 
The natural algorithm for doing so is the following.
\begin{enumerate}
    \item Divide the $n$ copies of $\rho$ into $n' \coloneqq n/k$ batches of size $k$.
    \item For each $1 \leq i \leq n'$, run a mixed state tomography algorithm on the $i$-th batch of copies and let $\widehat{\brho}_i$ be its output.
    \item Output the estimator $\widehat{\brho} = \frac{1}{n'} \cdot (\widehat{\brho}_1 + \cdots + \widehat{\brho}_{n'})$.
\end{enumerate}
When the $\widehat{\brho}_i$'s are produced by an unbiased estimator,
then averaging them together to produce $\widehat{\brho}$ results in an estimator which remains unbiased, but has significantly decreased variance.
Pelecanos, Spilecki, and Wright showed that when the debiased Keyl's algorithm is used, the $\widehat{\brho}$ this algorithm produces is $\epsilon$-close to $\rho$ in trace distance with probability 99\% when
    \begin{equation}\label{eq:limited-entanglement-bound}
            n =O\left(\max \Big(\frac{d^3}{\sqrt{k}\epsilon^2}, \frac{d^2}{\epsilon^2} \Big) \right)
    \end{equation}
copies of $\rho$ are used~\cite[Theorem 1.8]{PSW25}.
This improves on prior work of Chen, Li, and Liu~\cite{CLL24a} and matches their lower bound for this task of $n = \Omega(d^3/(\sqrt{k} \epsilon^2))$ copies, which they showed for the case when $k \leq 1/\epsilon^c$, for $c$ a small constant. 

The proof of \cite{PSW25} uses the second moment formula for the debiased Keyl's algorithm,
and as a result their sample complexity bound also applies if we use our estimator $\mixed^+(\gps)$ instead.
The one property of $\mathrm{Lower}_{\rho}$ that this proof needs is that $\tr(\swap \cdot \mathrm{Lower}_{\rho}) \geq 0$ (cf.\ the proof of \cite[Lemma 4.2]{PSW25}).
This holds in our case too, as our $\mathrm{Lower}_{\rho}$ is a nonnegative linear combination of terms of the form $X \otimes X$, where $X$ is a Hermitian matrix,
and
\begin{equation*}
    \tr(\swap \cdot X \otimes X)
    = \tr(X^2)
    \geq 0,
\end{equation*}
which holds because $X$ is Hermitian and therefore $X^2$ has all nonnegative eigenvalues.
We note that the interesting regime of the sample complexity bound in \Cref{eq:limited-entanglement-bound} is when $k = o(d^2)$;
in this case, we truly do need the quasi-purification-based algorithm $\mixed^+(\gps)$ rather than $\mixed(\gps)$ to achieve optimal sample complexity.

\paragraph{Application 2: shadow tomography.}
In the shadow tomography problem, one is given $m$ bounded observables $O_1, \ldots, O_m \in \C^{d \times d}$ which satisfy $\Vert O_i \Vert_{\infty} \leq 1$, for all $1 \leq i \leq n$, and asked to estimate the observable values $\tr(O_1 \cdot \rho), \ldots, \tr(O_m \cdot \rho)$ up to $\epsilon$ accuracy each.
The natural strategy for doing so is the following ``plug-in'' approach.
\begin{enumerate}
    \item Run a mixed state tomography algorithm on $n'$ copies of $\rho$.
    Let $\widehat{\brho}$ be the estimator it produces.
    \item Output $\widehat{\bo}_1 \coloneqq \tr(O_1 \cdot \widehat{\brho})$, \ldots, $\widehat{\bo}_m \coloneqq \tr(O_m \cdot \widehat{\brho})$. 
\end{enumerate}
When $\widehat{\brho}$ is an unbiased estimator for $\rho$,
the $\widehat{\bo}_i$'s are unbiased estimators for the true observable values $\tr(O_i \cdot \rho)$.
If $\widehat{\brho}$ has small variance about its mean,
then these $\widehat{\bo}_i$'s will also have small variance about their means; in particular, one wants to take $n'$ large enough so that each $\widehat{\bo}_i$ is within $\epsilon$ of its mean with 99\% probability.
To ensure that all  $\widehat{\bo}_i$'s are within $\epsilon$ if their mean at once, one can then perform the following algorithm.

\begin{enumerate}
    \item Repeat the ``plug-in'' approach $k$ times, producing the estimators $\widehat{\bo}_1^1, \ldots, \widehat{\bo}_m^1$ through $\widehat{\bo}_1^k, \ldots, \widehat{\bo}_m^k$.
    \item For each $1 \leq i \leq m$, output the estimator $\widehat{\bo}_i = \mathrm{median}\{\widehat{\bo}_i^1, \ldots, \widehat{\bo}_i^k\}$.
\end{enumerate}
In total, the whole process takes $n = k \cdot n'$ copies of $\rho$.
Pelecanos, Spilecki, and Wright showed that when the debiased Keyl's algorithm is used for the plug-in estimator, then this algorithm succeeds with probability 99\% when using
    \begin{equation}\label{eq:shadow-tomograph-bound}
            n =O\Big(\log(m) \cdot \Big(\min\Big\{\frac{\sqrt{rF}}{\epsilon}, \frac{F^{2/3}}{\epsilon^{4/3}}\Big\} + \frac{1}{\epsilon^2}\Big)\Big)
    \end{equation}
copies of a rank $r$ state $\rho$~\cite[Theorem 1.9]{PSW25}.
Here, each observable $O_i$ is assumed to satisfy the bound $\tr(O_i^2) \leq F$.
Since the measurements this algorithm performs are independent of the observables $O_i$,
it also solves the related ``classical shadows'' problem,
in which one is provided the observables only after measuring, with this sample complexity,
and in doing so it improves on the prior works of~\cite{HKP20,GLM24}.
In addition, note that because the observables $O_i$ satisfy $\Vert O_i \Vert_{\infty} \leq 1$,
we have $F \leq d$.
Thus, in the ``high accuracy regime'' of $\epsilon = O(1/d)$, this shows that $n = O(\log(m)/\epsilon^2)$ copies suffice,
improving on a bound of~\cite{CLL24b}.

 The proof of \cite{PSW25} uses the second moment formula for the debiased Keyl's algorithm,
and as a result their sample complexity bound also applies if we use our estimator $\mixed^+(\gps)$ instead.
The one property of $\mathrm{Lower}_{\rho}$ that this proof needs is that $\tr(O \otimes O \cdot \mathrm{Lower}_{\rho}) \geq 0$ for any observable $O$ (cf.\ the proof of \cite[Lemma 7.3]{PSW25}). 
This holds in our case too, as our $\mathrm{Lower}_{\rho}$ is a nonnegative linear combination of terms of the form $X \otimes X$, where $X$ is a Hermitian matrix,
and
\begin{equation*}
    \tr(O \otimes O \cdot X \otimes X)
    = \tr(O X)^2
    \geq 0,
\end{equation*}
which holds because $X$ is Hermitian and therefore $\tr(O X)$ is a real number whose square is therefore nonnegative.
Finally, we note that to achieve the full sample complexity bound of \Cref{eq:shadow-tomograph-bound}, we do require using the quasi-purified algorithm $\mixed^+(\gps)$.
However, in the high accuracy regime when $\epsilon = O(1/d)$,
if we assume bounds of $r, F \leq d$,
then our algorithm requires using $n = \Omega(d^2)$ copies of $\rho$,
in which case it suffices to use the simpler algorithm $\mixed(\gps)$.

\paragraph{Application 3: quantum metrology.}

In multiparameter quantum metrology, one is given copies of a quantum state $\rho_{\theta}$ parameterized by a vector $\theta \in \R^m$, and the goal is to output an estimator $\widehat{\btheta}$ of $\theta$.
This estimator is \emph{locally unbiased} at a point $\theta^*$ if it satisfies
\begin{equation*}
    (i)~\E[\widehat{\btheta} \mid \rho_{\theta^*}] = 0
    \quad\text{and}\quad
    (ii)~\frac{\partial}{\partial \theta_i} \E[\widehat{\btheta} \mid \rho_{\theta}]\Big|_{\theta = \theta^*} = 0.
\end{equation*}
Given a locally unbiased estimator, we can evaluate its performance using the \emph{mean squared error matrix (MSEM)} $V \in \R^{m \times m}$ defined as
\begin{equation*}
    V_{ij} \coloneqq \E[(\widehat{\btheta}_i - \theta_i^*)(\widehat{\btheta}_j - \theta_j^*) \mid \rho_{\theta^*}], \quad \text{for all $i, j \in [m]$}.
\end{equation*}
The \emph{quantum Cramér–Rao bound (QCRB)}~\cite{Hel67b} states that the performance of the MSEM can always be lower-bounded via $V \succeq \calF^{-1}$, where $\calF$ is a particular matrix known as the \emph{Quantum Fisher Information} (QFI) matrix.
A definition of this matrix can be found, for example, in \cite[Section 1.6]{PSW25}.

Recently, Zhou and Chen gave a generic method for converting unbiased estimators for mixed state tomography into locally unbiased estimators for multiparameter quantum metrology~\cite{ZC25}.
Plugging the debiased Keyl's algorithm into this transformation,
Pelecanos, Spilecki, and Wright gave locally unbiased unbiased estimator with the following guarantee.
Writing $V_n$ for the MSEM of their matrix when given $n$ copies of $\rho_{\theta}$, it satisfies $n \cdot V_n \rightarrow 2\cdot \calF^{-1}$ in the limit as $n \rightarrow \infty$~\cite[Theorem 1.10]{PSW25}.
In other words, their estimator achieves twice the QCRB asymptotically.
This is optimal, as there are examples of parameterized quantum states in which the factor of 2 is necessary~\cite[Section 3.1.1]{DGG20}.

 The proof of \cite{PSW25} uses the second moment formula for the debiased Keyl's algorithm,
and as a result, we can also achieve twice the QCRB asymptotically by plugging our estimator $\mixed^+(\gps)$ into the Zhou and Chen transformation.
Their proof requires two properties of $\mathrm{Lower}_{\rho_{\theta}}$.
First, they need that for any observables $O_1$ and $O_2$, $\tr(O_1 \otimes O_2 \cdot \mathrm{Lower}_{\rho_{\theta}}) = \tr(O_2 \otimes O_1 \cdot \mathrm{Lower}_{\rho_{\theta}})$.
This holds in our case too, as our $\mathrm{Lower}_{\rho_{\theta}}$ is a nonnegative linear combination of terms of the form $X \otimes X$, where $X$ is a Hermitian matrix,
and
\begin{equation*}
    \tr(O_1 \otimes O_2  \cdot X \otimes X) = \tr(O_1 \cdot X)\cdot  \tr(O_2  \cdot X)
    = \tr(O_2 \otimes O_1 \cdot X \otimes X).
\end{equation*}
In addition, they need that for any matrix $Q$, $\tr(Q^\dagger \otimes Q \cdot \mathrm{Lower}_{\rho_{\theta}}) \geq 0$. This too holds in our case, as
\begin{equation*}
    \tr(Q^\dagger \otimes Q \cdot X \otimes X)
    = \tr(Q^\dagger \cdot X) \cdot \tr(Q \cdot X)
    = \overline{\tr(Q \cdot X)} \cdot \tr(Q \cdot X)
    = \abs{\tr(Q \cdot X)}^2 \geq 0,
\end{equation*}
where the third equality uses the fact that $X$ is Hermitian.
(See \cite[Proof of Theorem 8.1]{PSW25} for both of these required properties.)
Note that for this result, we care about the asymptotic regime of $n \rightarrow \infty$, in which case it actually suffices to use the  simpler algorithm $\mixed(\gps)$.

\subsection{Discussion} \label{subsec:discussion}

Our results demonstrate that the purification channel provides a powerful tool for designing mixed state tomography algorithms.
We believe that it will have more applications in the future, and we conclude with some open directions along these lines.

\paragraph{The purification channel.}
We still feel like we lack a deep understanding of the purification channel.
For example, currently we only know how to implement it using Schur transforms; could this channel be implemented in a more elementary way, without using representation theory?
Similarly, for some of our applications, we must apply the purification channel to a state which has larger rank than the purification channel ``supports''.
Is there a natural interpretation of the behavior of the channel in this case?

\paragraph{Other applications of the purification channel.}

Are there more applications of the mixed state to pure state reduction beyond the ones we considered in this work?
One possibility is the shadow tomography problem: an immediate consequence of our reduction is that mixed state shadow tomography generically reduces to pure state shadow tomography.
Could this help design better shadow tomography algorithms in the future?
More generally, are there additional applications of the purification channel, even beyond quantum learning?

\paragraph{Efficient algorithms for Hayashi's algorithm.}
Although we now have an efficient quantum algorithm for vanilla mixed state tomography due to \Cref{thm:efficient-mixed}, we still do not have efficient algorithms for any of our other applications, as these all require performing Hayashi's algorithm.
This motivates the following question: can Hayashi's algorithm be made efficient?
We believe the answer is yes,
and we leave this question to future work.

\paragraph{The debiased Keyl's algorithm.}
In the course of this work, we have resolved a number of open problems stated in the debiased Keyl's algorithm paper (admittedly much faster than we were expecting), namely open problems 1 (learning with high probability), 4 (efficient algorithms), and 6 (simpler proofs of the variance formula), except we resolved these questions for our algorithm $\mixed^+(\gps)$ rather than for the debiased Keyl's algorithm.
Can any of our techniques help us give simpler proofs for the debiased Keyl's algorithm?
More concretely, does purification give us a simpler perspective to understand the measurements that the debiased Keyl's algorithm performs?
Also, are there any tomographic tasks which separate the performance of these two algorithms, or do they essentially behave identically for all tasks?

\paragraph{Unentangled measurements for pure state problems.}
For the task of vanilla mixed state tomography, our \Cref{thm:efficient-mixed} showed that after purifying $\rho$, it suffices to run a pure state tomography algorithm which uses unentangled measurements.
On the other hand, our other applications involve entangled measurements across the purified copies.
So, is this just a fluke of vanilla tomography?
Or could it be that once $\rho$ has been purified, one never requires entangled measurements?
In other words, could it be that entanglement is only useful for producing consistent purifications, for all tomographic tasks?

To investigate these questions, let us first explain why we did not use unentangled measurements for our mixed state unbiased estimators.
The natural first thing to try would be to plug in the standard unentangled measurement tomography algorithm into our mixed state to pure state reduction, giving the algorithm $\mixed(\calA_{\mathrm{standard}})$.
To understand how well this performs, let us first look at $\calA_{\mathrm{standard}}$.
Given $n$ copies of a pure state $\sigma \in \C^{d \times d}$,
its output $\widehat{\bsigma}_{\mathrm{avg}}$ has second moment
\begin{equation*}
    \E[\widehat{\bsigma}_{\mathrm{avg}}\otimes \widehat{\bsigma}_{\mathrm{avg}}] = \frac{n-1}{n} \cdot \sigma \otimes \sigma + \frac{1}{n} \cdot \big(\sigma \otimes I_d + I_d \otimes \sigma\big) \cdot \swap + \frac{1}{n} \cdot \swap-\mathrm{Lower}_{\sigma}.
\end{equation*}
(In fact, this expression also holds when $\sigma$ is a mixed state.)
This is easy to show by direct calculation and we omit the proof. 
Note that the third term has a $1/n$ factor on the $\swap$, whereas the Grier--Pashayan--Schaeffer algorithm has a significantly smaller factor of $1/n^2$ (see \Cref{thm:gps-moments}).
This means that this algorithm is not useful for any of our unbiased estimator applications. 
The ultimate issue is that although $\calA_{\mathrm{standard}}$ has strong $\ell_{\infty}$ guarantees, which are sufficient for achieving optimal pure state tomography bounds,
it has relatively weak $\ell_2$ guarantees, and these are what many of our other applications rely on.

This could be a fundamental limitation of all algorithms which use unentangled measurements, but it could also be a sign that we are simply using the wrong unentangled pure state tomography algorithm.
Perhaps a better algorithm would be to first compute the top eigenvector $\ket{\bv}$ of $\widehat{\bsigma}_{\mathrm{avg}}$
and consider the density matrix $\ketbra{\bv}$.
This will certainly be a biased estimator of $\sigma$, but one can certainly correct for its bias, much as Grier, Pashayan, and Schaeffer corrected for the bias of Hayashi's algorithm~\cite{GPS24}. 
How well does the resulting algorithm perform?
We don't have a clear answer to this question, but there is limited evidence suggesting that it may indeed be a better estimator than $\widehat{\bsigma}_{\mathrm{avg}}$ on its own.
In particular, Grier, Pashayan, and Schaeffer showed that when $\sigma$ is a pure state, there is an unbiased estimator for $\sigma$ which is closely related to $\widehat{\bsigma}_{\mathrm{avg}}^2$, and this unbiased estimator outperforms the unsquared $\widehat{\bsigma}_{\mathrm{avg}}$ at the classical shadows task, at least in some regimes of parameters.
Note that squaring $\widehat{\bsigma}_{\mathrm{avg}}$ has the effect of putting more weight on its larger eigenvectors and less weight on its smaller eigenvectors, which is a step in the direction of $\ketbra{\bv}$.

We note that there are some quantum learning tasks where entangled measurements help, even when the input states are pure.
These include testing if a bipartite pure state is product or entangled~\cite{CCHL22,Har23}
and testing if a multipartite pure state has a hidden cut~\cite{BCS+25}.
As for tasks with a more tomographic flavor,
the only separation we are aware of is learning stabilizer states, which can be solved with $O(n)$ copies using entangled measurements but requires $\Omega(n^2)$ copies using unentangled measurements~\cite{ABDY23}, though we note that this lower bound proof only seems to apply to algorithms which make nonadaptive measurements.
(We thank Sitan Chen for pointing out these applications to us.)
However, it seems possible to us that the unentangled pure state tomography algorithm we suggested above could be competitive with Hayashi's algorithm for many tomographic tasks.
This would of course not yield improved sample complexity bounds, as it seems clear that Hayashi's algorithm is performing the ``right'' measurement for pure state tomography, but it would help us understand the philosophical question of when and why entangled measurements help.

\subsection{Organization}
The rest of this document closely follows the introduction in organization.
We begin with relevant preliminaries in \Cref{sec:prelims} below.
Then, in \Cref{sec:the-channel}, we describe the random purification channel (\Cref{thm:acorn}) and provide the relevant tools for analyzing quasi-purification.

Next, we investigate pure state tomography algorithms in the unentangled and entangled measurement settings in \Cref{sec:unentangled,sec:entangled}, respectively.
In the unentangled section, we give a sample-optimal gate-efficient pure state tomography algorithm (\Cref{thm:efficient-pure}), which through our reduction produces a sample-optimal gate-efficient mixed state tomography algorithm (\Cref{thm:efficient-mixed}).
In the entangled tomography section, we study the Grier--Pashayan--Schaeffer unbiased estimator for pure states and compute its second moment (\Cref{thm:gps-moments}).
Through our reduction, this produces a state-of-the-art unbiased estimator for mixed states (\Cref{thm:gps-moments-tight}); the proof that this reduction works is the focus of \Cref{sec:unbiased-estimator}.

On the way, we prove the intermediate results about Hayashi's algorithm and the Grier--Pashayan--Schaeffer algorithm discussed in the introduction.
We include these, as they give simple proofs of slightly weaker statements: sample-optimal tomography without time-efficiency (\Cref{prop:hayashi}) and unbiased estimators which do not use quasi-purification (\Cref{thm:gps-moments-loose}).

Finally, in \Cref{sec:pgm}, we analyze the behavior of $\mixed(\hayashi)$, $\mixed(\gps)$, and $\mixed^+(\gps)$, and find that the measurements they perform are Pretty Good Measurements (PGMs) over the Hilbert--Schmidt measure.
This gives a way to describe these algorithms without using the purification channel.

\section{Preliminaries} \label{sec:prelims}

We use \textbf{boldface} to denote random variables and \textsf{sans serif} to denote registers, i.e.\ subsystems of a larger system. 
For example, $\rho = \rho_{\reg{A}_1\dots\reg{A}_n}$ denotes a state on $n$ registers, where we drop the subscript if the registers are clear; we can then denote the partial trace on a subsystem as, for example, $\rho_{\reg{A}_1} = \tr_{\reg{A}_2\dots\reg{A}_n}(\rho)$.
If not otherwise stated, $\reg{1}$ denotes the first subsystem, $\reg{2}$ the second, and so on.

We use $\norm{X}_1$ and $\norm{X}_\infty$ to denote the trace norm and operator norm (i.e.\ Schatten $1$-norm and Schatten $\infty$-norm, respectively).
For two quantum states $\rho$ and $\sigma$, the fidelity between them is $\fidelity(\rho, \sigma) = \norm*{\rho^{1/2} \sigma^{1/2}}_1^2$, so that, in particular, $\fidelity(\ketbra{u}, \ketbra{v}) = \abs{\braket{u}{v}}^2$.
We use $A \preceq B$ to denote PSD ordering, i.e.\ $B - A$ is positive semidefinite.

Recall the definition of $\symsep(d)$ from \cref{def:symsep}.
We observe that this cone is closed under partial trace.

\begin{proposition}[Partial trace preserves the cone of Hermitian squares]\label{prop:symset-partial-trace}
Given integers $d$ and $r$,
consider the Hilbert space $\C^d \otimes \C^r \cong \C^D$,
where $D = d \cdot r$.
Call $d$-dimensional registers $\reg{A}$ and $r$-dimensional registers $\reg{B}\textsl{}$.
Let $M_{\reg{A}_1\reg{B}_1\reg{A}_2\reg{B}_2} \in \symsep(D)$. Then
$\tr_{\reg{B}_1\reg{B}_2}(M) \in \symsep(d)$.
\end{proposition}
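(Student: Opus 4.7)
The plan is to reduce to a single generator of the cone and then use that partial trace factors through tensor products. By definition, any $M \in \symsep(D)$ can be written as $M = \sum_i c_i \, X_i \otimes X_i$ with $c_i \geq 0$ and each $X_i \in \C^{D \times D}$ Hermitian, where $D = dr$. Partial trace is linear, so it suffices to show that for a single Hermitian $X$ on $\C^d \otimes \C^r$, the operator $\tr_{\reg{B}_1 \reg{B}_2}(X \otimes X)$ lies in $\symsep(d)$.

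The key observation is that the tensor $X_{\reg{A}_1\reg{B}_1} \otimes X_{\reg{A}_2\reg{B}_2}$ is a product across the $\reg{A}_1\reg{B}_1$ and $\reg{A}_2\reg{B}_2$ factors, and the trace-out registers $\reg{B}_1, \reg{B}_2$ sit one in each factor. Therefore partial trace distributes:
\begin{equation*}
\tr_{\reg{B}_1\reg{B}_2}\bigl(X_{\reg{A}_1\reg{B}_1} \otimes X_{\reg{A}_2\reg{B}_2}\bigr)
= \bigl(\tr_{\reg{B}} X\bigr) \otimes \bigl(\tr_{\reg{B}} X\bigr)
= Y \otimes Y,
\end{equation*}
where $Y := \tr_{\reg{B}}(X) \in \C^{d \times d}$. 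Since partial trace preserves Hermiticity, $Y$ is Hermitian, and hence $Y \otimes Y \in \symsep(d)$ by definition. Summing with the nonnegative weights $c_i$ preserves membership in the cone, which finishes the proof.

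There is no real obstacle here: the entire argument is a one-line identity (partial trace of a product tensor is the product of partial traces) plus the observation that $\tr_{\reg{B}}$ preserves Hermiticity. The only thing to double-check is the bookkeeping of which copies of $\reg{B}$ are being traced out, namely that the two $\reg{B}$ registers appearing inside $X \otimes X$ live in separate tensor factors, which is exactly the register convention of the statement.
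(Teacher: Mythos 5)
Your proof is correct and follows essentially the same route as the paper's: decompose $M$ into a conical combination of $X \otimes X$ terms, use linearity of the partial trace together with the factorization $\tr_{\reg{B}_1\reg{B}_2}(X_{\reg{A}_1\reg{B}_1} \otimes X_{\reg{A}_2\reg{B}_2}) = \tr_{\reg{B}}(X) \otimes \tr_{\reg{B}}(X)$, and note that the partial trace preserves Hermiticity. Nothing is missing.
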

\begin{proof}
    Since $M_{\reg{A}_1\reg{B}_1\reg{A}_2\reg{B}_2} \in \symsep(D)$,
    it can be written as a positive linear combination of matrices of the form $X_{\reg{A}_1\reg{B}_1} \otimes X_{\reg{A}_2\reg{B}_2}$, where $X$ is a Hermitian matrix acting on $\C^D$.
    Then $\tr_{\reg{B}_1\reg{B}_2}(M)$
    can be written as a positive linear combination of matrices of the form 
    \begin{equation*}
        \tr_{\reg{B}_1\reg{B}_2}(X_{\reg{A}_1\reg{B}_1} \otimes X_{\reg{A}_2\reg{B}_2})
        = \tr_{\reg{B}_1}(X_{\reg{A}_1\reg{B}_1}) \otimes \tr_{\reg{B}_2}(X_{\reg{A}_2\reg{B}_2}).
    \end{equation*}
    Since $\tr_{\reg{B}}(X_{\reg{AB}})$ is Hermitian, $\tr_{\reg{B}_1\reg{B}_2}(M)$ is in $\symsep(d)$.
\end{proof}

\subsection{The symmetric group}
We write $S_n$ for the symmetric group on $n$ elements, and $e$ for the identity element of $S_n$.
We sometimes use cycle notation to represent permutations; for example, $(i, j)$ is the transposition swapping elements $i, j \in [n]$, and $(1, 2) (2, 3) =  (1, 2, 3)$.

For $m < n$, the subgroup $S_{m}$ naturally embeds into $S_n$ by associating the permutation $\sigma \in S_{m}$ with the permutation $\sigma' \in S_n$ defined by taking $\sigma'(i) = \sigma(i)$ when $i \leq m$ and taking $\sigma'(i) = i$ otherwise.
The corresponding \emph{group algebra} of $S_n$ consists of all linear combinations of symmetric group elements $\sum_{\pi \in S_n} \alpha_\pi \cdot \pi$ with coefficients $\alpha_\pi \in \C$.
An especially important subset of the symmetric group algebra are the \emph{Jucys--Murphy elements}, given by 
\begin{equation*}
    X_1 = 0, \quad \text{and} \quad X_i \coloneqq (1, i) + \cdots + (i-1, i), \quad \text{for $2 \leq i \leq n$}.
\end{equation*}
For us, their significance arises from the following formula,
which relates them to the uniform sum over all permutations.

\begin{proposition}[Product of Jucys--Murphy elements]\label{prop:jm-prod}
    Let $n \geq 1$. Then
    \begin{equation*}
        (e + X_n) \cdots (e + X_1) = \sum_{\pi \in S_n} \pi.
    \end{equation*}
\end{proposition}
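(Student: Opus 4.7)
The plan is to proceed by induction on $n$, exploiting the fact that each permutation of $[n]$ is uniquely determined by its value at $n$ together with the induced bijection from $[n-1]$ onto $[n]\setminus\{\pi(n)\}$.

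The base case $n=1$ is immediate, since $X_1=0$ and $e+X_1 = e = \sum_{\pi \in S_1}\pi$. For the inductive step, I would first apply the inductive hypothesis to the rightmost $n-1$ factors to obtain
\begin{equation*}
    (e+X_{n-1})(e+X_{n-2})\cdots(e+X_1) \;=\; \sum_{\pi \in S_{n-1}} \pi,
\end{equation*}
where $S_{n-1}$ is embedded in $S_n$ as the stabilizer of $n$. Multiplying on the left by $e + X_n = e + (1,n) + (2,n) + \cdots + (n-1,n)$ then gives
\begin{equation*}
    \sum_{\pi \in S_{n-1}} \pi \;+\; \sum_{i=1}^{n-1}\sum_{\pi \in S_{n-1}} (i,n)\,\pi.
\end{equation*}

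The key observation is that this collection enumerates $S_n$ without repetition. If $\pi \in S_{n-1}$, then $\pi(n) = n$, so the first sum ranges over exactly those $\sigma \in S_n$ with $\sigma(n)=n$. On the other hand, for each fixed $i \in [n-1]$, the product $(i,n)\pi$ sends $n \mapsto i$; and the map $\pi \mapsto (i,n)\pi$ is a bijection from the stabilizer of $n$ onto the set of $\sigma \in S_n$ with $\sigma(n) = i$. Since every $\sigma \in S_n$ satisfies $\sigma(n) \in [n]$ and these fibers are disjoint and together exhaust $S_n$, each $\sigma$ appears exactly once in the combined sum, yielding $\sum_{\sigma \in S_n}\sigma$.

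I do not expect any genuine obstacle here: the only thing to be careful about is the order of multiplication (we are appending $e+X_n$ on the \emph{left}, so that the new factor contributes a transposition moving $n$ to $\sigma(n)$ rather than rearranging the inductive permutation on $[n-1]$), and the bookkeeping that $(i,n)\pi$ genuinely produces each $\sigma$ with $\sigma(n)=i$ exactly once as $\pi$ ranges over the stabilizer of $n$.
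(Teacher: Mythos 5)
Your proposal is correct and follows essentially the same inductive argument as the paper: expand the leftmost factor $e+X_n$ against $\sum_{\sigma\in S_{n-1}}\sigma$ and observe that the resulting terms enumerate $S_n$. The only cosmetic difference is that you establish uniqueness by partitioning $S_n$ according to the value $\sigma(n)$, whereas the paper first shows every permutation occurs and then counts $(n-1)!+(n-1)\cdot(n-1)!=n!$ terms; both are fine.
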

\begin{proof}
    We prove this by induction on $n$.
    The $n = 1$ base case is trivial.
    For the inductive step, let us assume that this statement is true for $n-1$, i.e.\
    \begin{equation*}
        (e + X_{n-1}) \cdots (e + X_1) = \sum_{\sigma \in S_{n-1}} \sigma.
    \end{equation*}
    Then our goal is to show that
    \begin{align}
        \sum_{\pi \in S_n} \pi
        = (e + X_n) \cdots (e + X_1)
        &= (e + X_n) \cdot \sum_{\sigma \in S_{n-1}} \sigma\nonumber\\
        &= \sum_{\sigma \in S_{n-1}} \sigma
        + \sum_{i=1}^{n-1} \sum_{\sigma \in S_{n-1}} (i, n) \cdot \sigma.\label{eq:big-goal}
    \end{align}
    To prove this, we will show that every $\pi \in S_n$ occurs as a summand in \Cref{eq:big-goal}.
    First, suppose that $\pi(n) = n$. Then $\pi \in S_{n-1}$ as well, and so it can be found in the first sum in \Cref{eq:big-goal}.
    Otherwise, suppose $\pi(n) = i$ for some $1 \leq i \leq n-1$. Then $(i, n) \cdot \pi$, fixes $n$, and so we can write $(i, n) \cdot \pi = \sigma$, for some $\sigma \in S_{n-1}$. But then $\pi = (i, n) \cdot \sigma$, which can be found in the second sum in \Cref{eq:big-goal}.
    Thus, every $\pi \in S_n$ occurs as a summand in \Cref{eq:big-goal}.
    As there are exactly $(n-1)! + (n-1) \cdot (n-1)! = n!$ terms in this equation, each must correspond to a unique permutation in $S_n$, completing the proof.
\end{proof}

Given $\pi \in S_n$, $P_d(\pi)$ is defined to be the unitary operation which acts on $(\C^d)^{\otimes n}$ by permuting the $n$ registers via the equation
\begin{equation*}
    P_d(\pi) \cdot \ket{i_1, \ldots, i_n}
    = \ket{i_{\pi^{-1}(1)}, \ldots, i_{\pi^{-1}(n)}},
    \quad \text{for all $i_1, \ldots, i_n \in [d]$}.
\end{equation*}
When $d$ is clear from context, we will often write this as $P(\pi)$ for simplicity.
We may sometimes even drop the notation $P(\cdot)$ altogether and write this simply as $\pi$.
When $n = 2$, we will often write $\swap \coloneqq P((1, 2))$.

We will commonly encounter having two sets of $n$ registers: $n$ registers of Hilbert space $\C^d$, named $\reg{A}_1, \ldots, \reg{A}_n$ and $n$ registers of Hilbert space $\C^r$ named $\reg{B}_1, \ldots, \reg{B}_n$.
Pairing these up gives $n$ registers of Hilbert space $(\C^d \otimes \C^r) \cong \C^D$, for $D = d \cdot r$, named $\reg{A}_1 \reg{B}_1, \ldots, \reg{A}_n \reg{B}_n$.
Given a permutation $\pi \in S_n$,
we will write:
\begin{itemize}
\item[$\circ$] either $P_d(\pi)$ or $P_{\reg{A}}(\pi)$ for the permutation matrix acting on $\reg{A}_1, \ldots, \reg{A}_n$,
\item[$\circ$]  either $P_r(\pi)$ or $P_{\reg{B}}(\pi)$ for the permutation matrix acting on $\reg{B}_1, \ldots, \reg{B}_n$, and
\item[$\circ$]  either $P_D(\pi)$ or $P_{\reg{A}\reg{B}}(\pi)$ for the permutation matrix acting on $\reg{A}_1\reg{B}_1, \ldots, \reg{A}_n\reg{B}_n$.
\end{itemize}
When we are in the above situation with $n = 2$,
we will apply similar notational choices to the $\swap$ matrix
(so that we write $\swap_d$ or $\swap_{\reg{A}}$ for the $\swap$ matrix acting on $\reg{A}_1 \reg{A}_2$, and so on).
We will abuse notation and use e.g.\ $\swap_{\reg{A}}$ to refer to both the operator on $\reg{A}$ alone, along with the operator $\swap_{\reg{A}} \otimes I_\reg{B}$.

The operator $P_{\reg{A}\reg{B}}(\pi)$ decomposes across the $\reg{A}$ and $\reg{B}$ registers in the natural way.
\begin{proposition}\label{prop:permute-factor}
Let $\reg{A}_1, \ldots, \reg{A}_n$ be registers with Hilbert space $\C^d$
and $\reg{B}_1, \ldots, \reg{B}_n$ be registers with Hilbert space $\C^r$.
Then for all $\pi \in S_n$,
\begin{equation*}
    P_{\reg{A}\reg{B}}(\pi) = P_{\reg{A}}(\pi) \otimes P_{\reg{B}}(\pi).
\end{equation*}
\end{proposition}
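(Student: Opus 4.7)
The plan is to verify the identity by evaluating both sides on a basis of $(\C^d \otimes \C^r)^{\otimes n}$ and observing that they agree. The only real content is bookkeeping: matching the two conventions of grouping tensor factors, namely the ``paired'' ordering $\reg{A}_1\reg{B}_1 \cdots \reg{A}_n \reg{B}_n$ (in which $P_{\reg{A}\reg{B}}(\pi)$ is defined) and the ``block'' ordering $\reg{A}_1 \cdots \reg{A}_n \reg{B}_1 \cdots \reg{B}_n$ (in which the tensor product $P_{\reg{A}}(\pi) \otimes P_{\reg{B}}(\pi)$ naturally lives).

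First I would fix a computational basis element $\ket{i_1, j_1, \ldots, i_n, j_n} = \ket{i_1}_{\reg{A}_1} \otimes \ket{j_1}_{\reg{B}_1} \otimes \cdots \otimes \ket{i_n}_{\reg{A}_n} \otimes \ket{j_n}_{\reg{B}_n}$, with $i_k \in [d]$ and $j_k \in [r]$. Applying the definition of $P_D(\pi)$ in the paired ordering, which identifies the composite register $\reg{A}_k\reg{B}_k$ with a single Hilbert space $\C^D$ whose basis is indexed by $(i_k, j_k)$, gives
\begin{equation*}
    P_{\reg{A}\reg{B}}(\pi) \ket{i_1, j_1, \ldots, i_n, j_n}
    = \ket{i_{\pi^{-1}(1)}, j_{\pi^{-1}(1)}, \ldots, i_{\pi^{-1}(n)}, j_{\pi^{-1}(n)}}.
\end{equation*}

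Next I would evaluate the right-hand side. After regrouping the same basis vector into the block ordering as $\ket{i_1, \ldots, i_n}_{\reg{A}} \otimes \ket{j_1, \ldots, j_n}_{\reg{B}}$, the operator $P_{\reg{A}}(\pi) \otimes P_{\reg{B}}(\pi)$ acts by the definition of $P_d(\pi)$ and $P_r(\pi)$ to give
\begin{equation*}
    \ket{i_{\pi^{-1}(1)}, \ldots, i_{\pi^{-1}(n)}}_{\reg{A}} \otimes \ket{j_{\pi^{-1}(1)}, \ldots, j_{\pi^{-1}(n)}}_{\reg{B}}.
\end{equation*}
Regrouping back into the paired ordering yields exactly the expression from the previous paragraph. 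Since the two operators agree on every element of a spanning set, they are equal, and the identity extends to all of $(\C^d \otimes \C^r)^{\otimes n}$ by linearity.

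The main (minor) obstacle is purely notational: one must be careful that $P_{\reg{A}}(\pi) \otimes P_{\reg{B}}(\pi)$, written in the block ordering, and $P_{\reg{A}\reg{B}}(\pi)$, written in the paired ordering, are being compared under the canonical isomorphism that shuffles tensor factors between these two orderings. Once the bases are written out coordinate-by-coordinate as above, this isomorphism is transparent and the equality is immediate.
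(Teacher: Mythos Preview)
Your proposal is correct and is essentially identical to the paper's own proof: both verify the identity by applying each side to an arbitrary computational basis vector, regrouping between the paired ordering $\reg{A}_1\reg{B}_1\cdots\reg{A}_n\reg{B}_n$ and the block ordering $\reg{A}_1\cdots\reg{A}_n\reg{B}_1\cdots\reg{B}_n$, and observing that the results agree.
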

\begin{proof}
    Let $\pi \in S_n$. Then for all $a_1, \ldots, a_n \in [d]$ and $b_1, \ldots, b_n \in [r]$, 
    \begin{align*}
    &P_{\reg{A}\reg{B}}(\pi)
        \cdot \Big(\ket{a_1}_{\reg{A}_1}\ket{b_1}_{\reg{B}_1} \otimes \cdots \otimes \ket{a_n}_{\reg{A}_n} \ket{b_n}_{\reg{B}_n}\Big)\\
        ={}& \Big(\ket*{a_{\pi^{-1}(1)}}_{\reg{A}_1}\ket*{b_{\pi^{-1}(1)}}_{\reg{B}_1}\Big) \otimes \cdots \otimes \Big(\ket*{a_{\pi^{-1}(n)}}_{\reg{A}_n} \ket*{b_{\pi^{-1}(n)}}_{\reg{B}_n}\Big)\\
        ={}& \Big(\ket*{a_{\pi^{-1}(1)}}_{\reg{A}_1} \otimes \cdots \otimes \ket*{a_{\pi^{-1}(n)}}_{\reg{A}_n}\Big) \otimes \Big(\ket*{b_{\pi^{-1}(1)}}_{\reg{B}_1} \otimes \cdots \otimes  \ket*{b_{\pi^{-1}(n)}}_{\reg{B}_n}\Big)\\
        ={}& \Big(P_{\reg{A}}(\pi) \cdot \ket{a_1}_{\reg{A}_1} \otimes \cdots \otimes \ket{a_n}_{\reg{A}_n}\Big) \otimes \Big(P_{\reg{B}}(\pi) \cdot \ket{b_1}_{\reg{B}_1} \otimes \cdots \otimes \ket{b_n}_{\reg{B}_n}\Big)\\
        ={}& P_{\reg{A}}(\pi) \otimes P_{\reg{B}}(\pi)
        \cdot \Big(\ket{a_1}_{\reg{A}_1}\ket{b_1}_{\reg{B}_1} \otimes \cdots \otimes \ket{a_n}_{\reg{A}_n} \ket{b_n}_{\reg{B}_n}\Big).
    \end{align*}
    This completes the proof.
\end{proof}

Our moment computations will feature partial traces of permuted operators very heavily.
Simplifying these expressions is straightforward, either through manipulating them algebraically or inspecting their tensor network diagrams.
We will now prove a couple of helper propositions to illustrate how this can be done.
Later, we will perform similar computations, and their proofs will follow similarly.

\begin{proposition}\label{prop:no-tensor-network-diagrams}
    Let $M = M_{\reg{A}\reg{B}}$ act on $\C^d \otimes \C^r$. Then
    \begin{equation*}
        \tr_{\reg{A}}(\swap_{\reg{A}\reg{C}} \cdot (M_{\reg{A}\reg{B}} \otimes (I_d)_\reg{C}))
        = M_{\reg{C}\reg{B}}
        = \tr_{\reg{A}}((M_{\reg{A}\reg{B}} \otimes (I_d)_\reg{C}) \cdot \swap_{\reg{A}\reg{C}}).
    \end{equation*}
\end{proposition}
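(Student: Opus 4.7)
The plan is to reduce the identity to the elementary one-register fact that for any operator $X$ on $\C^d$,
\begin{equation*}
    \tr_{\reg{A}}(\swap_{\reg{A}\reg{C}} \cdot (X_{\reg{A}} \otimes I_{\reg{C}})) = X_{\reg{C}} = \tr_{\reg{A}}((X_{\reg{A}} \otimes I_{\reg{C}}) \cdot \swap_{\reg{A}\reg{C}}),
\end{equation*}
and then tensor on the passive $\reg{B}$ register. I would prove this one-register lemma by plugging in the explicit basis expansion $\swap_{\reg{A}\reg{C}} = \sum_{a,a'} \ket{a}\bra{a'}_{\reg{A}} \otimes \ket{a'}\bra{a}_{\reg{C}}$. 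Multiplying out and applying $\tr_{\reg{A}}(\ket{a}\bra{a'}_{\reg{A}} \cdot Y_{\reg{A}}) = \bra{a'}Y\ket{a}$ collapses the $\reg{A}$ sum and leaves $\sum_{a,a'}\bra{a'}X\ket{a} \cdot \ket{a'}\bra{a}_{\reg{C}} = X_{\reg{C}}$. The second equality is proved by the same computation after pushing $X_{\reg{A}}$ to the left inside the basis expansion of $\swap$; no cyclicity of partial trace is needed.

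For the statement of the proposition, I would then expand $M_{\reg{A}\reg{B}} = \sum_{k} X^{k}_{\reg{A}} \otimes Y^{k}_{\reg{B}}$ as a sum of simple tensors, a decomposition that always exists. Because $\swap_{\reg{A}\reg{C}}$ acts trivially on $\reg{B}$ and the partial trace $\tr_{\reg{A}}$ also acts trivially on $\reg{B}$, both factor through the $\reg{B}$ tensor slot:
\begin{equation*}
    \tr_{\reg{A}}\bigl(\swap_{\reg{A}\reg{C}} \cdot (X^{k}_{\reg{A}} \otimes Y^{k}_{\reg{B}} \otimes (I_d)_{\reg{C}})\bigr)
    = \tr_{\reg{A}}\bigl(\swap_{\reg{A}\reg{C}} \cdot (X^{k}_{\reg{A}} \otimes (I_d)_{\reg{C}})\bigr) \otimes Y^{k}_{\reg{B}}
    = X^{k}_{\reg{C}} \otimes Y^{k}_{\reg{B}}.
\end{equation*}
Summing over $k$ and observing that $\sum_{k} X^{k}_{\reg{C}} \otimes Y^{k}_{\reg{B}} = M_{\reg{C}\reg{B}}$ (since $M_{\reg{C}\reg{B}}$ is defined by exactly the same linear combination with the $\reg{A}$-register relabeled to $\reg{C}$) gives the first equality. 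The second equality is identical, using the other half of the one-register lemma.

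There is no real obstacle here beyond keeping register labels straight; the content is the elementary SWAP-as-teleportation identity, and the only work is to verify that the passive $\reg{B}$ register really does pass through $\swap_{\reg{A}\reg{C}}$ and $\tr_{\reg{A}}$ unchanged, which is immediate from the tensor-product structure.
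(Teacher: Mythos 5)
Your proof is correct, but it is organized differently from the paper's. The paper verifies the identity entrywise: it sandwiches $\tr_{\reg{A}}(\swap_{\reg{A}\reg{C}} \cdot (M_{\reg{A}\reg{B}} \otimes (I_d)_\reg{C}))$ between $\bra{bc}_{\reg{BC}}$ and $\ket{b'c'}_{\reg{BC}}$, sums over the computational basis of $\reg{A}$, and matches the result against the entries of $M_{\reg{C}\reg{B}}$; the second equality is then obtained by taking the conjugate transpose of the whole computation (applied to $M^\dagger$, since $\swap$ is self-adjoint). You instead factor the problem: a one-register lemma proved by expanding $\swap_{\reg{A}\reg{C}} = \sum_{a,a'}\ket{a}\bra{a'}_{\reg{A}}\otimes\ket{a'}\bra{a}_{\reg{C}}$, followed by a simple-tensor decomposition $M = \sum_k X^k \otimes Y^k$ to carry the passive $\reg{B}$ register along. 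Your route is slightly more modular and handles both orderings of the product symmetrically without the dagger trick; the paper's is marginally shorter since it never needs to invoke the (always-available but extra) decomposition of $M$ into simple tensors. Both are complete and elementary, and your observation that linearity plus the triviality of $\swap_{\reg{A}\reg{C}}$ and $\tr_{\reg{A}}$ on $\reg{B}$ justifies the reduction is exactly the point that needs checking; you have checked it.
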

\begin{proof}
    We verify that every entry of $\tr_{\reg{A}}(\swap_{\reg{A}\reg{C}} \cdot (M_{\reg{A}\reg{B}} \otimes (I_d)_\reg{C}))$ equals the corresponding entry of $M_{\reg{C}\reg{B}}$: for arbitrary $b,b' \in [r]$ and $c,c' \in [d]$, we have
    \begin{align*}
        \bra{bc}_{\reg{BC}} \tr_{\reg{A}}(\swap_{\reg{A}\reg{C}} \cdot (M_{\reg{A}\reg{B}} \otimes (I_d)_\reg{C})) \ket{b'c'}_{\reg{BC}}
        &= \sum_{a=1}^d \bra{abc}_{\reg{ABC}} \cdot \swap_{\reg{AC}} \cdot (M_{\reg{AB}} \otimes (I_d)_{\reg{C}}) \cdot \ket{ab'c'}_{\reg{ABC}}\\
        &= \sum_{a=1}^d \bra{cba}_{\reg{ABC}} \cdot (M_{\reg{AB}} \otimes (I_d)_{\reg{C}}) \cdot \ket{ab'c'}_{\reg{ABC}} \\
        &= \sum_{a=1}^d \bra{cb}M\ket{ab'} \cdot \bra{a}I_d\ket{c'}
        = \bra{cb}M\ket{c'b'}
        = \bra{bc}_{\reg{BC}} M_{\reg{CB}} \ket{b'c'}_{\reg{BC}}.
    \end{align*}
    The analogous statement for $\tr_{\reg{A}}((M_{\reg{A}\reg{B}} \otimes (I_d)_\reg{C}) \cdot \swap_{\reg{A}\reg{C}})$ follows by taking the conjugate transpose of the computation above.
\end{proof}

We will occasionally make use of \Cref{prop:no-tensor-network-diagrams} under the following guise, and so we record it here.

\begin{proposition}\label{prop:weird-prop}
    Let $M$ act on $\reg{A}_1 \reg{B}_1$.
    Then
    \begin{equation*}
        \tr_{\reg{B}_1 \reg{B}_2}( M_{\reg{A}_1\reg{B}_1} \otimes I_{\reg{A}_2 \reg{B}_2} \cdot \swap_{\reg{A} \reg{B}}) = (\tr_{\reg{B}_1}(M_{\reg{A}_1\reg{B}_1}) \otimes I_{\reg{A}_2}) \cdot \swap_{\reg{A}}.
    \end{equation*}
\end{proposition}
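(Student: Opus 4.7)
The plan is to reduce the identity to a one-register computation by factoring the swap across the $\reg{A}$ and $\reg{B}$ sides, and then applying the standard partial-trace identity for the swap.

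First, I would apply \Cref{prop:permute-factor} with $\pi = (1,2)$ to write $\swap_{\reg{A}\reg{B}} = \swap_{\reg{A}} \otimes \swap_{\reg{B}}$, where $\swap_{\reg{A}}$ acts on $\reg{A}_1\reg{A}_2$ and $\swap_{\reg{B}}$ on $\reg{B}_1\reg{B}_2$. Since $\swap_{\reg{A}}$ touches no $\reg{B}$ register and the partial trace is over $\reg{B}_1 \reg{B}_2$, it can be pulled outside the partial trace, yielding
\begin{equation*}
    \tr_{\reg{B}_1\reg{B}_2}\bigl(M_{\reg{A}_1\reg{B}_1} \otimes I_{\reg{A}_2\reg{B}_2} \cdot \swap_{\reg{A}\reg{B}}\bigr) = \tr_{\reg{B}_1\reg{B}_2}\bigl(M_{\reg{A}_1\reg{B}_1} \otimes I_{\reg{A}_2\reg{B}_2} \cdot \swap_{\reg{B}}\bigr) \cdot \swap_{\reg{A}}.
\end{equation*}

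Next, I would decompose $M = \sum_k X_k \otimes Y_k$ with $X_k$ acting on $\reg{A}_1$ and $Y_k$ on $\reg{B}_1$. Regrouping the tensor factors, the inner expression becomes $\sum_k (X_k \otimes I_{\reg{A}_2}) \otimes \bigl((Y_k \otimes I_{\reg{B}_2}) \cdot \swap_{\reg{B}}\bigr)$, and tracing out $\reg{B}_1\reg{B}_2$ leaves
\begin{equation*}
    \sum_k (X_k \otimes I_{\reg{A}_2}) \cdot \tr\bigl((Y_k \otimes I_{\reg{B}_2}) \cdot \swap_{\reg{B}}\bigr) = \sum_k \tr(Y_k) \cdot (X_k \otimes I_{\reg{A}_2}),
\end{equation*}
by the standard identity $\tr((A \otimes B) \cdot \swap) = \tr(AB)$ with $B = I$. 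Recognizing $\sum_k \tr(Y_k) X_k = \tr_{\reg{B}_1}(M)$ then gives exactly $(\tr_{\reg{B}_1}(M) \otimes I_{\reg{A}_2}) \cdot \swap_{\reg{A}}$.

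I do not anticipate any genuine obstacle; the only care required is tensor-factor bookkeeping when regrouping $\reg{A}$-side and $\reg{B}$-side operators, and this is justified by the factorization of $\swap_{\reg{A}\reg{B}}$. Alternatively, one could bypass the decomposition of $M$ and verify the identity entrywise on a computational basis, mirroring the index-level argument used in the proof of \Cref{prop:no-tensor-network-diagrams}.
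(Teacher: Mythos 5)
Your proof is correct and follows essentially the same route as the paper: both begin by factoring $\swap_{\reg{A}\reg{B}} = \swap_{\reg{A}} \otimes \swap_{\reg{B}}$ via \Cref{prop:permute-factor} and pulling $\swap_{\reg{A}}$ outside the partial trace. The only difference is in the final step, where the paper invokes \Cref{prop:no-tensor-network-diagrams} to first rewrite $\tr_{\reg{B}_1}(M_{\reg{A}_1\reg{B}_1}\otimes I \cdot \swap_{\reg{B}})$ as $M_{\reg{A}_1\reg{B}_2}\otimes I_{\reg{A}_2}$ and then traces out $\reg{B}_2$, while you use an operator decomposition $M = \sum_k X_k \otimes Y_k$ together with the identity $\tr((A\otimes B)\cdot\swap) = \tr(AB)$; both are valid and equally elementary.
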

\begin{proof}  
    Recall that $\swap_{\reg{A}\reg{B}}$ swaps both $\reg{A}_1$ with $\reg{A}_2$ and $\reg{B}_1$ with $\reg{B}_2$.
    We calculate:
    \begin{align*}
    \tr_{\reg{B}_1\reg{B}_2}(M_{\reg{A}_1\reg{B}_1}  \otimes I_{\reg{A}_2\reg{B}_2} \cdot \swap_{\reg{A}\reg{B}})
    &= \tr_{\reg{B}_1\reg{B}_2}(M_{\reg{A}_1\reg{B}_1}  \otimes I_{\reg{A}_2\reg{B}_2} \cdot \swap_{\reg{A}} \cdot \swap_{\reg{B}})\tag{by \Cref{prop:permute-factor}}\\
    &= \tr_{\reg{B}_1\reg{B}_2}(M_{\reg{A}_1\reg{B}_1}  \otimes I_{\reg{A}_2\reg{B}_2} \cdot \swap_{\reg{B}}) \cdot \swap_{\reg{A}}\\
    &= \tr_{\reg{B}_2}\Big(\tr_{\reg{B}_1}(M_{\reg{A}_1\reg{B}_1}  \otimes I_{\reg{A}_2\reg{B}_2} \cdot \swap_{\reg{B}})\Big) \cdot \swap_{\reg{A}} \\
    &= \tr_{\reg{B}_2}\Big(M_{\reg{A}_1 \reg{B}_2} \otimes I_{\reg{A}_2}\Big) \cdot \swap_{\reg{A}} \tag{by \Cref{prop:no-tensor-network-diagrams}}\\
    &= (\tr_{\reg{B}_2}(M_{\reg{A}_1\reg{B}_2}) \otimes I_{\reg{A}_2}) \cdot \swap_{\reg{A}}.
\end{align*}
This completes the proof.
\end{proof}

\subsection{The symmetric subspace}

Now, we will recall standard facts about the symmetric subspace.
These can be found, for example, in the survey by Harrow~\cite{Har13}.
The \emph{symmetric subspace} on $(\C^d)^{\otimes n}$ is given by
\begin{equation*}
    \lor^n \C^d \coloneqq \{\ket{\psi} \in (\C^d)^{\otimes n} \mid \forall \pi \in S_n, ~ P(\pi) \cdot \ket{\psi} = \ket{\psi}\}.
\end{equation*}
It can be equivalently written as
\begin{equation*}
    \lor^n \C^d = \mathrm{span}\{\ket{\psi}^{\otimes n} \mid \ket{\psi} \in \C^d\}.
\end{equation*}
We will denote by $d[n]$ the dimension of $\lor^n \C^d$, which is given by the formula
\begin{equation*}
    d[n] = \binom{n+d-1}{n}.
\end{equation*}
The ratio of successive dimensions satisfies the formula
\begin{equation}\label{eq:ratio}
    \frac{d[n]}{d[n+1]}=\frac{n+1}{n+d}.
\end{equation}
Finally, we write $\Pi_{\mathrm{sym}}^{n,d}$ for the projector onto $\lor^n \C^d$.
It can be computed by the formula
\begin{equation}\label{eq:pi-sym}
    \Pi_{\mathrm{sym}}^{n,d}
    = \E_{\bpi \sim S_n} [P(\bpi)]
    = d[n] \cdot \E_{\ket{\bu} \sim \mathrm{Haar}} \ketbra{\bu}^{\otimes n}.
\end{equation}
We will sometimes drop either $n$ or $d$ from the notation $\Pi_{\mathrm{sym}}^{n,d}$ when they are clear from context.
The projector onto the symmetric subspace obeys the following nice recurrence relation.

\begin{proposition}[Symmetric subspace projector recurrence]\label{prop:pi-sym-recurrence}
    Let $n \geq 2$ and $1 \leq m \leq n$. Then
    \begin{equation*}
         \Pi_{\mathrm{sym}}^{n,d} = \Big(\frac{e + X_n}{n}\Big) \cdots \Big(\frac{e + X_{m+1}}{m+1}\Big) \cdot\Pi_{\mathrm{sym}}^{m,d} \otimes (I_d)^{\otimes (n-m)}.
    \end{equation*}
\end{proposition}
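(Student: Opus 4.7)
The plan is to reduce the identity to the Jucys--Murphy product formula in \Cref{prop:jm-prod}. Starting from the formula $\Pi_{\mathrm{sym}}^{n,d} = \frac{1}{n!} \sum_{\pi \in S_n} P(\pi)$ from \Cref{eq:pi-sym}, I would apply \Cref{prop:jm-prod} in reverse to write
\begin{equation*}
    n! \cdot \Pi_{\mathrm{sym}}^{n,d} = (e + X_n)(e + X_{n-1}) \cdots (e + X_1),
\end{equation*}
interpreting this product in the group algebra of $S_n$ and then passing to the representation $P(\cdot)$.

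Next, I would split the product at the cut $m$: the first $n - m$ factors $(e + X_n) \cdots (e + X_{m+1})$ sit to the left, and the remaining $m$ factors $(e + X_m) \cdots (e + X_1)$ sit to the right. By \Cref{prop:jm-prod} applied at level $m$, the right piece simplifies to $\sum_{\sigma \in S_m} \sigma$, where $S_m$ is embedded in $S_n$ as the permutations fixing $\{m+1, \ldots, n\}$. Under the representation $P_d$, such a $\sigma$ acts as $P(\sigma)_{\reg{A}_1 \cdots \reg{A}_m} \otimes (I_d)^{\otimes(n-m)}$, so
\begin{equation*}
    \sum_{\sigma \in S_m} P(\sigma) = m! \cdot \Pi_{\mathrm{sym}}^{m,d} \otimes (I_d)^{\otimes (n-m)},
\end{equation*}
again by \Cref{eq:pi-sym} applied at $m$.

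Substituting this back and dividing by $n! = n (n-1) \cdots (m+1) \cdot m!$, I would distribute the denominators across the Jucys--Murphy factors to obtain the desired formula. The only point needing a small check is that $X_{m+1}, \ldots, X_n$ commute with the embedded subgroup $S_m$ in the sense needed here — but this never actually matters, since I am not reordering factors: the Jucys--Murphy product is already written in the correct left-to-right order $(e+X_n) \cdots (e+X_{m+1}) \cdot (e+X_m) \cdots (e+X_1)$ to apply \Cref{prop:jm-prod} at $m$ directly to the right-hand portion. So the proof is essentially a two-line bookkeeping argument on top of \Cref{prop:jm-prod} and \Cref{eq:pi-sym}, with no real obstacle.
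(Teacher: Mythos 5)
Your proposal is correct and matches the paper's own proof essentially line for line: both expand $\Pi_{\mathrm{sym}}^{n,d}$ via \Cref{eq:pi-sym}, apply \Cref{prop:jm-prod} at level $n$, resum the trailing factors $(e+X_m)\cdots(e+X_1)$ into $m!\cdot\Pi_{\mathrm{sym}}^{m,d}\otimes (I_d)^{\otimes(n-m)}$ using \Cref{prop:jm-prod} at level $m$, and distribute the ratio $m!/n!$ across the remaining Jucys--Murphy factors. No gap; your remark that no commutation of $X_{m+1},\ldots,X_n$ with $S_m$ is needed is also accurate.
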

\begin{proof}
    By \Cref{eq:pi-sym},
    \begin{align*}
        \Pi_{\mathrm{sym}}^{n,d}
        = \E_{\bpi \sim S_n} [\bpi]
        = \frac{1}{n!}\cdot \sum_{\pi \in S_n} \pi
        &= \frac{1}{n!}\cdot(e + X_n) \cdots (e + X_1)\\
        &= \frac{m!}{n!} \cdot (e + X_n) \cdots (e+ X_{m+1}) \cdot \frac{1}{m!} \sum_{\sigma \in S_{m}} \sigma\\
        &= \frac{m!}{n!}\cdot (e + X_n) \cdots (e+ X_{m+1}) \cdot \E_{\bsigma \sim S_{m}} [\bsigma]\\
        &= \frac{m!}{n!} \cdot (e + X_n) \cdots (e + X_{m+1}) \cdot \Pi_{\mathrm{sym}}^{m,d} \otimes (I_d)^{\otimes (n-m)},
    \end{align*}
    where the third and the fourth equalities used \Cref{prop:jm-prod}.
    This completes the proof.
\end{proof}

\subsection{Representation theory}

To analyze the random purification channel (\Cref{sec:the-channel}) and its subsequent use in analyzing the unbiased estimator (\Cref{sec:gps-mixed-llambda} and \Cref{sec:pgm}), we will need some standard facts from representation theory.
These facts are not necessary for the rest of the document.
For a more thorough treatment of these topics, see~\cite{Wri16}.

\paragraph{Partitions and Young diagrams.}
A \emph{partition of $n$}, denoted $\lambda \vdash n$, is a tuple of integers $\lambda  = (\lambda_1, \ldots, \lambda_k)$ such that $\lambda_1 \geq \cdots \geq \lambda_k \geq 0$ and $\lambda_1 + \cdots + \lambda_k = n$. The \emph{length of $\lambda$}, denoted $\ell(\lambda)$, is equal to the number of nonzero components $\lambda_i$.
Partitions are typically represented pictorially using \emph{Young diagrams}, which consist of boxes arranged into rows of length $\lambda_1$ through $\lambda_k$.
A \emph{standard Young tableau (SYT) $S$ of shape $\lambda$} is a Young diagram of shape $\lambda$ in which each box has been filled in with a number from $[n]$,
with the restriction that the numbers in each row must be strictly increasing from left-to-right
and the numbers in each column must be strictly increasing from top-to-bottom.
A \emph{semistandard Young tableau (SSYT) $T$ of shape $\lambda$ and alphabet $[d]$} is a Young diagram of shape $\lambda$ in which each box has been filled in with a number from $[d]$,
with the restriction that the numbers in each row must be weakly increasing from left-to-right
and the numbers in each column must be strictly increasing from top-to-bottom.
We illustrate these concepts in \Cref{fig:example_Young_tableaux}.

\begin{figure}[h!]
    \centering
    \begin{minipage}{0.35\textwidth}
        \centering
        \begin{ytableau}
              1 & 2 & 5 & 7\\
              3 & 6  \\
              4 
        \end{ytableau}
        \vspace{0.7em} 
    \end{minipage}
    \hspace{1.6em}
    \begin{minipage}{0.35\textwidth}
        \centering
        \begin{ytableau}
            1 & 1 & 1 & 3 \\ 
            2 & 2 \\
            3 
        \end{ytableau}
        \vspace{0.7em}
    \end{minipage}

    \caption{Examples of tableaux of shape $\lambda = (4,2,1)$. Left: an SYT. Right: an SSYT, for $d \geq 3$.}
    \label{fig:example_Young_tableaux}
\end{figure}
\paragraph{Representation theory of the symmetric group.}

The irreducible representations of the symmetric group are indexed by partitions $\lambda \vdash n$ and are written $(\kappa_{\lambda}, \mathrm{Sp}_{\lambda})$, where $\mathrm{Sp}_{\lambda}$ is known as the \emph{Specht module}.
There is a convenient choice of basis for these irreducible representations known as \emph{Young's orthogonal basis},
which gives rise to an explicit choice of the matrices $\kappa_{\lambda}(\pi)$ known as \emph{Young's orthogonal representation}.
For the irreducible representation corresponding to the Young diagram $\lambda$, Young's orthogonal basis has a basis vector $\ket{S}$ for each standard Young tableau $S$ of shape $\lambda$.
These vectors form an orthonormal basis of $\mathrm{Sp}_{\lambda}$.
Furthermore, when written in this basis, the matrices $\kappa_{\lambda}(\pi)$ have only real-valued entries.
For shorthand, we write the dimension of the $\lambda$-irrep as $\dim(\lambda)$.

\paragraph{Representation theory of the general linear group.}
The \emph{general linear group $GL(d)$} is the group consisting of all invertible matrices $M$ in $\C^{d \times d}$.
The polynomial irreducible representations of the general linear group are indexed by partitions $\lambda$ with $\ell(\lambda) \leq d$ and are written $(\nu^d_{\lambda}, V^d_{\lambda})$, where $V^d_{\lambda}$ is known as the \emph{Weyl module}.
Here, ``polynomial'' means that the matrix entries of each representation $\nu^d_{\lambda}(M)$ can be expressed as a polynomial in the matrix entries of $M$.
There is an orthonormal basis of the Weyl module $V_{\lambda}^d$ known as the \emph{Gelfand-Tsetlin basis}, in which the basis vectors $\ket{T}$ are indexed by SSYTs of shape $\lambda$ and alphabet $[d]$.
An important subgroup of $GL(d)$ is the unitary group $U(d)$.
The irreducible representations $(\nu^d_{\lambda}, V^d_{\lambda})$ of $GL(d)$ also form the irreducible representations of $U(d)$.

We will also sometimes want to plug in density matrices $\rho \in \C^{d \times d}$ into the irreducible representation $\nu^d_{\lambda}$. When $\rho$ has positive eigenvalues, $\nu^d_{\lambda}(\rho)$ is clearly well-defined as such a $\rho$ is in $GL(d)$.
But this is also well-defined even when $\rho$ has some eigenvalues which are zero, by remembering that the matrix entries of $\nu^d_{\lambda}(\rho)$ are polynomials in the matrix entries of $\rho$ and are therefore always well-defined.
Equivalently, by continuity, we can view $\nu^d_{\lambda}(\rho)$ as the limit of $\nu^d_{\lambda}(\rho^+)$ for a sequence of matrices $\rho^+$ with positive eigenvalues which approach $\rho$ in the limit.

\paragraph{Schur--Weyl duality.}
Given $\pi \in S_n$, let us recall the representation $P(\pi)$ of $S_n$ which acts on $(\C^d)^{\otimes n}$ as follows:
\begin{equation*}
    P(\pi) \cdot \ket{i_1, \ldots, i_n}
    = \ket*{i_{\pi^{-1}(1)}, \ldots, i_{\pi^{-1}(n)}},
    \quad \text{for all $i_1, \ldots, i_n \in [d]$}.
\end{equation*}
Given $M \in GL(d)$, we can also define a representation $Q^d(M)$ of $GL(d)$ which acts on $(\C^d)^{\otimes n}$, as follows:
\begin{equation*}
    Q^d(M) \cdot \ket{i_1, \ldots, i_n}
    = (M \cdot \ket{i_1}) \otimes \cdots \otimes (M \cdot \ket{i_n}).
\end{equation*}
For all $\pi \in S_n$ and $M \in GL(d)$,
we have that $P(\pi) \cdot Q^d(M) = Q^d(M)\cdot P(\pi)$,
and so these representations commute with each other.
As a result, there is a change of basis in which these representations are simultaneously block-diagonalized into their irreducible representations.
The precise form of this simultaneous block-diagonalization is provided by \emph{Schur--Weyl duality}, which states that there is a unitary $\schur^d$ known as the \emph{Schur transform} such that, for all $\pi \in S_n$ and $M \in GL(d)$,
\begin{equation*}
    \schur^d \cdot P(\pi) Q^d(M) \cdot (\schur^d)^{\dagger}
    = \sum_{\lambda \vdash n, \ell(\lambda) \leq d} \ketbra{\lambda} \otimes \kappa_{\lambda}(\pi) \otimes \nu^d_{\lambda}(M).
\end{equation*}
An efficient algorithm for computing the Schur transform was provided by Bacon, Chuang, and Harrow~\cite{BCH05}.
By the footnote on page 160 of Harrow's Ph.D.\ thesis~\cite{Har05}, it can be computed to $\epsilon$ accuracy in diamond distance in time $\poly(n, \log(d), \log(1/\epsilon))$.
The fact that this efficient algorithm produces the Young orthogonal basis on the symmetric group register was recently shown by Pelecanos, Spilecki, and Wright~\cite[Appendix A]{PSW25}.

While in the Schur basis,
it is natural to measure which irrep $\lambda$ one is in.
Doing so is known as \emph{weak Schur sampling}
and involves performing the projective measurement $\{\Pi_{\lambda}\}_{\lambda \vdash n, \ell(\lambda) \leq d}$ in which
\begin{equation*}
\Pi_{\lambda} \coloneqq \ketbra{\lambda} \otimes I_{\dim(\lambda)} \otimes I_{\dim(V_{\lambda}^d)}.
\end{equation*}

\section{The random purification channel} \label{sec:the-channel}

\newcommand{\epr}{\mathrm{EPR}}

In this section, we formally define the random purification channel $\purifychan^{d,r}$.
Then, we reprove \Cref{thm:acorn}, closely following the treatment given in \cite[Section 2.3]{TWZ25}.
Finally, we prove some additional facts needed to understand the $\mixed^+$ reduction, which we use later in \Cref{sec:gps-mixed-llambda}. 

The purification channel $\purifychan^{d,r}$ converts the state $\rho^{\otimes n}$
into the mixture $\E_{\ket{\brho}} \ketbra{\brho}^{\otimes n}$.
To do so, it is crucial to understand what these two mixed states look like in the Schur basis. For the former, this is easy, as $\rho^{\otimes n}$ is just the state $Q^d(\rho)$. As a result, we can apply Schur--Weyl duality, which states that
\begin{equation}\label{eq:normal-sw-transform}
    \schur^d \cdot \rho^{\otimes n} \cdot (\schur^d)^{\dagger}
    = \sum_{\lambda \vdash n, \ell(\lambda) \leq d} \ketbra{\lambda} \otimes I_{\dim(\lambda)} \otimes \nu^d_{\lambda}(\rho).
\end{equation}
For the latter, this is a bit more challenging.
To begin, let us set up some notation.
Each state $\ket{\brho}$ lives inside $\C^d \otimes \C^r \cong \C^D$, where $D = d \cdot r$. The $n$ $d$-dimensional registers will be denoted $\reg{A_1}, \dots, \reg{A_n}$, and the $n$ $r$-dimensional registers will be denoted $\reg{B_1}, \dots, \reg{B_n}$, though these will often be dropped when clear from context. We will consider what $\E_{\ket{\brho}} \ketbra{\brho}^{\otimes n}$ looks like when we apply a Schur transform $\schur^d$ to the $n$ different $\C^d$ registers and a second Schur transform $\schur^r$ to the $n$ different $\C^r$ registers. We will abbreviate $\schur^d \otimes \schur^r$ as $\schur^{\otimes 2}$. The first Schur transform will produce a basis of the form $\ket{\lambda}_{\reg{Y}} \otimes \ket{S}_{\reg{P}} \otimes \ket{T}_{\reg{Q}}$, where $\reg{Y}$ is the Young diagram register, $\reg{P}$ is the symmetric group register, and $\reg{Q}$ is the unitary group register.
The second Schur transform will produce a basis of the form $\ket{\lambda'}_{\reg{Y}'} \otimes \ket{S'}_{\reg{P}'} \otimes \ket{T'}_{\reg{Q}'}$.  

Our first observation is that $\ketbra{\brho}^{\otimes n}$ is contained inside the symmetric subspace $\lor^n \C^D$, and so $\E_{\ket{\brho}} \ketbra{\brho}^{\otimes n}$ is in $\lor^{n} \C^D$ as well. Towards understanding the random purification channel, we now turn to characterizing~$\lor^n \C^D$. 

\subsection{The symmetric subspace across two registers}

We start by considering the projector onto the symmetric subspace of $(\C^d \otimes \C^r)^{\otimes n}$ in the Schur basis. \cite[Lemma 2.13]{TWZ25} gives a convenient formula for this projector, which we restate and reprove. We will need the following~definition. 

\begin{definition}[Specht module EPR state]
    Let $\lambda \vdash n$. We write $\ket{\epr_{\lambda}}$ for the pure state inside $\mathrm{Sp}_{\lambda} \otimes \mathrm{Sp}_{\lambda}$ given by
    \begin{equation*}
        \ket{\epr_{\lambda}} \coloneqq \frac{1}{\sqrt{\dim(\lambda)}} \cdot \sum_{S} \ket{S} \otimes \ket{S},
    \end{equation*}
    where the sum ranges over all SYTs of shape $\lambda$.
\end{definition}

\begin{lemma}[The projector onto the symmetric subspace in the Schur basis]\label{lem:double-schur-pi-sym}
    \begin{equation} \label{eq:double-schur-pi-sym}
        \schur^{\otimes 2} \cdot \Pi_{\mathrm{sym}}^{n, D} \cdot (\schur^{\otimes 2})^\dagger
        = \sum_{\lambda \vdash n, \ell(\lambda) \leq r} \ketbra{\lambda\lambda}{\lambda\lambda}_{\reg{Y}\reg{Y'}} \otimes \ketbra{\epr_{\lambda}}{\epr_{\lambda}}_{\reg{P}\reg{P'}} \otimes I_{\reg{Q}\reg{Q'}}.
    \end{equation}
\end{lemma}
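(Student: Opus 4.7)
The plan is to reduce the computation to Schur orthogonality. First I would express
\[
\Pi_{\mathrm{sym}}^{n,D} \;=\; \E_{\bpi \sim S_n}[P_D(\bpi)] \;=\; \E_{\bpi \sim S_n}\bigl[P_d(\bpi)_{\reg{A}} \otimes P_r(\bpi)_{\reg{B}}\bigr],
\]
using \Cref{eq:pi-sym} and \Cref{prop:permute-factor}. Now, conjugation by $\schur^{\otimes 2} = \schur^d_{\reg{A}} \otimes \schur^r_{\reg{B}}$ factors as two independent conjugations, and Schur--Weyl duality (with $M = I_d$ on the $\reg{A}$-side and $M = I_r$ on the $\reg{B}$-side) gives
\[
\schur^d P_d(\pi) (\schur^d)^\dagger = \sum_{\lambda} \ketbra{\lambda}_{\reg{Y}} \otimes \kappa_{\lambda}(\pi)_{\reg{P}} \otimes I_{\reg{Q}}, \qquad \schur^r P_r(\pi) (\schur^r)^\dagger = \sum_{\mu} \ketbra{\mu}_{\reg{Y'}} \otimes \kappa_{\mu}(\pi)_{\reg{P'}} \otimes I_{\reg{Q'}},
\]
where $\lambda$ ranges over partitions of $n$ with $\ell(\lambda) \leq d$ and $\mu$ over those with $\ell(\mu) \leq r$.

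Tensoring these and taking the expectation over $\bpi$, everything outside the $\reg{PP'}$ register pulls out, leaving
\[
\schur^{\otimes 2} \cdot \Pi_{\mathrm{sym}}^{n,D} \cdot (\schur^{\otimes 2})^\dagger \;=\; \sum_{\lambda,\mu} \ketbra{\lambda}_{\reg{Y}} \otimes \ketbra{\mu}_{\reg{Y'}} \otimes \E_{\bpi}\bigl[\kappa_{\lambda}(\bpi) \otimes \kappa_{\mu}(\bpi)\bigr]_{\reg{PP'}} \otimes I_{\reg{QQ'}}.
\]
The heart of the argument is now Schur orthogonality for the irreducible unitary representations of $S_n$ in Young's orthogonal basis. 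Because that basis is real, $\overline{\kappa_\mu(\bpi)} = \kappa_\mu(\bpi)$, so Schur orthogonality yields
\[
\E_{\bpi}\bigl[\kappa_\lambda(\bpi)_{S,T}\, \kappa_\mu(\bpi)_{S',T'}\bigr] = \frac{\delta_{\lambda\mu}}{\dim(\lambda)}\,\delta_{S,S'}\,\delta_{T,T'}.
\]
This immediately kills the cross terms $\lambda \neq \mu$ and restricts the sum to $\ell(\lambda) \leq \min(d,r) = r$ (assuming the standing convention $r \leq d$).

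Finally, recognizing the resulting tensor: the matrix with entries $\frac{1}{\dim(\lambda)} \delta_{S,S'}\delta_{T,T'}$ indexed by pairs $(S,S'),(T,T')$ is exactly $\ketbra{\epr_\lambda}$, since
\[
\ketbra{\epr_\lambda} = \frac{1}{\dim(\lambda)} \sum_{S,T} \ket{S}\bra{T} \otimes \ket{S}\bra{T}.
\]
Collecting the diagonal $\lambda = \mu$ terms gives the claimed formula. The main potential snag is bookkeeping: tracking which partition ranges come from which Schur transform (so that the overall range collapses to $\ell(\lambda) \leq r$), and remembering that Young's orthogonal matrices are real so that Schur orthogonality applies without conjugates — but both are routine once set up.
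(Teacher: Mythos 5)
Your proposal is correct and follows essentially the same route as the paper's proof: factor $\Pi_{\mathrm{sym}}^{n,D}$ as $\E_{\bpi}[P_d(\bpi)\otimes P_r(\bpi)]$, apply Schur--Weyl duality to each factor, and collapse the $\reg{PP'}$ operator to $\delta_{\lambda\mu}\ketbra{\epr_\lambda}$ via the grand orthogonality relations together with the real-valuedness of Young's orthogonal representation. The bookkeeping points you flag (the range collapsing to $\ell(\lambda)\leq\min(d,r)=r$, and the reality of the basis removing the conjugate in the orthogonality relation) are handled exactly as you describe.
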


\begin{proof}
    By \Cref{prop:permute-factor},
    \begin{equation*}
        \Pi_{\mathrm{sym}}^{n, D} = \E_{\bpi \sim S_n} \big[P_{\reg{AB}}(\bpi)\big] = \E_{\bpi \sim S_n}\big[ P_{\reg{A}}(\bpi) \otimes P_{\reg{B}}(\bpi)\big].
    \end{equation*}
    If we now Schur transform $\reg{A}$ and $\reg{B}$, we obtain:
    \begin{align}
        & (\schur^d \otimes \schur^r) \cdot \Pi_{\mathrm{sym}}^{n, D} \cdot (\schur^d \otimes \schur^r)^\dagger  \nonumber \\
        & = \E_{\bpi \sim S_n} \Big[\big(\schur^d \cdot P_{\reg{A}}(\bpi) \cdot (\schur^d)^\dagger\big) \otimes \big(\schur^r \cdot P_{\reg{B}}(\bpi) \cdot (\schur^r)^\dagger\big) \Big] \nonumber \\
        & = \E_{\bpi \sim S_n} \Big[\Big(\sum_{\lambda \vdash n, \ell(\lambda) \leq d} \ketbra{\lambda}_{\reg{Y}} \otimes \kappa_\lambda(\bpi)_{\reg{P}} \otimes (I_{\dim(V^d_\lambda)})_{\reg{Q}}\Big) \otimes \Big(\sum_{\mu \vdash n, \ell(\mu) \leq r} \ketbra{\mu}_{\reg{Y'}} \otimes \kappa_\mu(\bpi)_{\reg{P'}} \otimes (I_{\dim(V^r_\mu)})_{\reg{Q'}}\Big)\Big] \nonumber \\
        & = \sum_{\substack{\lambda \vdash n, \ell(\lambda) \leq d \\ \mu \vdash n, \ell(\mu) \leq r}} \ketbra{\lambda}_{\reg{Y}} \otimes \ketbra{\mu}_{\reg{Y'}} \otimes \E_{\bpi \sim S_n} \big[\kappa_\lambda(\bpi)_{\reg{P}} \otimes \kappa_\mu(\bpi)_{\reg{P'}}\big] \otimes (I_{\dim(V^d_\lambda)})_{\reg{Q}} \otimes (I_{\dim(V^r_\mu)})_{\reg{Q'}}. \label{eq:averaging_1}
    \end{align}
    We can simplify the operator acting on $\reg{PP'}$ using the grand orthogonality relations:
    \begin{align*}
        \E_{\bpi \sim S_n} \big[\kappa_\lambda(\bpi)_{\reg{P}} \otimes \kappa_\mu(\bpi)_{\reg{P'}}\big] & = \sum_{S, S', S'', S'''} \E_{\bpi \sim S_n} \big[\kappa_\lambda(\bpi)_{S,S''} \cdot \kappa_{\mu}(\bpi)_{S',S'''}\big] \cdot \ketbra{S}{S''}_{\reg{P}} \otimes \ketbra{S'}{S'''}_{\reg{P'}} \\
        & = \sum_{S, S', S'', S'''} \E_{\bpi \sim S_n} \big[\overline{\kappa_\lambda(\bpi)_{S,S''}} \cdot \kappa_{\mu}(\bpi)_{S',S'''}\big] \cdot \ketbra{S}{S''}_{\reg{P}} \otimes \ketbra{S'}{S'''}_{\reg{P'}} \\
        & = \sum_{S, S', S'', S'''} \frac{1}{\dim(\lambda)} \cdot \delta_{\lambda, \mu} \cdot \delta_{S,S'} \cdot \delta_{S'',S'''} \cdot \ketbra{S}{S''}_{\reg{P}} \otimes \ketbra{S'}{S'''}_{\reg{P'}} \\
        & = \delta_{\lambda,\mu} \cdot \ketbra{\epr_\lambda}_{\reg{PP'}}.
    \end{align*}
    In the second equality, we have also used the fact that Young's orthogonal representation is real-valued. Plugging this back into \Cref{eq:averaging_1} yields \Cref{eq:double-schur-pi-sym}.
\end{proof}

For any pure state $\ket{u} \in \C^D$, the state $\ketbra{u}^{\otimes n} \in \lor^n \C^D$. It will be useful for us to understand what this state looks like in the Schur basis as well. 

\begin{lemma}[Formula for $\ketbra{u}^{\otimes n}$ in the Schur basis]
\label{lem:double_schur_transform_pure_state}
    Let $\ket{u}_{\reg{A_1}\reg{B_1}} \in \C^d \otimes \C^r \cong \C^D$. Then 
    \begin{equation} \label{eq:double_schur_n_fold_pure_state}
        \schur^{\otimes 2} \cdot \ketbra{u}^{\otimes n}_{\reg{AB}} \cdot (\schur^{\otimes 2})^\dagger = \sum_{\substack{\lambda \vdash n, \ell(\lambda)\leq r \\ \mu \vdash n, \ell(\mu) \leq r}} \ketbra{\lambda \lambda}{\mu \mu}_{\reg{YY'}} \otimes \ketbra{\epr_\lambda}{\epr_{\mu}}_{\reg{PP'}} \otimes \ketbra{u_{\lambda \lambda}}{u_{\mu \mu}}_{\reg{QQ'}},
    \end{equation}
    where $\ket{u_{\lambda \lambda}}_{\reg{QQ'}}$ is some (unnormalized) vector in $V^d_\lambda \otimes V^r_\lambda$. Moreover, for each $\lambda$, \begin{equation}\label{eq:double_schur_n_fold_pure_state_partial_trace_restriction}
        \tr_{\reg{Q'}}( \ketbra{u_{\lambda \lambda}}_{\reg{QQ'}}) = \dim(\lambda) \cdot \nu_\lambda^d(\sigma_u)_{\reg{Q}},
    \end{equation}
    where $\sigma_u = \tr_{\reg{B_1}}(\ketbra{u}_{\reg{A_1B_1}})$.
\end{lemma}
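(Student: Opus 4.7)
The plan is to deduce \eqref{eq:double_schur_n_fold_pure_state} directly from \Cref{lem:double-schur-pi-sym} and then prove \eqref{eq:double_schur_n_fold_pure_state_partial_trace_restriction} by taking the partial trace of \eqref{eq:double_schur_n_fold_pure_state} over $\reg{Y'P'Q'}$ and comparing with Schur--Weyl duality applied to $\sigma_u^{\otimes n}$.

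For \eqref{eq:double_schur_n_fold_pure_state}, I observe that $\ket{u}^{\otimes n}$ lies in $\lor^n \C^D$, so $\schur^{\otimes 2}\ket{u}^{\otimes n}$ lies in the range of the projector computed in \Cref{lem:double-schur-pi-sym}. Since that projector factors as $\sum_\lambda \ketbra{\lambda\lambda}_{\reg{YY'}} \otimes \ketbra{\epr_\lambda}_{\reg{PP'}} \otimes I_{\reg{QQ'}}$, its range consists precisely of vectors of the form
\begin{equation*}
    \sum_{\lambda \vdash n,\, \ell(\lambda) \leq r} \ket{\lambda\lambda}_{\reg{YY'}} \otimes \ket{\epr_\lambda}_{\reg{PP'}} \otimes \ket{u_{\lambda\lambda}}_{\reg{QQ'}}
\end{equation*}
for some (unnormalized) vectors $\ket{u_{\lambda\lambda}} \in V^d_\lambda \otimes V^r_\lambda$. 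Forming the outer product of this expansion with itself yields \eqref{eq:double_schur_n_fold_pure_state}.

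For \eqref{eq:double_schur_n_fold_pure_state_partial_trace_restriction}, I take the partial trace of both sides of \eqref{eq:double_schur_n_fold_pure_state} over $\reg{Y'P'Q'}$. On the right-hand side, $\tr_{\reg{Y'}}(\ketbra{\lambda\lambda}{\mu\mu}) = \delta_{\lambda\mu}\ketbra{\lambda}$ collapses the double sum to the diagonal, while a direct computation from the definition of $\ket{\epr_\lambda}$ gives $\tr_{\reg{P'}}(\ketbra{\epr_\lambda}) = I_{\dim(\lambda)}/\dim(\lambda)$, so the partial trace of the right-hand side becomes
\begin{equation*}
    \sum_{\lambda \vdash n,\, \ell(\lambda) \leq r} \ketbra{\lambda}_{\reg{Y}} \otimes \frac{I_{\dim(\lambda)}}{\dim(\lambda)} \otimes \tr_{\reg{Q'}}\bigl(\ketbra{u_{\lambda\lambda}}\bigr)_{\reg{Q}}.
\end{equation*}
On the left-hand side, because $\schur^r$ is unitary and acts entirely on the subregisters being traced out, the partial trace passes through it, giving $\schur^d \cdot \tr_{\reg{B}}(\ketbra{u}^{\otimes n}) \cdot (\schur^d)^\dagger = \schur^d \cdot \sigma_u^{\otimes n} \cdot (\schur^d)^\dagger$. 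By Schur--Weyl duality as in \eqref{eq:normal-sw-transform}, this equals $\sum_\lambda \ketbra{\lambda}_{\reg{Y}} \otimes I_{\dim(\lambda)} \otimes \nu^d_\lambda(\sigma_u)_{\reg{Q}}$. Matching the coefficient of $\ketbra{\lambda}_{\reg{Y}} \otimes I_{\dim(\lambda)}$ between the two expressions then yields \eqref{eq:double_schur_n_fold_pure_state_partial_trace_restriction}.

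I do not anticipate a serious obstacle here: the argument is essentially bookkeeping that pairs \Cref{lem:double-schur-pi-sym} with a single invocation of Schur--Weyl duality, together with the elementary identity $\tr_{\reg{P'}}(\ketbra{\epr_\lambda}) = I_{\dim(\lambda)}/\dim(\lambda)$. The only subtle step is that the partial trace over the Schur-transformed $\reg{B}$ register is unaffected by the $\schur^r$ rotation inside that register; I would justify this by noting that discarding a subsystem is basis-independent, so unitary rotations acting solely on the discarded subsystem cancel inside the trace.
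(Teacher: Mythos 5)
Your proposal is correct and follows essentially the same route as the paper's proof: equation \eqref{eq:double_schur_n_fold_pure_state} is forced by the factored form of the symmetric-subspace projector from \Cref{lem:double-schur-pi-sym}, and \eqref{eq:double_schur_n_fold_pure_state_partial_trace_restriction} is obtained by computing $\schur^d \cdot \tr_{\reg{B}}(\ketbra{u}^{\otimes n}) \cdot (\schur^d)^\dagger$ in two ways (via Schur--Weyl duality on $\sigma_u^{\otimes n}$ and via the partial trace of the double-Schur expansion) and matching the $\reg{Q}$-register operators. The one step you flag as subtle---that the $\schur^r$ rotation cancels inside the trace over the discarded register---is exactly the justification the paper relies on as well.
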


\begin{proof}
    Since $\ket{u}^{\otimes n}_{\reg{AB}} \in \lor^n \C^D$, we have $\Pi_{\mathrm{sym}}^{n,D} \cdot \ket{u}^{\otimes n}_{\reg{AB}} = \ket{u}^{\otimes n}_{\reg{AB}}$, and thus 
    \begin{align*}
        \schur^{\otimes 2} \cdot \ket{u}^{\otimes n}_{\reg{AB}} & = \Big(\schur^{\otimes 2} \cdot \Pi_{\mathrm{sym}}^{n,D} \cdot (\schur^{\otimes 2})^\dagger\Big) \cdot  \Big(\schur^{\otimes 2} \cdot \ket{u}^{\otimes n}_{\reg{AB}}\Big) \\
        & = \Big(\sum_{\lambda \vdash n, \ell(\lambda) \leq r} \ketbra{\lambda \lambda}_{\reg{YY'}} \otimes \ketbra{\epr_\lambda}_{\reg{PP'}} \otimes I_{\reg{QQ'}} \Big) \cdot \Big(\schur^{\otimes 2} \cdot \ket{u}_{\reg{AB}}^{\otimes n}\Big) \tag{\Cref{lem:double-schur-pi-sym}} \\ 
        & = \sum_{\lambda \vdash n, \ell(\lambda) \leq r} \ket{\lambda\lambda}_{\reg{YY'}} \otimes \ket{\epr_\lambda}_{\reg{PP'}} \otimes \ket{u_{\lambda \lambda}}_{\reg{QQ'}},
    \end{align*}
    where $\ket{u_{\lambda \lambda}}_{\reg{QQ'}}$ is some (unnormalized) vector in $V^d_\lambda \otimes V^r_\lambda$. Thus, 
    \begin{align*}
        \schur^{\otimes 2} \cdot \ketbra{u}_{\reg{AB}}^{\otimes n} \cdot ( \schur^{\otimes 2} )^\dagger = \sum_{\substack{\lambda \vdash n, \ell(\lambda) \leq r \\ \mu \vdash n, \ell(\mu) \leq r}} \ketbra{\lambda \lambda}{\mu \mu}_{\reg{YY'}} \otimes \ketbra{\epr_\lambda}{\epr_\mu}_{\reg{PP'}} \otimes \ketbra{u_{\lambda \lambda}}{u_{\mu \mu}}_{\reg{QQ'}}. 
    \end{align*}
    This proves \Cref{eq:double_schur_n_fold_pure_state}. To show \Cref{eq:double_schur_n_fold_pure_state_partial_trace_restriction}, we first note that 
\begin{align}
    \schur^d \cdot \tr_{\reg{B}}( \ketbra{u}^{\otimes n} ) \cdot (\schur^d)^\dagger & = \schur^d \cdot \sigma_u^{\otimes n} \cdot (\schur^d)^\dagger \nonumber \\
    & = \sum_{\lambda \vdash n, \ell(\lambda) \leq r} \ketbra{\lambda}_{\reg{Y}} \otimes (I_{\dim(\lambda)})_{\reg{P}} \otimes \nu_\lambda^d(\sigma_u)_{\reg{Q}}.\label{eq:double_schur_eq_1}
\end{align} 
However, we also have 
\begin{align}
    \schur^d \cdot \tr_{\reg{B}}( \ketbra{u}^{\otimes n} ) \cdot (\schur^d)^\dagger & = \tr_{\reg{B}} \Big(  (\schur^d \otimes \schur^r) \cdot  \ketbra{u}^{\otimes n}  \cdot (\schur^d \otimes \schur^r)^\dagger\Big) \nonumber \\
    & = \tr_{\reg{B}} \Big( \sum_{\lambda, \mu} \ketbra{\lambda \lambda}{\mu \mu}_{\reg{YY'}} \otimes \ketbra{\epr_\lambda}{\epr_{\mu}}_{\reg{PP'}} \otimes \ketbra{u_{\lambda \lambda}}{u_{\mu \mu}}_{\reg{QQ'}} \Big) \nonumber \\
    & = \sum_{\lambda \vdash n, \ell(\lambda) \leq r} \ketbra{\lambda} \otimes \Big( \frac{I_{\dim(\lambda)}}{\dim(\lambda)} \Big) \otimes \tr_{\reg{Q'}} (\ketbra{u_{\lambda \lambda}}_{\reg{QQ'}}). \label{eq:double_schur_eq_2}
\end{align}
\Cref{eq:double_schur_n_fold_pure_state_partial_trace_restriction} then follows by comparing the matrices acting on $\reg{Q}$ in \Cref{eq:double_schur_eq_1,eq:double_schur_eq_2}.
\end{proof}

\subsection{A formula for a many-copy random purification}

We now apply the results from the previous section to the random purification channel. We start by understanding the output of the purification channel, $\E_{\ket{\brho}} \ketbra{\brho}^{\otimes n}$, in the Schur basis, using the fact that this state is in the symmetric subspace. 
\cite[Lemma 2.16]{TWZ25} gives us the following formula. 

\begin{lemma}[Random purification formula]\label{lem:final-formula}
Let $\rho_{\reg{A_1}} \in \C^{d \times d}$ be a mixed state, and let $\ket{\rho_0}_{\reg{A_1B_1}} \in \C^d \otimes \C^r \cong \C^D$ be a fixed purification of $\rho$. Recall that a random purification $\ket{\brho}_{\reg{A_1B_1}}$ is obtained from $\ket{\rho_0}$ by applying a Haar random unitary $\bU \in U(r)$ to the purifying register, i.e.\ $\ket{\brho}_{\reg{A_1B_1}} = (I_d \otimes \bU) \cdot \ket{\rho_0}_{\reg{A_1B_1}}$. Then 
\begin{equation*} 
    \schur^{\otimes 2} \cdot\Big(\E_{\ket{\brho}} \ketbra{\brho}_{\reg{AB}}^{\otimes n}\Big) \cdot (\schur^\dagger)^{\otimes 2}
        = \sum_{\lambda \vdash n, \ell(\lambda)\leq r} \dim(\lambda) \cdot \ketbra{\lambda\lambda}_{\reg{Y} \reg{Y'}} \otimes \ketbra{\epr_{\lambda}}_{\reg{P}\reg{P'}} \otimes \nu^d_{\lambda}(\rho)_{\reg{Q}} \otimes \Big(\frac{I_{\dim(V_{\lambda}^r)}}{\dim(V_{\lambda}^r)}\Big)_{\reg{Q'}}.
\end{equation*}
\end{lemma}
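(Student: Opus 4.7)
The plan is to reduce everything to the single-copy setting by exploiting the fact that $\ket{\brho}$ is obtained from a fixed $\ket{\rho_0}$ by a Haar-random unitary on the purification register, and then applying the previous lemma along with standard Haar twirling.

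First, I would write $\ket{\brho}^{\otimes n}_{\reg{AB}} = (I_d \otimes \bU)^{\otimes n} \ket{\rho_0}^{\otimes n}_{\reg{AB}}$, so that
\[
    \E_{\ket{\brho}} \ketbra{\brho}^{\otimes n}_{\reg{AB}} = \E_{\bU}\bigl[ (I_d \otimes \bU)^{\otimes n}_{\reg{AB}} \cdot \ketbra{\rho_0}^{\otimes n}_{\reg{AB}} \cdot (I_d \otimes \bU^\dagger)^{\otimes n}_{\reg{AB}} \bigr].
\]
The unitary $(I_d \otimes \bU)^{\otimes n}$ acts trivially on the $\reg{A}$ registers and as $\bU^{\otimes n}$ on the $\reg{B}$ registers, so after conjugating by $\schur^{\otimes 2}$, Schur--Weyl duality on $\reg{B}$ gives $\schur^r \cdot \bU^{\otimes n} \cdot (\schur^r)^\dagger = \sum_\mu \ketbra{\mu}_{\reg{Y'}} \otimes I_{\reg{P'}} \otimes \nu^r_\mu(\bU)_{\reg{Q'}}$.

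Next, I would plug in \Cref{lem:double_schur_transform_pure_state} to expand $\schur^{\otimes 2} \ketbra{\rho_0}^{\otimes n} (\schur^{\otimes 2})^\dagger$ in the Schur basis. Sandwiching this expansion between the block-diagonalized Haar unitaries, the $\reg{YY'P P'}$ labels pick out a term indexed by $(\lambda,\mu)$, while the $\reg{Q'}$ register is acted on by $\nu^r_\lambda(\bU) \otimes \nu^r_\mu(\bU)^\dagger$ (with nothing happening on $\reg{Q}$).

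Taking the expectation over $\bU$ is then the main representation-theoretic step. By Schur orthogonality (together with the fact that $\nu^r_\lambda$ and $\nu^r_\mu$ are inequivalent irreps when $\lambda \neq \mu$), the off-diagonal terms $\mu \neq \lambda$ vanish. For the diagonal terms, the twirl $X \mapsto \E_{\bU}[\nu^r_\lambda(\bU) X \nu^r_\lambda(\bU)^\dagger]$ on $\reg{Q'}$ is the depolarizing channel on $V^r_\lambda$, so
\[
    \E_{\bU}\bigl[(I_{\reg{Q}} \otimes \nu^r_\lambda(\bU)_{\reg{Q'}}) \ketbra{(\rho_0)_{\lambda\lambda}}_{\reg{QQ'}} (I_{\reg{Q}} \otimes \nu^r_\lambda(\bU)^\dagger_{\reg{Q'}})\bigr] = \tr_{\reg{Q'}}\!\bigl(\ketbra{(\rho_0)_{\lambda\lambda}}_{\reg{QQ'}}\bigr) \otimes \frac{I_{V^r_\lambda}}{\dim(V^r_\lambda)}.
\]

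Finally, I would apply \Cref{eq:double_schur_n_fold_pure_state_partial_trace_restriction} with $u = \rho_0$ and $\sigma_u = \tr_{\reg{B_1}}(\ketbra{\rho_0}) = \rho$ to rewrite $\tr_{\reg{Q'}}(\ketbra{(\rho_0)_{\lambda\lambda}}_{\reg{QQ'}}) = \dim(\lambda) \cdot \nu^d_\lambda(\rho)_{\reg{Q}}$. Substituting this back yields exactly the claimed formula. The only step that requires care is ensuring that the Haar-twirling identity is applied register-by-register correctly and that inequivalent irreps produce zero cross terms; everything else is bookkeeping in the Schur basis. I do not expect any substantive obstacle beyond this twirling argument, since the hard combinatorial work is already encapsulated in \Cref{lem:double_schur_transform_pure_state}.
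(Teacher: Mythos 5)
Your proposal is correct and follows essentially the same route as the paper's proof: expand $\ketbra{\rho_0}^{\otimes n}$ in the Schur basis via \Cref{lem:double_schur_transform_pure_state}, kill the off-diagonal $(\lambda,\mu)$ blocks by Schur's lemma, replace the diagonal twirl on $\reg{Q'}$ by the depolarizing channel, and finish with \Cref{eq:double_schur_n_fold_pure_state_partial_trace_restriction}. No gaps.
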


\begin{proof}[Proof of \Cref{lem:final-formula}]
    First, we have 
    \begin{equation*}
        \E_{\ket{\brho}} \ketbra{\brho}^{\otimes n}_{\reg{AB}} = \E_{\bU \sim \mathrm{Haar}} \big[ \bU_{\reg{B}}^{\otimes n} \cdot \ketbra{\rho_0}^{\otimes n}_{\reg{AB}} \cdot (\bU_{\reg{B}}^{\otimes n})^\dagger \big].
    \end{equation*}
    We will now rewrite everything in the Schur basis. \Cref{lem:double_schur_transform_pure_state} shows us how to rewrite $\ketbra{\rho_0}^{\otimes n}$:
    \begin{equation*}
        \schur^{\otimes 2} \cdot \ketbra{\rho_0}^{\otimes n}_{\reg{AB}} \cdot (\schur^{\otimes 2})^\dagger = \sum_{\substack{\lambda \vdash n, \ell(\lambda) \leq r \\ \mu \vdash n, \ell(\mu) \leq r}} \ketbra{\lambda \lambda}{\mu \mu}_{\reg{YY'}} \otimes \ketbra{\epr_{\lambda}}{\epr_{\mu}}_{\reg{PP'}} \otimes \ketbra{\rho_{0, \lambda \lambda}}{\rho_{0, \mu \mu}}_{\reg{QQ'}},\end{equation*}
    for some unnormalized vectors $\{\ket{\rho_{0,\lambda\lambda}} \}_{\lambda}$. 
    Conjugating this state by a Haar random unitary yields, in the Schur basis, 
    \begin{align}
     & \schur^{\otimes 2} \cdot\Big(\E_{\ket{\brho}} \ketbra{\brho}_{\reg{AB}}^{\otimes n}\Big) \cdot (\schur^\dagger)^{\otimes 2} \nonumber \\
     & = \sum_{\substack{\lambda \vdash n, \ell(\lambda) \leq r \\ \mu \vdash n, \ell(\mu) \leq r}} \ketbra{\lambda \lambda}{\mu \mu}_{\reg{YY'}} \otimes \ketbra{\epr_{\lambda}}{\epr_{\mu}}_{\reg{PP'}} \otimes \E_{\bU \sim \mathrm{Haar}} \big[ \nu_{\lambda}^r(\bU)_{\reg{Q'}} \cdot \ketbra{\rho_{0, \lambda \lambda}}{\rho_{0, \mu \mu}}_{\reg{QQ'}} \cdot \nu_{\mu}^r(\bU)^\dagger_{\reg{Q'}}\big]. \label{eq:random_purification_formula_eq_1}
    \end{align}
    Let us focus on the operator in the unitary registers. This operator is a linear combination of terms of the form 
    \begin{equation*} 
        \E_{\bU \sim \mathrm{Haar}} \big[ \nu_{\lambda}^r(\bU)_{\reg{Q'}} \cdot \ketbra{T}{T''}_{\reg{Q}} \otimes \ketbra{T'}{T'''}_{\reg{Q'}} \cdot \nu_{\mu}^r(\bU)^\dagger_{\reg{Q'}}\big] = \ketbra{T}{T''}_{\reg{Q}} \otimes \E_{\bU \sim \mathrm{Haar}} \big[ \nu^r_\lambda(\bU) \cdot \ketbra{T'}{T'''} \cdot \nu^r_\mu(\bU)^\dagger]_{\reg{Q'}}.
    \end{equation*}
    Here, $T, T''$ are SSYTs of shape $\lambda$ and $\mu$, respectively, with alphabet $[d]$. Meanwhile, $T', T'''$ are SSYTs of shape $\lambda$ and $\mu$, respectively, with alphabet $[r]$. The operator
    \begin{equation*}
        \E_{\bU \sim \mathrm{Haar}} \big[ \nu^r_\lambda(\bU) \cdot \ketbra{T'}{T'''} \cdot \nu^r_\mu(\bU)^\dagger]_{\reg{Q'}}
    \end{equation*}
    is an intertwiner for $\nu_\lambda^r$ and $\nu_\mu^r$, and so by Schur's lemma is nonzero only when $\lambda = \mu$. In this case, we have 
    \begin{align*}
        \E_{\bU \sim \mathrm{Haar}} \big[ \nu^r_\lambda(\bU) \cdot \ketbra{T'}{T'''} \cdot \nu^r_\lambda(\bU)^\dagger]_{\reg{Q'}} & = \tr(\E_{\bU \sim \mathrm{Haar}} \big[ \nu^r_\lambda(\bU) \cdot \ketbra{T'}{T'''} \cdot \nu^r_\lambda(\bU)^\dagger \big]) \cdot \Big(\frac{I_{\dim(V^r_\lambda)}}{\dim(V^r_\lambda)}\Big)_{\reg{Q'}} \\
        & = \tr( \ketbra{T'}{T'''} ) \cdot \Big(\frac{I_{\dim(V^r_\lambda)}}{\dim(V^r_\lambda)}\Big)_{\reg{Q'}}.
    \end{align*}
    Thus, twirling by a Haar random unitary maps 
    $\ketbra{T}{T''}_{\reg{Q}} \otimes \ketbra{T'}{T'''}_{\reg{Q'}}$ to
    \begin{equation*}
    \delta_{\lambda,\mu} \cdot \ketbra{T}{T''}_{\reg{Q}} \otimes \tr(\ketbra{T'}{T'''}) \cdot \Big(\frac{I_{\dim(V^r_\lambda)}}{\dim(V^r_\lambda)}\Big)_{\reg{Q'}}  = \delta_{\lambda,\mu} \cdot \tr_{\reg{Q'}}( \ketbra{T}{T''}_{\reg{Q}} \otimes \ketbra{T'}{T'''}_{\reg{Q'}}) \otimes \Big(\frac{I_{\dim(V^r_\lambda)}}{\dim(V^r_\lambda)}\Big)_{\reg{Q'}}.\end{equation*}
Substituting this back into \Cref{eq:random_purification_formula_eq_1} then gives:
    \begin{equation*}
        \eqref{eq:random_purification_formula_eq_1} = \sum_{\lambda \vdash n, \ell(\lambda) \leq r } \ketbra{\lambda \lambda}{\lambda \lambda}_{\reg{YY'}} \otimes \ketbra{\epr_{\lambda}}{\epr_{\lambda}}_{\reg{PP'}} \otimes \tr_{\reg{Q'}}( \ketbra{\rho_{0,\lambda\lambda}}_{\reg{QQ'}}) \otimes\Big(\frac{I_{\dim(V^r_\lambda)}}{\dim(V^r_\lambda)}\Big)_{\reg{Q'}}.
    \end{equation*}
    Finally, by \Cref{lem:double_schur_transform_pure_state}, we have $\tr_{\reg{Q'}}(\ketbra{\rho_{0,\lambda\lambda}}_{\reg{QQ'}}) = \dim(\lambda) \cdot \nu_\lambda^d(\rho)_{\reg{Q}}$. Plugging this into the above expression yields:
    \begin{equation*} 
    \schur^{\otimes 2} \cdot\Big(\E_{\ket{\brho}} \ketbra{\brho}_{\reg{AB}}^{\otimes n}\Big) \cdot (\schur^\dagger)^{\otimes 2}
        = \sum_{\lambda \vdash n, \ell(\lambda)\leq r} \dim(\lambda) \cdot \ketbra{\lambda\lambda}_{\reg{Y} \reg{Y'}} \otimes \ketbra{\epr_{\lambda}}_{\reg{P}\reg{P'}} \otimes \nu^d_{\lambda}(\rho)_{\reg{Q}} \otimes \Big(\frac{I_{\dim(V_{\lambda}^r)}}{\dim(V_{\lambda}^r)}\Big)_{\reg{Q'}}.
\end{equation*}
This completes the proof. \qedhere

\end{proof}

\subsection{The random purification channel}

We now define the random purification channel $\purifychan^{d,r}$. 

{
\floatstyle{boxed} 
\restylefloat{figure}
\begin{figure}[H]
Given $n$ copies of $\rho$:
\begin{enumerate}
    \item Apply the Schur transform $\schur^d$ to $\rho^{\otimes n}$.
    \item \label{item:intro-2} Perform weak Schur sampling. Letting $\blambda$ be the outcome, the state collapses to
    \begin{equation*}
        \rho|_{\blambda} \coloneqq \ketbra{\blambda}_{\reg{Y}} \otimes \Big(\frac{I_{\dim(\blambda)}}{\dim(\blambda)}\Big)_{\reg{P}} \otimes \Big(\frac{\nu^d_{\lambda}(\rho)}{s^d_{\lambda}(\rho)}\Big)_{\reg{Q}}.
    \end{equation*}
    \item \label{item:intro-y} Introduce a new register $\reg{Y'}$ and copy $\blambda$ into it.
    \item \label{item:intro-p}  Introduce a new register $\reg{P}'$. Discard the contents of $\reg{P}$ and place the state $\ket{\mathrm{EPR}_{\blambda}}_{\reg{P}\reg{P}'}$ into these registers.
    \item \label{item:intro-q}  Introduce a new register $\reg{Q}'$ initialized to the maximally mixed state $I_{\dim(V^r_{\blambda})}/\dim(V^r_{\blambda})$.
    \item Apply the inverse Schur transform $(\schur^d \otimes \schur^r)^\dagger$ and output the resulting state.
\end{enumerate}
\caption{The random purification channel $\purifychan^{d,r}$.}
\label{fig:purify-channel}
\end{figure}
}

By \cite[Lemma 2.11]{TWZ25}, the random purification channel satisfies
\begin{equation*}
    \purifychan^{d,r}(\rho^{\otimes n}) = \E_{\ket{\brho}} \ketbra{\brho}^{\otimes n}.
\end{equation*}
Furthermore,
it can be computed to $\delta$ error in time $\poly(n, \log(d), \log(1/\delta))$.  We now reprove these results, for completeness. 

\begin{proof}[Proof of \Cref{thm:acorn}]
    To show $\purifychan^{d,r}(\rho^{\otimes n})$ prepares a random purification, we track how our input state changes with each applied step. After Schur transforming, we have the state
    \begin{equation*}
        \sum_{\lambda \vdash n, \ell(\lambda) \leq r} \ketbra{\lambda}_{\reg{Y}} \otimes ( I_{\dim(\lambda)} )_{\reg{P}} \otimes \nu^d_{\lambda}(\rho)_{\reg{Q}}. 
    \end{equation*}
    After weak Schur sampling, we obtain outcome $\blambda \vdash n$ with probability $\dim(\blambda) \cdot s^d_{\blambda}(\rho)$, and state $\rho|_{\blambda}$. Conditioned on $\blambda$, following steps 3--5, we obtain the state
    \begin{equation*}
        \ketbra{\blambda\blambda}_{\reg{YY'}} \otimes \ketbra{\epr_{\blambda}}_{\reg{PP'}} \otimes \Big(\frac{\nu^d_{\blambda}(\rho)}{s^d_{\blambda}(\rho)}\Big)_{\reg{Q}} \otimes \Big( \frac{I_{\dim(V^r_{\blambda})}}{\dim(V^r_{\blambda})} \Big)_{\reg{Q'}}. 
    \end{equation*}
    Averaged over the outcome of weak Schur sampling, we have prepared the state
    \begin{equation*}
        \sum_{\lambda \vdash n, \ell(\lambda) \leq r}\dim(\lambda) \cdot \ketbra{\lambda\lambda}_{\reg{YY'}} \otimes \ketbra{\epr_{\lambda}}_{\reg{PP'}} \otimes \nu^d_{\lambda}(\rho)_{\reg{Q}} \otimes \Big( \frac{I_{\dim(V^r_{\lambda})}}{\dim(V^r_{\lambda})} \Big)_{\reg{Q'}}.
    \end{equation*}
    Thus, by \Cref{lem:final-formula}, the output after the sixth step is
    \begin{equation*}
        \purifychan^{d,r}(\rho^{\otimes n}) = \E_{\ket{\brho}} \ketbra{\brho}^{\otimes n}. 
    \end{equation*}
    We now turn to efficiency. The gate complexity is dominated by the cost of the two Schur transforms, each of which can be computed to $\delta$ accuracy in diamond distance in time $\poly(n, \log(d), \log(1/\delta))$ \cite{Har05}.
\end{proof}

Having defined the random purification channel, there are a couple of properties of it that we need to establish for our mixed state tomography algorithm.
We begin by looking at its application conditioned on the Young diagram $\blambda$.

\begin{notation}
    We will slightly abuse notation by allowing the input to $\purifychan^{d,r}$ to be a state expressed in the Schur basis. In particular, when the input is $\rho|_\lambda$, we have
    \begin{equation*}
        \purifychan^{d,r}(\rho|_{\lambda})
    = \ketbra{\lambda\lambda}{\lambda\lambda}_{\reg{Y} \reg{Y'}} \otimes \ketbra{\epr_{\lambda}}{\epr_{\lambda}}_{\reg{P} \reg{P'}} \otimes \Big(\frac{\nu^d_{\lambda}(\rho)}{s^d_{\lambda}(\rho)}\Big)_{\reg{Q}}  \otimes \Big(\frac{I_{\dim(V_{\lambda}^{r})}}{\dim(V_{\lambda}^{r})}\Big)_{\reg{Q'}}.
    \end{equation*}
    This will be convenient for us in the context of quasi-purification, where we will want to apply $\purifychan^{d,\ell(\blambda)}$, conditioned on the outcome $\blambda$ observed in weak Schur sampling. 
\end{notation}


We now turn to a couple of properties of the random purification channel that we need to establish for our mixed state tomography algorithm.

\begin{lemma}[Partial trace of the purification channel]\label{lem:partial-trace-restriction}
    \begin{equation*}
        \tr_{\reg{Y}'\reg{P}'\reg{Q}'}(\purifychan^{d,r}(\rho|_{\lambda})) = \rho|_{\lambda}.
    \end{equation*}
\end{lemma}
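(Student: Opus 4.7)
The plan is to verify the equality by a direct, register-by-register application of the partial trace, using the explicit Schur-basis formula for $\purifychan^{d,r}(\rho|_\lambda)$ recorded in the notation immediately above the lemma. Since the state is a tensor product of operators on the four registers $\reg{Y}\reg{Y}'$, $\reg{P}\reg{P}'$, $\reg{Q}$, and $\reg{Q}'$, the partial trace $\tr_{\reg{Y}'\reg{P}'\reg{Q}'}(\cdot)$ factors into three independent partial traces, each handled by an elementary identity.

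Concretely, I would proceed in three short steps. First, on the Young-diagram registers I would compute $\tr_{\reg{Y}'}(\ketbra{\lambda\lambda}{\lambda\lambda}_{\reg{Y}\reg{Y}'}) = \ketbra{\lambda}_{\reg{Y}}$, since $\ket{\lambda\lambda} = \ket{\lambda}_{\reg{Y}}\otimes\ket{\lambda}_{\reg{Y}'}$ is a product state and $\braket{\lambda}{\lambda} = 1$. Second, on the Specht-module registers I would use the defining property of the EPR state to obtain
\begin{equation*}
    \tr_{\reg{P}'}(\ketbra{\epr_\lambda}_{\reg{P}\reg{P}'}) \;=\; \frac{1}{\dim(\lambda)}\sum_{S,S'}\ketbra{S}{S'}_{\reg{P}}\cdot\braket{S'}{S}_{\reg{P}'} \;=\; \frac{I_{\dim(\lambda)}}{\dim(\lambda)},
\end{equation*}
which is the maximally mixed state on $\reg{P}$. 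Third, on the Weyl-module purifying register I would use $\tr_{\reg{Q}'}\!\bigl(I_{\dim(V_\lambda^r)}/\dim(V_\lambda^r)\bigr) = 1$, which simply removes that register without disturbing the $\nu_\lambda^d(\rho)/s_\lambda^d(\rho)$ factor on $\reg{Q}$.

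Combining the three computations recovers
\begin{equation*}
    \ketbra{\lambda}_{\reg{Y}}\otimes\Big(\tfrac{I_{\dim(\lambda)}}{\dim(\lambda)}\Big)_{\reg{P}}\otimes \Big(\tfrac{\nu_\lambda^d(\rho)}{s_\lambda^d(\rho)}\Big)_{\reg{Q}},
\end{equation*}
which matches the definition of $\rho|_\lambda$ given in Step~\ref{item:intro-2} of the channel description, completing the proof. There is no genuine obstacle here: the only step one must be slightly careful about is the EPR calculation, where using the fact that Young's orthogonal basis is an orthonormal basis of $\mathrm{Sp}_\lambda$ makes the partial trace evaluate to $I_{\dim(\lambda)}/\dim(\lambda)$ as required.
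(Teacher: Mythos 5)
Your proof is correct and is exactly the intended argument: the paper states this lemma without proof, treating it as immediate from the Schur-basis formula for $\purifychan^{d,r}(\rho|_{\lambda})$ given in the preceding Notation block, and your three register-by-register partial traces (product state on $\reg{Y}\reg{Y}'$, the EPR trace giving $I_{\dim(\lambda)}/\dim(\lambda)$ on $\reg{P}$, and the normalized identity on $\reg{Q}'$ tracing to $1$) are precisely the verification being elided. No gaps.
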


We will also use the following formula, which shows that  weak Schur sampling, when applied to $\rho^{\otimes n}$, is non-destructive, in that it does not perturb the state. 




\begin{lemma}[Average of $\rho|_{\blambda}$]
    \label{lem:blambda-average}
    $
        \E_{\blambda}[\rho|_{\blambda}]= \schur^d \cdot \rho^{\otimes n} \cdot (\schur^d)^{\dagger}.
    $
\end{lemma}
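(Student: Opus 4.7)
The plan is to expand the expectation by directly summing over possible weak Schur sampling outcomes, using the probabilities that were already identified in the proof of \Cref{thm:acorn}. In that proof it was observed that weak Schur sampling on $\schur^d \cdot \rho^{\otimes n} \cdot (\schur^d)^\dagger$ produces outcome $\lambda$ with probability $\dim(\lambda) \cdot s^d_\lambda(\rho)$ and collapses the state to $\rho|_\lambda$, where the notation $s^d_\lambda(\rho)$ refers to the trace of $\nu^d_\lambda(\rho)$ (the Schur polynomial evaluated at the eigenvalues of $\rho$).

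First, I would write out
\begin{equation*}
    \E_\blambda[\rho|_\blambda] = \sum_{\lambda \vdash n, \ell(\lambda) \leq d} \dim(\lambda) \cdot s^d_\lambda(\rho) \cdot \ketbra{\lambda}_{\reg{Y}} \otimes \Big(\frac{I_{\dim(\lambda)}}{\dim(\lambda)}\Big)_{\reg{P}} \otimes \Big(\frac{\nu^d_\lambda(\rho)}{s^d_\lambda(\rho)}\Big)_{\reg{Q}}.
\end{equation*}
The normalizing factors $\dim(\lambda)$ and $s^d_\lambda(\rho)$ in the definition of $\rho|_\lambda$ cancel exactly with the probability of sampling $\lambda$, leaving
\begin{equation*}
    \E_\blambda[\rho|_\blambda] = \sum_{\lambda \vdash n, \ell(\lambda) \leq d} \ketbra{\lambda}_{\reg{Y}} \otimes (I_{\dim(\lambda)})_{\reg{P}} \otimes \nu^d_\lambda(\rho)_{\reg{Q}}.
\end{equation*}

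Finally, I would recognize the right-hand side as exactly the Schur--Weyl decomposition of $\rho^{\otimes n}$ given in \Cref{eq:normal-sw-transform}, which asserts precisely that $\schur^d \cdot \rho^{\otimes n} \cdot (\schur^d)^\dagger$ equals this sum. The claim follows immediately. There is essentially no obstacle here: the lemma is really just the statement that the sampling probability and the normalization in $\rho|_\lambda$ combine to reconstruct the Schur--Weyl block decomposition, which is the general principle that projective measurements with density-weighted outcomes average back to the original state. The only minor care is to make sure that the probability formula and the definition of $\rho|_\lambda$ are stated in consistent normalization conventions, which they are by construction in \Cref{fig:purify-channel}.
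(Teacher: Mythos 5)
Your proof is correct and follows essentially the same route as the paper's: both expand the expectation as the sum $\sum_\lambda \dim(\lambda)\, s^d_\lambda(\rho) \cdot \rho|_\lambda$ using the weak Schur sampling probabilities, and identify this with the Schur--Weyl block decomposition of $\rho^{\otimes n}$ from \Cref{eq:normal-sw-transform}. The only difference is the direction of presentation (you start from the expectation and cancel normalizations; the paper starts from the decomposition and reads off the probabilities), which is immaterial.
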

\begin{proof}
    By \Cref{eq:normal-sw-transform},
    \begin{equation*}
    \schur^d \cdot \rho^{\otimes n} \cdot (\schur^d)^{\dagger}
    = \sum_{\lambda \vdash n, \ell(\lambda) \leq d} \dim(\lambda) \cdot s_{\lambda}(\rho) \cdot \rho|_{\lambda}.
    \end{equation*}
    For each $\lambda$, $\rho|_{\lambda}$ is a density matrix which lives inside the $\lambda$-irrep space.
    Therefore, $\Pr[\blambda = \lambda] = \dim(\lambda) \cdot s_{\lambda}(\rho)$.
    As a result,
    \begin{equation*}
        \E_{\blambda}[\rho|_{\blambda}]
        = \sum_{\lambda \vdash n, \ell(\lambda) \leq d} \Pr[\blambda = \lambda] \cdot \rho|_{\lambda}
        = \schur^d \cdot \rho^{\otimes n} \cdot (\schur^d)^{\dagger}.\qedhere
    \end{equation*}
\end{proof}

\section{Pure state tomography with unentangled measurements}
\label{sec:unentangled}

In this section, we prove \Cref{thm:efficient-pure}, establishing the existence of a gate-efficient and sample-optimal pure state tomography algorithm.
To the best of our knowledge, no result combining these guarantees has been explicitly stated in the literature before, although the special case when $\delta = 1/\poly(d)$ appears in \cite[Appendix C]{HKOT23}.

Throughout this section, we will assume that our pure state is represented by a system of qubits, which allows us to discuss the gate complexity of our algorithm.
A consequence of this is that the overall dimension of our state $d$ will always be a power of two.
This is essentially without loss of generality: any mixed state $\rho$ whose dimension $d$ is not a power of two may be embedded into a Hilbert space of $\lceil \log_2 d \rceil$ qubits. This incurs only constant-factor loss, since the resulting Hilbert space has dimension $2^{\lceil \log_2 d \rceil} \leq 2d$, and since our algorithms will scale polynomially in $d$ in both sample and gate complexity.  

Our starting point is the sample-optimal pure state tomography algorithm given by Guta, Kahn, Kueng, and Tropp~\cite{GKKT20}, which we discussed in \Cref{sec:simpler}. 
Their algorithm uses the uniform POVM $\{d \cdot \ketbra{u} \cdot du\}$, which is continuous and therefore cannot be implemented exactly.
To resolve this, we will replace the uniform POVM with a measurement in a random basis drawn from an (approximate) $t$-design.
These measurements can be implemented efficiently, and the resulting algorithm performs just as well so long as our application only requires using the first $t$ moments of the uniform POVM.


The standard tomography algorithm from \Cref{fig:unentangled-tomography} performs full state tomography when given $n = O(d^3)$ copies of a generic mixed state $\sigma \in \C^{d \times d}$.
It is well-known that this bound only requires using the first two moments of the uniform POVM (see, for example, \cite[Section 5.1]{Wri16}), and so it is common in the literature to see the uniform POVM in this algorithm replaced with a $t$-design for $t = 2$ (or $t = 3$, as is required for some applications).
In our case, however, two features of our algorithm will require us to use $t$-designs with a much higher value of $t$.
First, we want an improved sample complexity of $n = O(d)$ in the case when $\sigma$ is a pure state, and this will require us to take $t = \Omega(\log(d))$ (as was done previously in~\cite{HKOT23}).
Second, we want our sample complexity to have the correct dependence on the parameter $\delta$,
and this will turn out to require taking $t = \Theta(d)$.
We use the following definition for an approximate unitary $t$-design.

\begin{definition}[Approximate unitary $t$-design]
    \label{def:approximate-t-design}
    A distribution $\mathcal{P}$ over $U(d)$ is an \emph{$\epsilon$-approximate unitary $t$-design} if the two mixed unitary channels
    \begin{equation*}
        \mathcal{C}_t: \eta \to \E_{\bV \sim \mathcal{P}} [\bV^{\otimes t} \cdot \eta \cdot \bV^{\dagger, \otimes t}]
        \quad
        \mathrm{and}
        \quad
        \mathcal{H}_t :\eta \to \E_{\bU \sim \mathrm{Haar}} [\bU^{\otimes t} \cdot \eta \cdot \bU^{\dagger, \otimes t}]
    \end{equation*}
    satisfy
    \begin{equation*}
        (1 - \epsilon) \cdot \mathcal{H}_t(\eta) \preceq \mathcal{C}_t(\eta) \preceq (1 + \epsilon) \cdot \mathcal{H}_t(\eta),\qquad
        \mathrm{for~all~PSD}~\eta.
    \end{equation*}
\end{definition}
There is a rich literature on unitary $t$-designs, with many constructions offering different tradeoffs between efficiency, random seed length, and approximation. For the purposes of this section, we will use the construction from O'Donnell, Servedio, and Paredes~\cite{OSP23}, which provides the following guarantee.
\begin{theorem}[Existence of approximate unitary $t$-designs]
    \label{thm:t-design-theorem}
    Let $d$ be a power of two. For any $t$, there exists a distribution $\mathcal{P}$ over $U(d)$ such that:
    \begin{enumerate}
        \item\label{item:apprx-t-design-i} $\mathcal{P}$ is a $\frac{1}{2}$-approximate unitary $t$-design.
        \item\label{item:apprx-t-design-ii} Each unitary in the distribution can be constructed using a $\log_2(d)$-qubit circuit that consists of $\mathrm{poly}(\log_2(d) \cdot t)$ gates from a fixed discrete gate set.
        \item\label{item:apprx-t-design-iii} Computing the circuit that implements a unitary from $\mathcal{P}$ takes $\mathrm{poly}(\log_2(d)\cdot t)$ classical processing.
    \end{enumerate}
\end{theorem}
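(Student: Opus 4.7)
The plan is to obtain \Cref{thm:t-design-theorem} essentially as a black-box consequence of the explicit $t$-design construction of O'Donnell, Servedio, and Paredes~\cite{OSP23}, together with a short bookkeeping argument to match our definition of approximation. Their work constructs, for any $d$ a power of two and any $t$, a distribution over unitaries on $\log_2(d)$ qubits that is implementable by short random circuits with a fixed discrete gate set and that approximates the $t$-fold Haar twirl to high accuracy. My job is mainly to read off the right parameters and check that the approximation notion they provide implies the PSD-ordering notion in \Cref{def:approximate-t-design}.

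First, I would locate the main $t$-design construction of \cite{OSP23} and instantiate it with $\epsilon = 1/2$ and $n = \log_2(d)$ qubits. This yields a distribution $\mathcal{P}$ over $U(d)$ sampled by a $\mathrm{poly}(\log_2(d)\cdot t)$-time classical process, whose output is a circuit of $\mathrm{poly}(\log_2(d)\cdot t)$ gates from a fixed discrete gate set. Points~\ref{item:apprx-t-design-ii} and \ref{item:apprx-t-design-iii} are then immediate from the construction and the running time of the sampler.

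The main content of the proof is point~\ref{item:apprx-t-design-i}, i.e., verifying the multiplicative (PSD-ordering) guarantee. If the \cite{OSP23} theorem already delivers relative-error approximation of the $t$-twirl channel (in the sense of $(1-\epsilon)\mathcal{H}_t \preceq \mathcal{C}_t \preceq (1+\epsilon)\mathcal{H}_t$ on all PSD inputs, which is the strongest standard notion), we are done. If instead their statement is phrased in terms of an additive notion (for instance, diamond distance or operator-norm distance between the moment operators $\E_{\bV\sim\mathcal{P}}\bV^{\otimes t}\otimes \overline{\bV}^{\otimes t}$ and its Haar counterpart), I would use the standard conversion from additive-error $t'$-designs to multiplicative-error $t$-designs at a cost of taking $t' = O(t)$ (or $t' = t + O(\log d)$, depending on the conversion lemma used); this only affects the $\mathrm{poly}(\log_2(d)\cdot t)$ gate count by a polynomial factor, so it is harmless for the final theorem.

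The main obstacle, then, is not any new mathematics but making sure the conversion between approximation notions is stated cleanly and with compatible parameters: the PSD-ordering definition we use is quite stringent, and a naive additive bound between moment operators only gives PSD ordering after amplifying $t$. I would conclude by writing a one-line corollary statement that combines the \cite{OSP23} construction with this conversion to produce the exact distribution $\mathcal{P}$ asserted by \Cref{thm:t-design-theorem}, so that later sections can invoke it as a black box.
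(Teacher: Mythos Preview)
Your plan is correct and matches the paper's high-level approach: invoke the O'Donnell--Servedio--Paredes construction for items~\ref{item:apprx-t-design-ii} and~\ref{item:apprx-t-design-iii}, and then convert their approximation guarantee into the PSD-ordering sense of \Cref{def:approximate-t-design} for item~\ref{item:apprx-t-design-i}. The one place where the paper differs from your sketch is the conversion step. You propose to pass through a larger $t' = O(t)$ (or $t' = t + O(\log d)$) and then invoke an additive-to-multiplicative reduction; this is not quite a standard lemma as stated, and it is not how the paper proceeds. Instead, the paper keeps $t$ fixed and exploits that \cite{OSP23} already delivers an operator-norm bound on the moment operator as small as $\tfrac{1}{2}d^{-3t}$ while remaining in the $\mathrm{poly}(\log_2(d)\cdot t)$ regime. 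It then chains two off-the-shelf lemmas at the same $t$: first \cite[Lemma~2.2.14]{Low10} converts the operator-norm bound into a diamond-norm bound $\|\mathcal{C}_t - \mathcal{H}_t\|_\diamond \leq \tfrac{1}{2}d^{-2t}$, and then \cite[Lemma~3]{BHH16} says diamond distance at most $\epsilon\, d^{-2t}$ implies an $\epsilon$-approximate $t$-design in the multiplicative sense. Your route would land in the same place once made precise, but the paper's chain is cleaner and avoids any ambiguity about which $t'$-to-$t$ reduction is being used.
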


\begin{proof}
    The proof follows by combining~\cite[Theorem 2.11]{OSP23} with the proof of~\cite[Corollary 1.2]{HLT24}. We include an argument below for completeness.

    A first observation is that it suffices to give a distribution over the unitary subgroup $SU(d)$ that satisfies the constraints of~\Cref{thm:t-design-theorem}. This is because any unitary $U$ can be written as the product of an element $U_0$ of $SU(d)$ with a complex number $z$ of unit norm. Then
    \begin{equation*}
        U^{\otimes t} \cdot \eta \cdot U^{\dagger, \otimes t} = z^t\cdot U_0^{\otimes t} \cdot \eta \cdot U_0^{\dagger, \otimes t} \cdot \bar{z}^t = U_0^{\otimes t} \cdot \eta \cdot U_0^{\dagger, \otimes t}.
    \end{equation*}
    In particular,~\cite[Theorem 2.11]{OSP23} implies that there exists some distribution $\mathcal{P}$ over $SU(d)$ that satisfies~\Cref{item:apprx-t-design-ii,item:apprx-t-design-iii} of~\Cref{thm:t-design-theorem} and
    \begin{equation*}
        \norm{\E_{\bV \sim \mathcal{P}} \big[\bV^{\otimes t} \otimes \overline{\bV}^{\otimes t}\big] - \E_{\bU \sim \mathrm{Haar}} \big[\bU^{\otimes t} \otimes \overline{\bU}^{\otimes t}\big]}_{\infty} \leq \frac{1}{2}\cdot d^{-3t}.
    \end{equation*}
    This bound on the operator norm implies a bound on the diamond distance between the channels $\mathcal{C}_t$ and $\mathcal{H}_t$. Formally, Low in~\cite[Lemma 2.2.14, Item 4]{Low10} gives the following inequality:
    \begin{equation*}
        \norm{\mathcal{C}_t - \mathcal{H}_t}_{\diamond} \leq d^t\cdot \norm{\E_{\bV \sim \mathcal{P}} \big[\bV^{\otimes t} \otimes \overline{\bV}^{\otimes t}\big] - \E_{\bU \sim \mathrm{Haar}} \big[\bU^{\otimes t} \otimes \overline{\bU}^{\otimes t}\big]}_{\infty} \leq \frac{1}{2} \cdot d^{-2t}.
    \end{equation*}
    We conclude that $\mathcal{P}$ is a $\frac{1}{2}$-approximate unitary $t$-design using~\cite[Lemma 3]{BHH16}, which states that if the channels $\mathcal{C}_t$ and $\mathcal{H}_t$ satisfy
    \begin{equation*}
        \norm{\mathcal{C}_t - \mathcal{H}_t}_{\diamond} \leq \epsilon\cdot d^{-2t},
    \end{equation*}
    then $\mathcal{P}$ is an $\epsilon$-approximate unitary $t$-design.
\end{proof}

{
\floatstyle{boxed} 
\restylefloat{figure}
\begin{figure}[H]
Given $n$ copies of $\ketbra{v} \in \C^{d \times d}$:
\begin{enumerate}
    \item Let $t = 6d + 1$.
    \item For each $1 \leq i \leq n$:
        \begin{enumerate}
            \item Sample $\bV$ from an approximate unitary $t$-design $\mathcal{P}$ that satisfies~\Cref{thm:t-design-theorem}.
            \item Rotate the $i$-th copy of $\ketbra{v}$ to $\bV\ketbra{v} \bV^{\dagger}$ and measure in the computational basis to obtain the outcome $\ket{\bj}$ for $\bj \in [d]$.
            \item Set $\widehat{\brho}_i = (d+1) \cdot \bV^{\dagger} \ketbra{\bj} \bV - I_d$.
        \end{enumerate}
    \item Let $\widehat{\brho}_{\mathrm{avg}} = \frac{1}{n} \cdot (\widehat{\brho}_1 + \cdots + \widehat{\brho}_n)$.
    \item Output $\ketbra{\widehat{\bv}}$, where $\ket{\widehat{\bv}}$ is the top eigenvector of $\widehat{\brho}_{\mathrm{avg}}$.
\end{enumerate}
\caption{The~\cite{GKKT20} unentangled measurement tomography algorithm for pure states, made time-efficient.}
\label{fig:gkkt-using-t-design}
\end{figure}
}
\noindent

We now analyze the algorithm of Guta, Kahn, Kueng, and Tropp~\cite{GKKT20}, except with the uniform POVM measurement replaced with the approximate unitary $t$-design from \Cref{thm:t-design-theorem}.
See \Cref{fig:gkkt-using-t-design} for this algorithm, and see the discussion around \Cref{fig:unentangled-tomography} for the original algorithm.
We will follow their proof almost identically.
In particular, we first prove a bound on the $k$-th moment.
\begin{lemma}[{Moment bounds of the time-efficient version of \cite{GKKT20}}]
    \label{lem:k-th-moment-bound}
    Let $t = 6d+1$, and let $\calP$ be the approximate $t$-design guaranteed by~\Cref{thm:t-design-theorem}. For $\bV$ sampled from $\mathcal{P}$ and $\bj \in [d]$ being the outcome we obtain when we measure $\bV\ketbra{v} \bV^{\dagger}$ in the computational basis, define
    \begin{equation*}
        \widehat{\brho} = (d+1)\cdot \bV^{\dagger} \ketbra{\bj} \bV - I_d.
    \end{equation*}
    It then holds for all $\ket{z}$ and all integers $k \geq 2$ that
    \begin{equation}
        \label{eq:our-moment-bound}
        \E\big[\abs{\bra{z}(\widehat{\brho} - \ketbra{v}) \ket{z}}^k\big] \leq 41 \cdot 6^{k-2}k!.
    \end{equation}
\end{lemma}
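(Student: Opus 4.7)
The plan is to transfer the uniform-POVM moment bound of \cite[Theorem 5]{GKKT20} to the approximate $t$-design setting.

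First, I would observe that under Haar-random $\bV$, the composition of a random basis change with a computational-basis measurement reproduces the uniform POVM: by left-invariance of Haar measure, the joint distribution of $(\bV,\bj)$ induces the distribution with density $d|\braket{u}{v}|^2\,\dU$ on the unit sphere for $\ket{\bu}:=\bV^\dagger\ket{\bj}$. Hence $\widehat{\brho}=(d+1)\ketbra{\bu}-I_d$ is exactly the GKKT estimator, and in the Haar case the desired bound (\ref{eq:our-moment-bound}) is precisely \cite[Theorem 5]{GKKT20}. The task is therefore to transfer this bound from Haar to $\calP$.

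Second, to do the transfer, I would express the $k$-th moment as a polynomial expectation of bounded degree in the entries of $\bV$. Writing $Y:=|\bra{\bj}\bV\ket{z}|^2$, we have
\begin{equation*}
    \bra{z}\big(\widehat\brho-\ketbra{v}\big)\ket{z} = (d+1)Y - \big(1+|\braket{z}{v}|^2\big).
\end{equation*}
Thus for even $k$, $(\bra{z}(\widehat\brho-\ketbra{v})\ket{z})^k$ is a polynomial of degree $k$ in $\bV$ and $k$ in $\bar\bV$, and once we multiply by the Born probability $|\bra{\bj}\bV\ket{v}|^2$ and sum over $\bj$, we get total degree $k+1$ in each. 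Expanding $X^k$ by the binomial theorem, each individual term $\E_{\bV\sim\calP}[\,|\bra{\bj}\bV\ket{v}|^2\cdot Y^i\,]$ can be rewritten as $\tr\!\big(\calC_{i+1}(\ketbra{v,z^{\otimes i}})\cdot \ketbra{\bj}^{\otimes(i+1)}\big)$, a trace of two PSD operators. Provided $i+1\leq t=6d+1$, i.e.\ $k\leq 6d$, the PSD sandwich in \Cref{def:approximate-t-design} bounds this trace by $\tfrac{3}{2}$ times its Haar counterpart (and likewise from below by $\tfrac{1}{2}$). Reassembling the binomial expansion yields $\E_{\bV\sim\calP}[X^k]$ in terms of $\E_{\bV\sim\mathrm{Haar}}[X^k]$ up to an $O(1)$ factor; odd $k$ is handled by Hölder interpolation $\E|X|^k\leq (\E X^{k-1})^{1/2}(\E X^{k+1})^{1/2}$ between the adjacent even moments.

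Third, for $k>6d$ the $t$-design no longer covers the required degree, but here the trivial estimate $|X|\leq d+3$ together with a Stirling bound on $k!$ shows that $(d+3)^k$ is dominated by $41\cdot 6^{k-2}k!$ in this regime, since $k>6d$ forces $(d+3)/6<k/(e\cdot 6)$ so that the factorial overwhelms the polynomial growth in $d$.

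The main obstacle I anticipate is preserving the specific constant $41$. The $3/2$ slack from the approximate design, the triangle-inequality losses incurred in folding the binomial expansion into an upper bound, and the Hölder interpolation for odd $k$ can each inflate the constant. My plan for avoiding this is to not pass through a generic Haar-to-design transfer, but rather to re-derive the GKKT moment bound term-by-term: identify the explicit low-degree unitary-group integrals that GKKT compute exactly for the uniform POVM, substitute the $\calP$-expectations (bounded multiplicatively by the PSD design guarantee), and track constants through their final sum so that the $41$ absorbs the $O(1)$ loss from the design approximation.
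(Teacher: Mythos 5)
The large-$k$ regime of your plan ($k > 6d$: trivial bound plus Stirling) matches the paper's proof, and your instinct in the final paragraph to re-derive the GKKT computation term-by-term rather than transfer the finished Haar bound is the right one. But the mechanism you describe for $k \leq 6d$ has a genuine gap: after the binomial expansion, the terms $\binom{k}{i}\,((d+1))^i(-(1+\abs{\braket{z}{v}}^2))^{k-i}\,\E[\,\abs{\bra{\bj}\bV\ket{v}}^2 Y^i\,]$ alternate in sign, and a term-wise sandwich $T_i^{\calP} \in [\tfrac12 T_i^{\mathrm{Haar}}, \tfrac32 T_i^{\mathrm{Haar}}]$ does \emph{not} reassemble into ``$\E_{\calP}[X^k]$ equals $\E_{\mathrm{Haar}}[X^k]$ up to an $O(1)$ factor.'' Because $X$ is (nearly) centered, $\E_{\mathrm{Haar}}[X^k]$ is the result of massive cancellation between binomial terms that are individually of order $(d+1)^k\E_{\mathrm{Haar}}[Y^k]$; perturbing each term by a factor in $[\tfrac12,\tfrac32]$ therefore produces an \emph{additive} error of order $\sum_i \abs{c_i} T_i^{\mathrm{Haar}}$, not a multiplicative one. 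That error does turn out to be $O(6^{k-2}k!)$, but establishing this forces you to bound the un-cancelled quantity $(d+1)^k\E_{\mathrm{Haar}}[Y^k]$ via the symmetric-subspace dimension $d[k+1]$ --- i.e.\ you cannot simply cite \cite[Theorem 5]{GKKT20} for the Haar case and transfer; you must redo the uncentered moment computation anyway. (The odd-$k$ interpolation also inflates the constant by $\sqrt{1+1/k}$, so the stated $41\cdot 6^{k-2}k!$ would not survive literally, though this is cosmetic.)

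The paper sidesteps all of this with one move that makes the sign issue disappear: write $\bra{z}(\widehat{\brho}-\ketbra{v})\ket{z} = (d+1)\bra{\bu}B\ket{\bu}$ with $B = \ketbra{z} - \tfrac{1+\abs{\braket{v}{z}}^2}{d+1}I_d$ and $\ket{\bu}=\bV^\dagger\ket{\bj}$, then bound $\abs{\bra{\bu}B\ket{\bu}}^k \leq \tr(\ketbra{\bu}\cdot\abs{B})^k$. Now every operator in sight is PSD, so a \emph{single} application of the design sandwich of \Cref{def:approximate-t-design} (to the degree-$(k+1)$ operator $A=\sum_j \E_{\bV}[\bV^{\dagger,\otimes k+1}\ketbra{j}^{\otimes k+1}\bV^{\otimes k+1}]$, whence $\norm{A}_\infty \leq \tfrac{3d}{2\,d[k+1]}$), combined with H\"older's trace inequality and $\norm{\abs{B}}_1\leq 3$, yields $\E[\abs{\bs_z}^k] \leq \tfrac32\cdot 3^k(k+1)! \leq 41\cdot 6^{k-2}k!$ directly --- no binomial expansion, no sign bookkeeping, and no even/odd case split. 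I would recommend adopting this route; your proposal can be repaired, but only by importing essentially this computation to control the additive error you currently wave away as ``an $O(1)$ factor.''
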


\begin{proof}
    Our first observation is that
    \begin{equation*}
        \abs{\bra{z}(\widehat{\brho} - \ketbra{v}) \ket{z}}^k \leq \norm{\widehat{\brho} - \ketbra{v}}_{\infty}^k \leq (d+1)^k.
    \end{equation*}
    Whenever $k \geq 6d$, the Stirling approximation formula implies that $(d+1)^k \leq (k/e)^k \leq k! \leq 41 \cdot 6^{k-2}k!$, which means that the desired bound holds. Therefore, we will assume that $k \leq 6d$ for the remainder of this proof.
    
    We define $\ket{\bu} = \bV^{\dagger} \ket{\bj}$ for notational brevity.
    Our starting point is an observation of~\cite{GKKT20} that for a fixed $\ket{z}$, one can write
    \begin{align*}
        \bs_z &= \bra{z}(\widehat{\brho} - \ketbra{v}) \ket{z} \\
        &= \bra{z}\big((d+1) \ketbra{\bu} - I_d - \ketbra{v}\big) \ket{z} \\
        &= (d+1)\bra{\bu} \ketbra{z} \ket{\bu} - (1 + \abs{\braket{v}{z}}^2) \\
        &= (d+1)\bra{\bu} B \ket{\bu},
    \end{align*}
    for $B$ equal to $\ketbra{z} - \frac{1 + \abs{\braket{v}{z}}^2}{d+1} \cdot I_d$. This allows us to upper bound $\E[\abs{\bs_z}^k]$ by
    \begin{equation*}
        \E[\abs{\bs_z}^k] = \E[\abs{(d+1) \bra{\bu} B \ket{\bu}}^k] = (d+1)^k\E[\abs{\tr(\ketbra{\bu} \cdot B)}^k] \leq (d+1)^k \E[\tr(\ketbra{\bu} \cdot \abs{B})^k].
    \end{equation*}
    Here, $|B| = \sqrt{B^{\dagger}B}$ stands for the PSD matrix obtained from $B$ by taking the absolute value of its eigenvalues.
    We now expand $\ketbra{\bu}$ using $\bV$ to obtain
    \begin{align*}
        \E[\abs{\bs_z}^k] &\leq (d+1)^k \E[\tr(\ketbra{\bu} \cdot \abs{B})^k] \\
        &= (d+1)^k \E_{\bV \sim \mathcal{P}} \bigg[\sum_{j=1}^d \tr(\bV^{\dagger} \ketbra{j} \bV \cdot \abs{B})^k \cdot \tr(\ketbra{v} \cdot \bV^{\dagger} \ketbra{j} \bV)\bigg] \\
        &= (d+1)^k \E_{\bV \sim \mathcal{P}} \bigg[\sum_{j=1}^d \tr(\bV^{\dagger, \otimes k+1} \cdot \ketbra{j}^{\otimes k+1} \cdot \bV^{\otimes k+1} \cdot \abs{B}^{\otimes k} \otimes \ketbra{v})\bigg] \\
        &= (d+1)^k \tr(\sum_{j=1}^d \E_{\bV \sim \mathcal{P}} \big[\bV^{\dagger, \otimes k+1} \cdot \ketbra{j}^{\otimes k+1} \cdot \bV^{\otimes k+1}\big] \cdot \abs{B}^{\otimes k} \otimes \ketbra{v}).
    \end{align*}
    Let us use $A$ to denote the first factor in the trace:
    \begin{equation*}
        A = \sum_{j=1}^d \E_{\bV \sim \mathcal{P}} \big[\bV^{\dagger, \otimes k+1} \cdot \ketbra{j}^{\otimes k+1} \cdot \bV^{\otimes k+1}\big].
    \end{equation*}
    We bound the expression above using H\"older's inequality:
    \begin{equation*}
        \E[\abs{\bs_z}^k] \leq (d+1)^k \tr(A \cdot \abs{B}^{\otimes k} \otimes \ketbra{v}) \leq (d+1)^k \cdot \norm{A}_{\infty} \cdot \norm{\abs{B}}_1^k \cdot \norm{\ketbra{v}}_1 \leq (d+1)^k \cdot \norm{A}_{\infty} \cdot \norm{\abs{B}}_1^k.
    \end{equation*}
    The trace norm of $|B|$ is at most
    \begin{equation*}
        \norm{|B|}_1 = \norm{B}_1 \leq 1 + \frac{d}{d+1}\cdot (1 + \abs{\braket{v}{z}}^2) \leq 3.
    \end{equation*}
    We proceed to bounding the operator norm of $A$ using the $t$-design property.
    Since we are in the $k \leq 6d$ regime, then for $t = 6d+1$, the random unitary $\bV$ is sampled from an approximate $t$-design such that $t \geq k+1$. In particular:
    \begin{align*}
        \norm{A}_{\infty}
        &\leq \sum_{j=1}^d  \norm{\E_{\bV \sim \mathcal{P}} [\bV^{\dagger, \otimes k+1} \cdot \ketbra{j}^{\otimes k+1} \cdot \bV^{\otimes k+1}]}_{\infty} \\
        &\leq \sum_{j=1}^d \Big(1 + \frac{1}{2}\Big) \cdot\Big \lVert\E_{\bU \sim \mathrm{Haar}} \big[ \bU^{\dagger, \otimes k+1} \cdot \ketbra{j}^{\otimes k+1} \cdot \bU^{\otimes k+1} \big] \Big \rVert_{\infty} \tag{\Cref{thm:t-design-theorem}} \\
        &= \sum_{j=1}^d \frac{3}{2}\cdot \Big \lVert \frac{1}{d[k+1]} \cdot \Pi_{\mathrm{sym}}^{k+1, d} \Big \rVert_{\infty} \\
        &= \frac{3d}{2\cdot d[k+1]}.
    \end{align*}
    Recall that
    \begin{equation*}
        d[k+1] = \binom{d+k}{k+1} = \frac{(d+k) \dots (d+1)d}{(k+1)!} \geq (d+1)^k \cdot \frac{d}{(k+1)!}.
    \end{equation*}
    Hence, we conclude that
    \begin{equation*}
        \E[\abs{\bs_z}^k] \leq (d+1)^k \cdot \frac{3d}{2\cdot d[k+1]} \cdot 3^k \leq 
        \frac{3}{2} \cdot 3^k \cdot (k+1)! \leq \frac{3}{2} \cdot 3^k \cdot (3\cdot 2^{k-2} \cdot k!)
        \leq 41 \cdot 6^{k-2} k!.
    \end{equation*}
    In the third inequality we used the fact that $(k+1)! = (k+1) \cdot k! \leq 3\cdot 2^{k-2} \cdot k!$, whenever $k \geq 2$.
\end{proof}

Given the above moment bounds, we are ready to prove a concentration result for the operator norm of the difference of our estimator from $\ketbra{v}$. For this, we will need to use a covering net argument, which we define below.

\begin{definition}[Covering net]
    Let $T$ be a set with metric $d(\cdot, \cdot)$. A subset $S$ of $T$ is a \emph{covering net with radius $\theta$} if every point $x$ in $T$ is at most $\theta$ away from some point in the covering net, i.e., there exists $y \in S$ such that $d(x, y) \leq \theta$.
\end{definition}

\begin{lemma}[{Concentration bounds of the time-efficient version of \cite{GKKT20}}]
    \label{lem:concentration-in-op-norm-gkkt}
    It holds that
    \begin{equation*}
        \Pr\big[\norm{\widehat{\brho}_{\mathrm{avg}} - \ketbra{v}}_{\infty} \geq q\big]
        \leq 2 \cdot \exp((4\log 3)d-\frac{nq^2}{704}).
    \end{equation*}
\end{lemma}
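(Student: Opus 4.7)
The plan is to combine a scalar Bernstein-type concentration (which follows directly from the moment bound in \Cref{lem:k-th-moment-bound}) with a covering-net argument over the unit sphere of $\C^d$. Since $\widehat{\brho}_{\mathrm{avg}} - \ketbra{v}$ is Hermitian, its operator norm equals $\sup_{\|\ket{z}\|=1} |\bra{z}(\widehat{\brho}_{\mathrm{avg}} - \ketbra{v})\ket{z}|$, so it suffices to control this scalar quantity uniformly in $\ket{z}$.

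For the scalar step, I would fix a unit vector $\ket{z}\in \C^d$ and define iid scalar random variables $\bs_i \coloneqq \bra{z}(\widehat{\brho}_i - \ketbra{v})\ket{z}$, whose empirical mean is precisely $\bra{z}(\widehat{\brho}_{\mathrm{avg}} - \ketbra{v})\ket{z}$. \Cref{lem:k-th-moment-bound} gives $\E[|\bs_i|^k] \le 41 \cdot 6^{k-2} k! = \tfrac{k!}{2} \cdot 82 \cdot 6^{k-2}$ for all $k\ge 2$, which is exactly the classical Bernstein moment condition with variance proxy $\sigma^2 = 82$ and scale parameter $R = 6$. The Bernstein inequality then yields a tail bound of the form
\begin{equation*}
    \Pr\!\left[\,\Big|\tfrac{1}{n}\sum_{i=1}^n \bs_i\Big| \ge s\right] \le 2 \exp\!\left(-\frac{n s^2}{2(82 + 6 s)}\right).
\end{equation*}

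For the net step, I would take a $\theta$-net $N$ of the complex unit sphere of $\C^d$ at $\theta = 1/4$; viewing the sphere as a real $2d$-dimensional sphere, standard volumetric bounds give $|N| \le 9^{2d} = \exp((4\log 3)\, d)$. The usual net-approximation argument for Hermitian matrices yields $\|A\|_\infty \le 2 \max_{\ket{z} \in N} |\bra{z}A\ket{z}|$. A union bound over $N$ combined with the scalar bound at $s = q/2$ then gives
\begin{equation*}
    \Pr[\|\widehat{\brho}_{\mathrm{avg}} - \ketbra{v}\|_\infty \ge q] \le 2 \cdot 9^{2d} \cdot \exp\!\left(-\frac{n q^2}{8(82 + 3q)}\right),
\end{equation*}
and bounding $8(82 + 3q) \le 704$ in the regime $q \le 2$ (with larger $q$ trivial since $\|\widehat{\brho}_{\mathrm{avg}} - \ketbra{v}\|_\infty \le d+2$ deterministically) recovers the stated inequality.

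The main obstacle I anticipate is purely bookkeeping. Since $\bV$ comes from an approximate rather than exact $t$-design, $\E[\bs_i]$ is not exactly zero, so to apply the mean-zero Bernstein inequality cleanly I would center by writing $\bs_i = (\bs_i - \E\bs_i) + \E\bs_i$, apply the concentration to the centered part (whose moments differ from those of $\bs_i$ only by universal constants), and absorb the small residual bias (controlled by the $k=2$ case of \Cref{lem:k-th-moment-bound} together with the $2$-design property of $\mathcal{P}$) into the constants. Beyond this, I expect no genuine difficulty other than tracking the constants carefully enough to land at exactly~$704$ in the denominator.
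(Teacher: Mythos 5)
Your proposal matches the paper's proof essentially step for step: the same scalar Bernstein inequality with $R=6$, $\sigma^2=82$ derived from \Cref{lem:k-th-moment-bound}, the same $\theta=1/4$ net of size $9^{2d}=3^{4d}$ with the factor-of-$2$ net approximation, and the same union bound at level $q/2$ yielding the $704$. Your concern about the mean of $\bs_i$ not being exactly zero under an approximate design is a fair point that the paper simply asserts away, and your proposed centering is the standard fix, so there is no gap.
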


\begin{proof}
    We first rewrite the operator norm as
    \begin{equation*}
        \norm{\widehat{\brho}_{\mathrm{avg}} - \ketbra{v}}_{\infty}
        = \max_{\ket{z}} \abs{\bra{z} (\widehat{\brho}_{\mathrm{avg}} - \ketbra{v}) \ket{z}}.
    \end{equation*}
    We will use a covering net argument to compute this maximum over all pure states. It is well-known that the set of $d$-dimensional pure states admits a covering net, since each state corresponds, via a distance-preserving mapping, to a point on the real Euclidean sphere $\mathbb{S}^{2d-1}$. In particular, for any covering net $S = \{\ket{z_i}\}_{i \in [N]}$ of radius $\theta \in [0, 1/2)$, it holds that (see, for example,~\cite[Lemma 4.4.2]{Ver18}):
    \begin{equation}
        \label{eq:net-argument}
        \norm{\widehat{\brho}_{\mathrm{avg}} - \ketbra{v}}_{\infty} \leq \frac{1}{1 - 2\theta} \cdot \max_{i \in [N]} \abs{\bra{z_i} (\widehat{\brho}_{\mathrm{avg}} - \ketbra{v}) \ket{z_i}}.
    \end{equation}
    Moreover,~\cite[Corollary 4.2.11]{Ver18} implies that such a covering net with radius $\theta$ has size at most $(1 + 2/\theta)^{2d}$. We set $\theta = 1/4$, which means that $N = \abs{S} \leq 9^{2d} = 3^{4d}$.
    
    For any fixed $\ket{z}$, the right-hand side of~\Cref{eq:net-argument} can be written as a sum of $n$ independent and identically distributed copies of a mean-zero random variable:
    \begin{equation*}
        \abs{\bra{z} (\widehat{\brho}_{\mathrm{avg}} - \ketbra{v}) \ket{z}} = 
        \Big|\frac{1}{n}\sum_{i=1}^n \bra{z} (\widehat{\brho}_i - \ketbra{v}) \ket{z}\Big|.
    \end{equation*}
    In~\Cref{lem:k-th-moment-bound} we showed that the $k$-th moment of each such variable is bounded above by $41 \cdot 6^{k-2} k!$. This bound allows us to use the Bernstein inequality, whose statement is given below.
    \begin{theorem}[{Bernstein inequality~\cite[Theorem 7.30]{FR13}}]
        Consider $n$ independent samples $\bs_1, \dots, \bs_n$ of the real random variable $\bs$, which has mean zero, and whose $k$-th moment is bounded by
        \begin{equation*}
            \E[\abs{\bs}^k]\leq k!R^{k-2}\sigma^2/2,
        \end{equation*}
        for all integers $k \geq 2$, and some positive constants $R, \sigma^2$. Then, for all $q > 0$,
        \begin{equation*}
            \Pr\bigg[\Big|\sum_{i=1}^n \bs_i\Big| \geq q\bigg] \leq 2 \exp(-\frac{q^2/2}{n\sigma^2 + Rq}).
        \end{equation*}
    \end{theorem}
    \noindent Plugging in $R = 6$ and $\sigma^2 = 82$ in the inequality above implies that for all $\ket{z}$ and $0 \leq q \leq 1$,
    \begin{equation}
        \label{eq:apply-bernstein}
        \Pr\big[\big|\bra{z} (\widehat{\brho}_{\mathrm{avg}} - \ketbra{v}) \ket{z}\big| \geq q\big] \leq 2\exp(-\frac{q^2n^2/2}{82n + 6qn}) \leq 2\exp(-\frac{nq^2}{176}).
    \end{equation}
    Applying the union bound inequality over the covering net $S$ of the pure states, we conclude that 
    \begin{align*}
        \Pr\big[\norm{\widehat{\brho}_{\mathrm{avg}} - \ketbra{v}}_{\infty} \geq q\big]
        &\leq \Pr\Big[\max_{i \in [N]} \big|\bra{z_i} (\widehat{\brho}_{\mathrm{avg}} - \ketbra{v}) \ket{z_i}\big| \geq \frac{q}{2}\Big] \tag{\Cref{eq:net-argument}} \\
        &= \Pr\Big[\bigvee_{i \in [N]} \big|\bra{z_i} (\widehat{\brho}_{\mathrm{avg}} - \ketbra{v}) \ket{z_i}\big| \geq \frac{q}{2}\Big]  \\
        &\leq 2N \cdot\exp(-\frac{nq^2}{704}) \tag{\Cref{eq:apply-bernstein}} \\
        &\leq 2 \cdot \exp((4\log 3)d-\frac{nq^2}{704}).
    \end{align*}
    This completes the proof.
\end{proof}

\begin{theorem}[\Cref{thm:efficient-pure}, restated]
    There is an algorithm which, given
    \begin{equation*}
        n = O\Big(\frac{d + \log(1/\delta)}{\epsilon}\Big)
    \end{equation*}
    copies of a pure state $\ket{v} \in \C^d$,
    outputs a pure state $\ket{\widehat{\bv}} \in \C^d$ such that $\abs{\braket{\widehat{\bv}}{v}}^2 \geq 1 - \epsilon$ with probability at least $1 - \delta$.
    Furthermore, this algorithm can be implemented in $\poly(n)$ time and performs independent measurements across the copies of $\ket{v}$.
\end{theorem}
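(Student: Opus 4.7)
The plan is to combine the concentration bound for $\widehat{\brho}_{\mathrm{avg}}$ already proved in \Cref{lem:concentration-in-op-norm-gkkt} with a sharp eigenvector perturbation argument that exploits the spectral gap available when the hidden state is pure. Sample complexity aside, the algorithm in \Cref{fig:gkkt-using-t-design} is already gate-efficient: each iteration samples a unitary from the $t$-design (with $t = 6d+1$) in time $\poly(d)$ by \Cref{thm:t-design-theorem}, performs a computational basis measurement, and updates a $d\times d$ matrix; extracting the top eigenvector $\ket{\widehat{\bv}}$ of $\widehat{\brho}_{\mathrm{avg}}$ at the end is a $\poly(d)$ classical computation. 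So the whole routine runs in $\poly(n)$ time with independent measurements, and the only remaining task is to show that the stated sample size suffices.

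Set $q = \sqrt{\eps}/2$. By \Cref{lem:concentration-in-op-norm-gkkt}, choosing
\begin{equation*}
    n = C \cdot \frac{d + \log(1/\delta)}{\eps}
\end{equation*}
for a large enough absolute constant $C$ makes $\Pr[\norm{\widehat{\brho}_{\mathrm{avg}} - \ketbra{v}}_\infty \geq q] \leq \delta$. Condition on this good event for the rest of the argument and write $\widehat{\brho}_{\mathrm{avg}} = \ketbra{v} + E$ with $\norm{E}_\infty \leq q$.

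The key step is a quadratic improvement in the eigenvector bound. The naive inequality $\bra{\widehat{\bv}} \widehat{\brho}_{\mathrm{avg}} \ket{\widehat{\bv}} \geq \bra{v}\widehat{\brho}_{\mathrm{avg}}\ket{v}$ only yields $|\braket{\widehat{\bv}}{v}|^2 \geq 1 - 2q$, which would cost us a factor of $1/\eps$. Instead, I would decompose $\ket{\widehat{\bv}} = \alpha\ket{v} + \beta\ket{v^\perp}$ for some unit vector $\ket{v^\perp}$ orthogonal to $\ket{v}$, and use the eigenvector equation to write
\begin{equation*}
    \widehat{\lambda}_1 \cdot \beta
    = \bra{v^\perp} \widehat{\brho}_{\mathrm{avg}} \ket{\widehat{\bv}}
    = \bra{v^\perp} \ketbra{v} \ket{\widehat{\bv}} + \bra{v^\perp} E \ket{\widehat{\bv}}
    = \bra{v^\perp} E \ket{\widehat{\bv}},
\end{equation*}
where $\widehat{\lambda}_1$ is the top eigenvalue of $\widehat{\brho}_{\mathrm{avg}}$. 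Since $\widehat{\lambda}_1 \geq 1 - q \geq 1/2$ (taking $C$ large enough that $q \leq 1/2$), this gives $|\beta| \leq 2q$, hence $|\braket{\widehat{\bv}}{v}|^2 = 1 - |\beta|^2 \geq 1 - 4q^2 = 1 - \eps$, as desired. This gap-based step is the heart of the argument; it is precisely the reason pure-state tomography attains the linear-in-$\eps$ scaling in $n$ even though the operator-norm concentration is quadratic in $q$.

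The main obstacle in the plan is really already handled by \Cref{lem:concentration-in-op-norm-gkkt}: the proof of that lemma, via moment bounds and the $t$-design with $t=\Theta(d)$, is what absorbs the $(\log 3) d$ covering-net term without blowing up the sample complexity. The eigenvector step above is short, but it is crucial to observe that it only requires the \emph{operator} norm bound (not a trace-norm bound), and it crucially uses purity of $\ket{v}$ to obtain the $1 - 0 = 1$ spectral gap that turns an $O(q)$ bound into an $O(q^2)$ bound in fidelity.
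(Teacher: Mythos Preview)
Your proposal is correct and follows the same overall architecture as the paper's proof: efficiency from the $t$-design construction, concentration from \Cref{lem:concentration-in-op-norm-gkkt} with $q=\Theta(\sqrt{\eps})$, and then a perturbation argument converting an $O(q)$ operator-norm bound into an $O(q^2)$ fidelity bound.

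The only difference is in that final perturbation step. The paper instead invokes Mirsky's theorem to conclude $\norm{\widehat{\brho}_{\mathrm{avg}}-\ketbra{\widehat{\bv}}}_\infty \le \norm{\widehat{\brho}_{\mathrm{avg}}-\ketbra{v}}_\infty$, applies the triangle inequality to get $\norm{\ketbra{\widehat{\bv}}-\ketbra{v}}_\infty \le 2q$, bounds the trace norm by twice this (rank two), and then uses the exact pure-state relation $\abs{\braket{\widehat{\bv}}{v}}^2 = 1 - \tfrac14\norm{\ketbra{\widehat{\bv}}-\ketbra{v}}_1^2$. Your eigenvector-equation argument is a Davis--Kahan-style alternative that reaches the same $1-O(q^2)$ conclusion without citing Mirsky; it is arguably more self-contained, while the paper's route makes the ``quadratic saving'' come from the standard trace-distance/fidelity identity rather than from an explicit spectral-gap computation. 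Either works.
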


\begin{proof}
    The algorithm is outlined in~\Cref{fig:gkkt-using-t-design}. The claimed running time follows from~\Cref{thm:t-design-theorem}, since the unitary $t$-design can be implemented using $\poly(\log(d)\cdot t) = \poly(d)$ quantum gates and classical processing. Moreover, the number of samples $n = \poly(d, 1/\epsilon, \log(1/\delta))$ is also polynomial in these parameters.

    It remains to show that the output pure state satisfies $\abs{\braket{\widehat{\bv}}{v}}^2 \geq 1 - \epsilon$ with probability at least $1 - \delta$. By setting $q = \sqrt{\epsilon}$ and the number of samples equal to $n = O((d + \log(1/\delta))/\epsilon)$ in the statement of~\Cref{lem:concentration-in-op-norm-gkkt}, we conclude that
    \begin{equation*}
        \Pr\big[\norm{\widehat{\brho}_{\mathrm{avg}} - \ketbra{v}}_{\infty} \geq \sqrt{\epsilon}\big] \leq \delta.
    \end{equation*}
    Let us now restrict our attention to the case when $\norm{\widehat{\brho}_{\mathrm{avg}} - \ketbra{v}}_{\infty} \leq \sqrt{\epsilon}$, which happens with probability at least $1 - \delta$.

    Since the operator norm is unitarily invariant, the expression $\norm{\widehat{\brho}_{\mathrm{avg}} - \ketbra{w}}_{\infty}$ is minimized when $\ket{w}$ is the top eigenvector $\ket{\widehat{\bv}}$ of $\widehat{\brho}_{\mathrm{avg}}$. This follows from a theorem of Mirsky (see~\cite[Corollary 7.4.9.3]{HJ13}), which states that for any two Hermitian matrices $A, B$:
    \begin{equation*}
        \norm{A - B}_{\infty} \geq \norm{\mathrm{spec}(A) - \mathrm{spec}(B)}_{\infty},
    \end{equation*}
    where on the right-hand side, the matrices $\mathrm{spec}(A), \mathrm{spec}(B)$ are diagonal with entries equal to the eigenvalues of $A,B$ in nonincreasing order, respectively. By substituting $\widehat{\brho}_{\mathrm{avg}}$ for $A$ and $\ketbra{v}$ for $B$, we observe that
    \begin{equation*}
        \norm{\widehat{\brho}_{\mathrm{avg}} - \ketbra{v}}_{\infty} \geq 
        \norm{\mathrm{spec}(\widehat{\brho}_{\mathrm{avg}}) - \mathrm{spec}(\ketbra{v})}_{\infty}
        =
        \Big\lVert \sum_{i=1}^d \blambda_i \ketbra{i} - \ketbra{1}\Big\rVert_{\infty}
        =
        \norm{\widehat{\brho}_{\mathrm{avg}} - \ketbra{\widehat{\bv}}}_{\infty},
    \end{equation*}
    where we use $\blambda_i$ for the $i$-th largest eigenvalue of $\widehat{\brho}_{\mathrm{avg}}$. The last equality follows from the unitary invariance of the norm.
    As a result, the triangle inequality implies that
    \begin{equation*}
        \norm{\ketbra{\widehat{\bv}} - \ketbra{v}}_{\infty}
        \leq \norm{\ketbra{\widehat{\bv}} - \widehat{\brho}_{\mathrm{avg}}}_{\infty}
        + \norm{\widehat{\brho}_{\mathrm{avg}} - \ketbra{v}}_{\infty}
        \leq 2\sqrt{\epsilon}.
    \end{equation*}
    Finally, since $\ketbra{\widehat{\bv}} - \ketbra{v}$ only has at most two nonzero eigenvalues,
    \begin{equation*}
        \norm{\ketbra{\widehat{\bv}} - \ketbra{v}}_1 \leq 2 \cdot \norm{\ketbra{\widehat{\bv}} - \ketbra{v}}_{\infty} \leq 4\sqrt{\epsilon}.
    \end{equation*}
    From the relationship between the trace distance and squared overlap for pure states~\cite[Eq.\ (1.186)]{Wat18} we conclude that
    \begin{equation*}
        \abs{\braket{\widehat{\bv}}{v}}^2 = 1 - \Big(\frac{1}{2}\norm{\ketbra{\widehat{\bv}} - \ketbra{v}}_1\Big)^2 \geq 1 - 4\epsilon.
    \end{equation*}
    The theorem statement then follows by adjusting the parameter $\epsilon$ by a constant.
\end{proof}

\section{Pure state tomography with entangled measurements} \label{sec:entangled}

In this section, we discuss and analyze two pure state tomography algorithms, both of which use the optimal choice of measurement for this task.
The first is Hayashi's algorithm $\hayashi$ \cite{Hay98}, which performs this measurement and outputs the naive corresponding estimator, which is itself also a pure state.
The second is the algorithm $\gps$  of Grier, Pashayan, and Schaeffer \cite{GPS24}, which adjusts Hayashi's estimator in order to remove its bias.

We begin with Hayashi's algorithm, where we directly show that it achieves the optimal sample complexity of $n = O((d + \log(1/\delta))/\eps)$ copies for pure state tomography.
This will not be used later on; we include it because $\mixed(\hayashi)$ gives arguably the cleanest proof of sample-optimal mixed state tomography, even among results which do not achieve the optimal $\delta$ dependence.

We then analyze the moments of the Grier--Pashayan--Schaeffer estimator.
In \Cref{sec:unbiased-estimator}, we will study $\mixed(\gps)$ and $\mixed^+(\gps)$; the moments of these mixed state estimators will follow directly from the moments of $\gps$.

\subsection{Hayashi's algorithm}

Let us first recall Hayashi's algorithm. When performing tomography on $n$ copies of a pure state $\ket{\psi} \in \C^d$,
the input $\ket{\psi}^{\otimes n}$ is an element of the symmetric subspace $\lor^n \C^d$.
This means that a pure state tomography algorithm's measurement operators need only be specified on the symmetric subspace.
Motivated by this, Hayashi~\cite{Hay98} introduced the following natural pure state tomography algorithm:
simply perform the POVM
\begin{equation}
    \{d[n] \cdot \ketbra{u}^{\otimes n} \cdot du\}
\end{equation}
and output the pure state $\ket*{\bv}$ that it returns (\Cref{fig:entangled-tomography}).
This is indeed a valid POVM on the symmetric subspace, as
\begin{equation*}
    \int_{\ket{u}} d[n] \cdot \ketbra{u}^{\otimes n} \cdot du
    = d[n] \cdot \E_{\ket{\bu} \sim \mathrm{Haar}} \ketbra{\bu}^{\otimes n}
    = \Pi_{\mathrm{sym}}^{n, d}
\end{equation*}
due to \Cref{eq:pi-sym}.
Hayashi showed that $\abs{\braket*{\bv}{\psi}}^2 \geq 1- \epsilon$ with probability 99\% when $n = O(d/\epsilon)$. To see this, note that
\begin{align}
    \E\abs{\braket*{\bv}{\psi}}^2
    &= d[n] \cdot \int_{\ket{u}} \tr\Big(\ketbra{u}^{\otimes n} \cdot \ketbra{\psi}^{\otimes n}\Big) \cdot \abs{\braket*{u}{\psi}}^2 \cdot du\nonumber\\
    &= d[n] \cdot \int_{\ket{u}} \tr\Big( \ketbra{u}^{\otimes n + 1} \cdot \ketbra{\psi}^{\otimes n+1}\Big) \cdot du\nonumber\\
    &= d[n] \cdot  \tr\Big(\Big(\int_{\ket{u}} \ketbra{u}^{\otimes n + 1} \cdot du \Big) \cdot \ketbra{\psi}^{\otimes n+1}\Big)\nonumber\\
    &= d[n] \cdot  \tr\Big(\Big(\tfrac{1}{d[n+1]} \cdot \Pi_{\mathrm{sym}}^{n+1} \Big) \cdot \ketbra{\psi}^{\otimes n+1}\Big)\nonumber\\
    &= \frac{d[n]}{d[n+1]}\nonumber\\
    &= \frac{n+1}{n+d} = 1 - \frac{d-1}{n+d}.\label{eq:standard-lb}
\end{align}
This is $1 - \epsilon/100$ when $n = O(d/\epsilon)$. The claim now follows from an application of Markov's inequality.

Next,
we upgrade this to a high probability bound for Hayashi's estimator.
\begin{proposition}[{Hayashi's algorithm with high probability; \Cref{prop:hayashi}, restated}]
Given $n$ copies of a pure state $\ket{\psi} \in \C^d$,
suppose Hayashi's algorithm returns the state $\ket*{\bv}$. Then $\abs{\braket*{\bv}{\psi}}^2 \geq 1 - \epsilon$ with probability at least $1 - \delta$ when
\begin{equation*}
    n = O\Big(\frac{d + \log(1/\delta)}{\epsilon}\Big).
\end{equation*}
\end{proposition}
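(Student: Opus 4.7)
The plan is to identify the exact distribution of the squared overlap $\bF = \abs{\braket*{\bv}{\psi}}^2$ under Hayashi's algorithm, compute its integer moments in closed form, and then conclude via Markov's inequality. First I would observe that, conditioned on $\ket{\psi}$, the measurement outcome $\ket{\bv}$ has density $d[n] \cdot \abs{\braket*{v}{\psi}}^{2n}$ with respect to Haar measure on pure states. It is a standard fact that for a Haar-random $\ket{v}$ the squared overlap $\abs{\braket*{v}{\psi}}^2$ has the $\mathrm{Beta}(1,d-1)$ law, with density $(d-1)(1-x)^{d-2}$ on $[0,1]$. Reweighting by $d[n] \cdot x^n$ produces a density proportional to $x^n (1-x)^{d-2}$, so $\bF \sim \mathrm{Beta}(n+1,\, d-1)$ and equivalently $1-\bF \sim \mathrm{Beta}(d-1,\, n+1)$. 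The implicit normalization identity $d[n](d-1) = 1/B(n+1, d-1)$ is the same as $\int \abs{\braket*{v}{\psi}}^{2n}\, dv = 1/d[n]$, which is already used in the derivation of \Cref{eq:standard-lb}.

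With the distribution in hand, the integer moments of $1-\bF$ factor into a telescoping product:
\begin{equation*}
    \E[(1-\bF)^k] \;=\; \frac{B(d-1+k,\, n+1)}{B(d-1,\, n+1)} \;=\; \prod_{i=0}^{k-1} \frac{d-1+i}{n+d+i} \;\leq\; \left(\frac{d+k}{n}\right)^{\!k}\!,
\end{equation*}
where the last inequality bounds each factor by $(d+k)/n$. Applying Markov's inequality to $(1-\bF)^k$ then yields
\begin{equation*}
    \Pr[1-\bF \geq \epsilon] \;\leq\; \frac{\E[(1-\bF)^k]}{\epsilon^k} \;\leq\; \left(\frac{d+k}{n\epsilon}\right)^{\!k}\!,
\end{equation*}
and setting $k = \lceil \log(1/\delta) \rceil$ together with $n = C(d + \log(1/\delta))/\epsilon$ for an absolute constant $C \geq e$ drives the base down to at most $1/e$, so the bound is at most $e^{-k} \leq \delta$. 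The statement for constant $\delta$ is already covered by \Cref{eq:standard-lb}, so we only need this argument in the small-$\delta$ regime.

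There is honestly very little to overcome here: Hayashi's algorithm is almost uniquely easy to analyze because its output distribution collapses to a one-parameter Beta law on the overlap, whose integer moments are available in closed form. The only delicate point is balancing the moment order $k$ against the constant hidden inside $n = O((d+\log(1/\delta))/\epsilon)$ so that a single choice of $k$ simultaneously realizes both the $O(d/\epsilon)$ and $O(\log(1/\delta)/\epsilon)$ scalings, which is handled by taking $k = \Theta(\log(1/\delta))$.
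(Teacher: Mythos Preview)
Your proof is correct, and it takes a genuinely different route from the paper's. Both arguments begin identically, by showing that $\bF = \abs{\braket{\bv}{\psi}}^2$ is $\mathrm{Beta}(n+1,d-1)$ distributed. From there the paper invokes a black-box Bernstein-type tail inequality for Beta random variables due to Skorski, applied to the lower tail of $\bF$ about its mean $\E[\bF] = (n+1)/(n+d)$; this requires the side condition $n+1 \ge d-1$ and a small correction to the cited statement. Your approach instead computes the integer moments $\E[(1-\bF)^k]$ in closed form via the Beta function, bounds each factor crudely by $(d+k)/n$, and applies Markov's inequality with $k = \Theta(\log(1/\delta))$. Your argument is more self-contained and arguably cleaner: it avoids citing an external concentration result, sidesteps the $n \ge d-2$ side condition, and the moment product is exactly the kind of structure that makes Beta tails easy to control directly. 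The paper's approach, on the other hand, yields a slightly sharper explicit tail bound should one care about constants.
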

\begin{proof}
It suffices to show that the random variable $\bx \coloneqq \abs{\braket*{\bv}{\psi}}^2$ exhibits strong concentration about its mean.
Let us write $p(x)$ for the probability density function of this random variable.
Noting that $\ket*{\bv} = \ket{u}$ with measure
\begin{equation*}
    d[n] \cdot \tr(\ketbra{u}^{\otimes n} \cdot \ketbra{\psi}^{\otimes n}) \cdot du
    =
    d[n] \cdot \abs{\braket*{u}{\psi}}^{2n} \cdot du,
\end{equation*}
it follows that $p(x) = d[n] \cdot x^n \cdot \mu(x)$, where $\mu$ is the probability density function of $\by \coloneqq \abs{\braket{\bu}{\psi}}^2$ when $\ket{\bu} \sim \C^d$ is a Haar random vector.
It is known that $\by$ is distributed as a $\mathrm{Beta}(1, d-1)$ random variable~\cite[Eq.\ (7)]{ZS00},
and so $\mu(x) = (d-1)(1-x)^{d-2}$.
As a result,
\begin{equation*}
    p(x) = (d-1)\cdot d[n] \cdot x^n (1-x)^{d-2},
\end{equation*}
and so $\bx$ obeys the $\mathrm{Beta}(n+1, d-1)$ distribution.
Now we can apply \cite[Theorem 1]{Sko23}\footnote{We are applying \cite[Theorem 1]{Sko23} with $\alpha = n+1$ and $\beta = d-1$, and we are interested in the $\alpha \geq \beta$ case of the lower-tail bound. In this case, his result states that
\begin{equation}\label{eq:can-i-cite-in-a-footnote}
    \Pr[\bx \leq \E[\bx] - \epsilon]
    \leq \mathrm{exp}\Big(- \frac{\epsilon^2}{2(v + \frac{c\epsilon}{3})}\Big),
\end{equation}
where
\begin{equation*}
    v = \frac{\alpha \beta}{(\alpha+\beta)^2 (\alpha+\beta+1)} \quad \text{and} \quad c = \frac{2(\beta  - \alpha)}{(\alpha+\beta) (\alpha+\beta+2)}.
\end{equation*}
There is actually a slight bug in this statement, however:
since $\alpha \geq \beta$, the variable $c$ is actually nonpositive, which means that the $c \epsilon/3$ in the denominator of \Cref{eq:can-i-cite-in-a-footnote} is (erroneously) nonpositive.
Skorski derived this statement by reducing it to his upper-tail bound,
and we can derive the correct lower-tail bound by carrying out this reduction more carefully.
In particular, if we set $\bx' \coloneqq 1 - \bx$, then $\bx'$ is a $\mathrm{Beta}(\beta, \alpha)$ random variable,
and so using his upper-tail bound, we obtain
\begin{equation*}
    \Pr[\bx \leq \E[\bx] - \epsilon] = \Pr[\bx' \geq \E[\bx'] + \epsilon] 
    \leq \mathrm{exp}\Big(- \frac{\epsilon^2}{2(v - \frac{c\epsilon}{3})}\Big) = \mathrm{exp}\Big(- \frac{\epsilon^2}{2(v + \frac{|c|\epsilon}{3})}\Big).
\end{equation*}
This is the corrected bound that we use above.
}, which states that so long as $n+1 \geq d-1$,
\begin{equation*}
    \Pr[\bx \leq \E[\bx] - \epsilon]
    \leq \mathrm{exp}\Big(- \frac{\epsilon^2}{2(v + \frac{c\epsilon}{3})}\Big),
\end{equation*}
where
\begin{align*}
    v = \frac{(n+1)(d-1)}{(n+d)^2(n+d+1)} \leq \frac{d}{n^2}
    \qquad
    \text{and}
    \qquad
    c = \frac{2(n-d+2)}{(n+d)(n+d+2)} \leq \frac{2}{n}.
\end{align*}
Hence, we have that 
\begin{equation*}
    \Pr[\bx \leq \E[\bx] - \epsilon]
    \leq \mathrm{exp}\Big(- \frac{\epsilon^2}{2(d/n^2 + (2/3) \cdot\epsilon/n)}\Big),
\end{equation*}
We will apply this bound to the case when $n = O((d + \log(1/\delta))/\epsilon)$.
For this setting of parameters, we have that $d/n^2 \leq (1/3) \cdot \epsilon/n$; furthermore, we have that $\E[\bx] \geq 1 - \epsilon$ from \Cref{eq:standard-lb} above. Putting these together, for this range of $n$,
\begin{equation*}
    \Pr[\bx \leq (1-\epsilon) - \epsilon]
    \leq \mathrm{exp}\Big(- \frac{\epsilon^2}{2 \cdot\epsilon/n}\Big)
    = \exp(-\epsilon n/2) \leq \delta.
\end{equation*} 
This completes the proof.
\end{proof}

This matches the sample complexity bound for the unentangled pure state tomography algorithm $\gkkt$ from \Cref{thm:efficient-pure},
and so it can be used to give a mixed state tomography algorithm whose sample complexity matches \Cref{thm:efficient-mixed}.
However, unlike $\gkkt$,
which can be implemented efficiently,
we do not know of any efficient implementations of Hayashi's measurement,
and so this only produces a mixed state learning algorithm which is sample-optimal but not necessarily efficient.

\subsection{Grier, Pashayan, and Schaeffer's algorithm}\label{sec:gps}

One downside of Hayashi's algorithm is that its output $\ketbra*{\bv}$ is not an unbiased estimator for the true state $\ketbra{\psi}$.
This is because $\ketbra{\psi}$ is a pure state but the expectation $\E\ketbra*{\bv}$ will necessarily be mixed.
In addition, as \Cref{eq:standard-lb} demonstrates, this expectation will have poor overlap with the true state unless $n \gg d$.
This limits the usefulness of Hayashi's algorithm to problems such as shadow tomography where one might hope to use significantly fewer than $d$ copies of the state.

To address this, Grier, Pashayan, and Schaeffer~\cite{GPS24} noted that one can correct for the bias in Hayashi's algorithm via the simple modification
\begin{equation*}
\widehat{\sigma}_{\bv} \coloneqq \frac{d+n}{n} \cdot \ketbra*{\bv} - \frac{1}{n} \cdot I.
\end{equation*}
In other words, the expectation of this estimator is $\E \widehat{\sigma}_{\bv} = \ketbra{\psi}$, and so it is an unbiased estimator for pure state tomography.
They used this to derive bounds on the sample complexity of classical shadows when the input is a pure state.

For our applications, we will need to compute the first and second moments of $\widehat{\sigma}_{\bv}$.
To do so, let us first define the following moment operators.

\begin{definition}[Moment operator]
    Given integers $n \geq 1$, $d \geq 1$, and $k \geq 1$, we define the \emph{$k$-th moment operator} to be the following operator in $(\C^{d \times d})^{\otimes (n+k)}$:
    \begin{equation*}
        M_{\mathrm{mom}}^{(k)} \coloneqq d[n] \cdot \int_{\ket{u}} \ketbra{u}^{\otimes n} \otimes \widehat{\sigma}_{u}^{\otimes k} \cdot du.
    \end{equation*}
\end{definition}

Our next proposition shows that the $k$-th moment operator can be used to calculate the $k$-th moment of the Grier--Pashayan--Schaeffer algorithm.
For a later application, we will state this proposition for the more general case when their algorithm is performed on a generic mixed state $\psi_{\mathrm{sym}}$ inside the symmetric subspace.
However, for pure state tomography, it suffices to consider the case that $\psi_{\mathrm{sym}} = \ketbra{\psi}^{\otimes n}$ for a pure state $\ket{\psi} \in \C^d$.
\begin{proposition}[Computing moments with the moment operator]\label{prop:moment}
    Let $\psi_{\mathrm{sym}}$ be a mixed state on $\lor^n \C^d$.
    If $\ket{\bv}$ is the output of Hayashi's measurement when performed on $\psi_{\mathrm{sym}}$, then
    \begin{equation*}
        \E \widehat{\sigma}_{\bv}^{\otimes k} = \tr_{\reg{1}\dots\reg{n}}(M_{\mathrm{mom}}^{(k)} \cdot \psi_{\mathrm{sym}} \otimes I_d^{\otimes k}).
    \end{equation*}
\end{proposition}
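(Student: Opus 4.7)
The plan is to unpack both sides directly from their definitions and check that they match via a standard partial-trace identity. Concretely, Hayashi's POVM $\{d[n] \cdot \ketbra{u}^{\otimes n} \cdot du\}$ produces outcome $\ket{\bv} = \ket{u}$ with probability density $d[n] \cdot \tr(\ketbra{u}^{\otimes n} \cdot \psi_{\mathrm{sym}}) \cdot du$. Applying the classical post-processing $u \mapsto \widehat{\sigma}_u$ (which is just a deterministic function of $u$) gives
\begin{equation*}
    \E \widehat{\sigma}_{\bv}^{\otimes k}
    = d[n] \cdot \int_{\ket{u}} \tr(\ketbra{u}^{\otimes n} \cdot \psi_{\mathrm{sym}}) \cdot \widehat{\sigma}_u^{\otimes k} \cdot du.
\end{equation*}
So the first step is to write down this integral.

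Next, I would use the elementary identity $\tr(A \cdot B) \cdot C = \tr_{\reg{1}\dots\reg{n}}\bigl((A \otimes C) \cdot (B \otimes I)\bigr)$, valid whenever $A, B$ act on the same $n$ registers and $C$ acts on separate registers. Applied with $A = \ketbra{u}^{\otimes n}$, $B = \psi_{\mathrm{sym}}$, and $C = \widehat{\sigma}_u^{\otimes k}$, this rewrites the integrand as a single partial trace over the first $n$ registers. The remaining step is to pull the integral through the partial trace and through the multiplication by $\psi_{\mathrm{sym}} \otimes I_d^{\otimes k}$ (both operations are linear), at which point the integral collapses to the definition of $M_{\mathrm{mom}}^{(k)}$ and the claim follows.

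The only real point of care is bookkeeping: making sure the $n$ registers traced out on the right-hand side are the same registers on which $\ketbra{u}^{\otimes n}$ and $\psi_{\mathrm{sym}}$ both act, while the $k$ estimator copies $\widehat{\sigma}_u^{\otimes k}$ live on the remaining $k$ registers that survive the partial trace. Apart from this, the argument is a direct unwinding of definitions, so I do not anticipate any substantive obstacle.
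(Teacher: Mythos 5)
Your proposal is correct and follows essentially the same argument as the paper: the paper's proof is exactly this computation, just written starting from the right-hand side (expanding $M_{\mathrm{mom}}^{(k)}$ inside the partial trace, pulling the integral out by linearity, and using the same $\tr(A\cdot B)\cdot C$ identity to collapse to the expectation). No gap.
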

\begin{proof}
    This follows by direct computation:
    \begin{align*}
        \tr_{\reg{1}\dots\reg{n}}(M_{\mathrm{mom}}^{(k)} \cdot \psi_{\mathrm{sym}} \otimes I_d^{\otimes k})
        & = \tr_{\reg{1}\dots\reg{n}}\Big(\Big(d[n] \cdot \int_{\ket{u}} \ketbra{u}^{\otimes n} \otimes \widehat{\sigma}_{u}^{\otimes k} \cdot du\Big) \cdot \psi_{\mathrm{sym}} \otimes I_d^{\otimes k}\Big)\\
        &= d[n] \cdot \int_{\ket{u}} \tr(\ketbra{u}^{\otimes n} \cdot \psi_{\mathrm{sym}}) \cdot \widehat{\sigma}_{u}^{\otimes k} \cdot du\\
        &= \E_{\ket{\bv}} \widehat{\sigma}_{\bv}^{\otimes k}.\qedhere
    \end{align*}
\end{proof}

Thus, to compute the $k$-th moment of this unbiased estimator, it suffices to compute the corresponding moment operator.
We will do so for the first and second moments.

\begin{lemma}[Helper lemma] \label{lem:Helper_lemma}
Let $n \geq 1$, $d \geq 2$, and $k \geq 1$. Then
\begin{equation*}
d[n] \cdot (d+n)^{\uparrow k} \cdot \int_{ \ket{u}} \ketbra{u}^{\otimes n+k} \cdot du
= (e + X_{n+k}) \cdots (e + X_{n+1}) \cdot (\Pi_{\mathrm{sym}}^{n} \otimes I^{\otimes k}),
\end{equation*}
where $a^{\uparrow b} = a (a+1) \cdots (a + b-1)$ is the rising factorial.
\end{lemma}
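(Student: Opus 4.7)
The plan is to rewrite the Haar integral on the left-hand side using the standard identity $\int_{\ket{u}} \ketbra{u}^{\otimes n+k} \, du = \Pi_{\mathrm{sym}}^{n+k}/d[n+k]$ (from \Cref{eq:pi-sym}), and then recognize the product of Jucys--Murphy shifts on the right-hand side as encoding exactly the recurrence for $\Pi_{\mathrm{sym}}^{n+k}$ given in \Cref{prop:pi-sym-recurrence}.

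First, I would rewrite the left-hand side as
\begin{equation*}
    d[n] \cdot (d+n)^{\uparrow k} \cdot \int_{\ket{u}} \ketbra{u}^{\otimes n+k} \cdot du
    = \frac{d[n] \cdot (d+n)^{\uparrow k}}{d[n+k]} \cdot \Pi_{\mathrm{sym}}^{n+k}.
\end{equation*}
The next step is to simplify the scalar prefactor. Iterating the ratio formula \Cref{eq:ratio} $k$ times gives
\begin{equation*}
    \frac{d[n]}{d[n+k]} = \prod_{j=0}^{k-1} \frac{d[n+j]}{d[n+j+1]} = \prod_{j=0}^{k-1} \frac{n+j+1}{n+j+d} = \frac{(n+1)^{\uparrow k}}{(d+n)^{\uparrow k}},
\end{equation*}
so the prefactor collapses to $(n+1)^{\uparrow k} = (n+1)(n+2) \cdots (n+k)$, yielding
\begin{equation*}
    \text{LHS} = (n+1)(n+2)\cdots(n+k) \cdot \Pi_{\mathrm{sym}}^{n+k}.
\end{equation*}

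Finally, I would apply \Cref{prop:pi-sym-recurrence} with $m = n$ and the ambient number of tensor factors equal to $n+k$, which gives
\begin{equation*}
    \Pi_{\mathrm{sym}}^{n+k} = \Big(\frac{e+X_{n+k}}{n+k}\Big) \cdots \Big(\frac{e+X_{n+1}}{n+1}\Big) \cdot \Pi_{\mathrm{sym}}^{n} \otimes I^{\otimes k}.
\end{equation*}
Multiplying through by $(n+1)(n+2)\cdots(n+k)$ cancels exactly the denominators $n+1, \ldots, n+k$, producing
\begin{equation*}
    \text{LHS} = (e + X_{n+k}) \cdots (e + X_{n+1}) \cdot (\Pi_{\mathrm{sym}}^{n} \otimes I^{\otimes k}),
\end{equation*}
which is the desired right-hand side.

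There is essentially no main obstacle here: the whole content of the lemma is packaged into \Cref{prop:pi-sym-recurrence}, and the only thing to verify carefully is the bookkeeping of the rising factorial, which is a short telescoping calculation using \Cref{eq:ratio}. The lemma is really just a rewriting of the recurrence in a form convenient for the upcoming moment-operator computations.
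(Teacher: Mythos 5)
Your proof is correct and follows essentially the same route as the paper's: both reduce the Haar integral to $\Pi_{\mathrm{sym}}^{n+k}/d[n+k]$ via \Cref{eq:pi-sym}, invoke \Cref{prop:pi-sym-recurrence} with $m = n$, and cancel the scalar prefactor by telescoping \Cref{eq:ratio}. The only difference is cosmetic — you simplify the dimension ratio before applying the recurrence, whereas the paper does it after — so there is nothing to add.
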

\begin{proof}
    Using $\int_{\ket{u}} \ketbra{u}^{n +k} \cdot du = d[n+k]^{-1} \cdot \Pi_{\mathrm{sym}}^{n+k}$, this is equivalent to the statement
    \begin{equation*}
        \frac{d[n]}{d[n+k]} \cdot (d+n)^{\uparrow k} \cdot \Pi_{\mathrm{sym}}^{n+k} = (e + X_{n+k}) \cdots (e + X_{n+1}) \cdot (\Pi_{\mathrm{sym}}^{n} \otimes I^{\otimes k}).
    \end{equation*}
    This is true because
    \begin{align*}
        &\frac{d[n]}{d[n+k]} \cdot (d+n)^{\uparrow k} \cdot \Pi_{\mathrm{sym}}^{n+k}\\
        ={}& \frac{d[n]}{d[n+k]} \cdot (d+n)^{\uparrow k} \cdot \Big(\frac{e+X_{n+k}}{n+k}\Big) \cdots \Big(\frac{e+X_{n+1}}{n+1}\Big) \cdot \Pi_{\mathrm{sym}}^{n} \otimes I^{\otimes k} \tag{by \Cref{prop:pi-sym-recurrence}}\\
        ={}& \frac{d[n]}{d[n+k]} \cdot \frac{(d+n)^{\uparrow k}}{(n+1)^{\uparrow k}} \cdot (e+X_{n+k}) \cdots (e+X_{n+1}) \cdot \Pi_{\mathrm{sym}}^{n} \otimes I^{\otimes k}\\
        ={}& (e+X_{n+k}) \cdots (e+X_{n+1}) \cdot \Pi_{\mathrm{sym}}^{n} \otimes I^{\otimes k}. \tag{by iterating \Cref{eq:ratio}}
    \end{align*}
    This completes the proof.
\end{proof}

\begin{lemma}[The first moment operator]\label{lem:first-moment-operator}
    Given $n \geq 1$ and $d \geq 2$,
    \begin{equation*}
        M_{\mathrm{mom}}^{(1)} = \frac{1}{n} \cdot X_{n+1} \cdot (\Pi_{\mathrm{sym}}^n \otimes I).
    \end{equation*}
\end{lemma}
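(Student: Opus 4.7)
The plan is to expand the definition of $\widehat{\sigma}_u$ inside the integral defining $M_{\mathrm{mom}}^{(1)}$ and apply \Cref{lem:Helper_lemma}. Writing $\widehat{\sigma}_u = \frac{d+n}{n} \ketbra{u} - \frac{1}{n} I$ and substituting, the moment operator splits as
\begin{equation*}
    M_{\mathrm{mom}}^{(1)}
    = \frac{d+n}{n} \cdot d[n] \cdot \int_{\ket{u}} \ketbra{u}^{\otimes (n+1)} \cdot du \;-\; \frac{1}{n} \cdot \Big(d[n] \cdot \int_{\ket{u}} \ketbra{u}^{\otimes n} \cdot du\Big) \otimes I.
\end{equation*}

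The first term is exactly what \Cref{lem:Helper_lemma} computes: specializing to $k=1$ gives $d[n]\cdot(d+n)\cdot\int \ketbra{u}^{\otimes (n+1)} du = (e + X_{n+1}) \cdot (\Pi_{\mathrm{sym}}^n \otimes I)$, so the first term equals $\tfrac{1}{n}(e + X_{n+1})(\Pi_{\mathrm{sym}}^n \otimes I)$. For the second term, \Cref{eq:pi-sym} says $d[n]\cdot\int \ketbra{u}^{\otimes n} du = \Pi_{\mathrm{sym}}^n$, so that term equals $\tfrac{1}{n}(\Pi_{\mathrm{sym}}^n \otimes I) = \tfrac{1}{n} e \cdot (\Pi_{\mathrm{sym}}^n \otimes I)$.

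Combining the two, the $e$ contributions cancel and we obtain
\begin{equation*}
    M_{\mathrm{mom}}^{(1)} = \frac{1}{n}\big((e + X_{n+1}) - e\big) \cdot (\Pi_{\mathrm{sym}}^n \otimes I) = \frac{1}{n} \cdot X_{n+1} \cdot (\Pi_{\mathrm{sym}}^n \otimes I),
\end{equation*}
as claimed. There is no real obstacle here: once the helper lemma is in hand, the proof is a one-line cancellation that precisely explains why the Grier--Pashayan--Schaeffer debiasing constants are the ``right'' ones — subtracting $\tfrac{1}{n} I$ from $\tfrac{d+n}{n}\ketbra{u}$ is exactly the linear combination needed to kill the identity summand $e$ in $(e + X_{n+1})$, leaving only the Jucys--Murphy piece $X_{n+1}$.
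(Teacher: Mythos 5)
Your proof is correct and follows essentially the same route as the paper's: expand $\widehat{\sigma}_u$, split the integral into two terms, apply \Cref{lem:Helper_lemma} with $k=1$ to the first and \Cref{eq:pi-sym} to the second, and cancel the identity contributions. The closing remark about the debiasing constants being exactly tuned to kill the $e$ summand is a nice observation but does not change the argument.
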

\begin{proof}
    We calculate using the helper lemma (\Cref{lem:Helper_lemma}):
    \begin{align*}
        M_{\mathrm{mom}}^{(1)}
        & = d[n] \cdot \int_{\ket{u}} \ketbra{u}^{\otimes n} \otimes \widehat{\sigma}_{u} \cdot du\\
        & = d[n] \cdot \int_{\ket{u}} \ketbra{u}^{\otimes n} \otimes \Big(\frac{d+n}{n} \cdot \ketbra{u} - \frac{1}{n} \cdot I\Big) \cdot du\\
        & =d[n] \cdot\frac{d+n}{n} \cdot  \int_{\ket{u}} \ketbra{u}^{\otimes n + 1} \cdot du - d[n] \cdot\frac{1}{n} \cdot \int_{\ket{u}} \ketbra{u}^{\otimes n} \otimes I \cdot du\\
        & = \frac{1}{n}  \cdot (e + X_{n+1})\cdot \Pi_{\mathrm{sym}}^n \otimes I - \frac{1}{n} \cdot \Pi_{\mathrm{sym}}^n \otimes I\\
        & = \frac{1}{n} \cdot X_{n+1} \cdot (\Pi_{\mathrm{sym}}^n \otimes I).
    \end{align*}
    This completes the proof.
\end{proof}

\begin{lemma}[The second moment operator]\label{lem:second-moment-operator}
    Given $n \geq 1$ and $d \geq 2$,
    \begin{equation*}
        M_{\mathrm{mom}}^{(2)} = \frac{1}{n^2} \cdot (X_{n+2}X_{n+1} + (n+1, n+2)) \cdot (\Pi_{\mathrm{sym}}^n \otimes I^{\otimes 2}) - \mathrm{Lower}_{\mathrm{mom}}.
    \end{equation*}
    where
    \begin{equation*}
        \mathrm{Lower}_{\mathrm{mom}} = \frac{d+n}{n^2} \cdot d[n] \cdot \int_{\ket{u}} \ketbra{u}^{\otimes n + 2} \cdot du.
    \end{equation*}
\end{lemma}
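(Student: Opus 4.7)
The plan is to expand $\widehat{\sigma}_u^{\otimes 2}$ into four monomials, apply the helper lemma (\Cref{lem:Helper_lemma}) to each of the resulting integrals, and then simplify using a symmetry identity for the Jucys--Murphy elements under conjugation by the transposition $\tau \coloneqq (n+1, n+2)$.

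First I would write
\begin{equation*}
    \widehat{\sigma}_u \otimes \widehat{\sigma}_u
    = \Big(\tfrac{d+n}{n}\Big)^2 \ketbra{u}^{\otimes 2} - \tfrac{d+n}{n^2}\big(\ketbra{u} \otimes I + I \otimes \ketbra{u}\big) + \tfrac{1}{n^2} I^{\otimes 2}.
\end{equation*}
Plugging this into $M_{\mathrm{mom}}^{(2)}$ yields four integrals. Setting $P \coloneqq \Pi_{\mathrm{sym}}^n \otimes I^{\otimes 2}$, three of them---the ``all $\ketbra{u}$'' term $A$, the ``identity on $n+2$'' term $B_1$, and the ``identity on both $n+1$ and $n+2$'' term $C$---fall directly in the scope of \Cref{lem:Helper_lemma} with $k=2$, $k=1$, and $k=0$. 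They evaluate to $\tfrac{1}{(d+n)(d+n+1)}(e+X_{n+2})(e+X_{n+1})P$, $\tfrac{1}{d+n}(e+X_{n+1})P$, and $P$, respectively.

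The remaining integral (``identity on $n+1$'', call it $B_2$) is slightly more subtle because $\ketbra{u}$ lives on positions $1,\ldots,n,n+2$. I would handle this via $B_2 = \tau B_1 \tau$, combined with the identities $\tau P \tau = P$ (since $\tau$ acts only on the last two factors, which are identities in $P$) and $\tau X_{n+1} \tau = X_{n+2} - \tau$. The latter follows by direct inspection, since conjugation by $\tau$ sends each transposition $(i,n+1)$ with $i \leq n$ to $(i,n+2)$, so $\tau X_{n+1} \tau = \sum_{i=1}^n (i,n+2) = X_{n+2} - (n+1,n+2)$. This gives $B_2 = \tfrac{1}{d+n}(e + X_{n+2} - \tau) P$.

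Adding up the four contributions with their scalar prefactors, the coefficient of $(e+X_{n+2})(e+X_{n+1})P$ is exactly $\tfrac{d+n}{n^2(d+n+1)}$. Since $\mathrm{Lower}_{\mathrm{mom}} = \tfrac{1}{n^2(d+n+1)}(e+X_{n+2})(e+X_{n+1})P$ by \Cref{lem:Helper_lemma} with $k=2$, absorbing one factor of $\mathrm{Lower}_{\mathrm{mom}}$ produces a total coefficient of $\tfrac{1}{n^2}$ on $(e+X_{n+2})(e+X_{n+1})P$. Expanding this latter product as $e + X_{n+1} + X_{n+2} + X_{n+2}X_{n+1}$ and combining with the $B_1$, $B_2$, and $C$ contributions---which together supply the terms $-e - X_{n+1} - X_{n+2} + \tau$ with coefficient $\tfrac{1}{n^2}$---everything cancels except for $\tfrac{1}{n^2}(X_{n+2}X_{n+1} + \tau)P$, as claimed. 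The main obstacle is purely bookkeeping: tracking the rational prefactors and verifying that the $e$, $X_{n+1}$, $X_{n+2}$, and $\tau$ terms collapse cleanly. No delicate estimates or deeper structural lemmas are needed.
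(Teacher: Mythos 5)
Your proposal is correct and follows essentially the same route as the paper's proof: expand $\widehat{\sigma}_u^{\otimes 2}$ into four monomials, apply \Cref{lem:Helper_lemma} to each, obtain the fourth integral by conjugating with $(n+1,n+2)$ via the identity $\tau X_{n+1}\tau = X_{n+2}-\tau$, and extract $\mathrm{Lower}_{\mathrm{mom}}$ from the leading term (your coefficient bookkeeping is just a rephrasing of the paper's split $(d+n)^2=(d+n)^{\uparrow 2}-(d+n)$). The final cancellation to $\frac{1}{n^2}(X_{n+2}X_{n+1}+(n+1,n+2))\cdot(\Pi_{\mathrm{sym}}^n\otimes I^{\otimes 2})$ checks out.
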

\begin{proof}
    We calculate:
    \begin{align*}
        M_{\mathrm{mom}}^{(2)}
        & = d[n] \cdot \int_{\ket{u}} \ketbra{u}^{\otimes n} \otimes \widehat{\sigma}_{u}^{\otimes 2} \cdot du\\
        & = d[n] \cdot \int_{\ket{u}} \ketbra{u}^{\otimes n} \otimes \Big(\frac{d+n}{n} \cdot \ketbra{u} - \frac{1}{n} \cdot I\Big)^{\otimes 2} \cdot du\\
        & = d[n] \cdot \int_{\ket{u}} \ketbra{u}^{\otimes n} \otimes \Big(\frac{(d+n)^2}{n^2} \cdot \ketbra{u}^{\otimes 2} - \frac{d+n}{n^2}\cdot \ketbra{u} \otimes I - \frac{d+n}{n^2}\cdot I \otimes \ketbra{u} + \frac{1}{n^2} \cdot I \otimes I\Big) \cdot du.
    \end{align*}
    This now splits into four terms. The first term we divide further into a main term and a lower-order term using $(d+n)^2 = (d+n)(d+n+1) - (d+n) = (d+n)^{\uparrow 2}- (d+n)$:
    \begin{align*}
        &\frac{(d+n)^2}{n^2} \cdot d[n] \cdot \int_{\ket{u}} \ketbra{u}^{\otimes n + 2} \cdot du
        \\={}& \frac{(d+n)^{\uparrow 2}}{n^2} \cdot d[n] \cdot \int_{\ket{u}} \ketbra{u}^{\otimes n + 2} \cdot du - \frac{d+n}{n^2} \cdot d[n] \cdot \int_{\ket{u}} \ketbra{u}^{\otimes n + 2} \cdot du
        \\={}& \frac{1}{n^2} \cdot (e+X_{n+2})(e+X_{n+1}) \cdot (\Pi_{\mathrm{sym}}^n \otimes I^{\otimes 2}) - \frac{d+n}{n^2} \cdot d[n] \cdot \int_{\ket{u}} \ketbra{u}^{\otimes n + 2} \cdot du\\={}&  \frac{1}{n^2} \cdot(e+X_{n+2})(e+X_{n+1}) \cdot (\Pi_{\mathrm{sym}}^n \otimes I^{\otimes 2}) - \mathrm{Lower}_{\mathrm{mom}},
    \end{align*}
    using the helper lemma (\Cref{lem:Helper_lemma}). The second term is (negative)
    \begin{align*}
        \frac{d+n}{n^2} \cdot d[n] \cdot \int_{\ket{u}}\ketbra{u}^{\otimes n+1} \otimes I \cdot du
        =\frac{1}{n^2} \cdot (e + X_{n+1}) \cdot (\Pi_{\mathrm{sym}}^{n} \otimes I^{\otimes 2}).
    \end{align*}
    The third term is identical to the second term, except with its $(n+1)$-st and $(n+2)$-nd registers exchanged. Hence, we can write it as (negative)
    \begin{align*}
        &(n+1, n+2) \cdot \Big(\frac{1}{n^2} \cdot (e + X_{n+1}) \cdot (\Pi_{\mathrm{sym}}^{n} \otimes I^{\otimes 2})\Big) \cdot (n+1, n+2)\\
        ={}&  \frac{1}{n^2} \cdot (e + (n+1, n+2)\cdot X_{n+1}\cdot(n+1, n+2)) \cdot (\Pi_{\mathrm{sym}}^{n} \otimes I^{\otimes 2}),
    \end{align*}
   where we have used here that $(n+1, n+2)$ commutes with $(\Pi_{\mathrm{sym}}^n \otimes I^{\otimes 2})$.
    Finally, the fourth term is
    \begin{equation*}
        \frac{1}{n^2} \cdot d[n] \cdot \int_{\ket{u}} \ketbra{u}^{\otimes n} \otimes I^{\otimes 2} \cdot du = \frac{1}{n^2} \cdot (\Pi_{\mathrm{sym}}^n \otimes I^{\otimes 2}).
    \end{equation*}
    Comparing all four terms (and excluding the lower-order term from the first term), we see that they are all an element of the symmetric group algebra times $(1/n^2) \cdot (\Pi_{\mathrm{sym}}^n \otimes I^{\otimes 2})$. 
    Combining these prefactors, we get
    \begin{align*}
        &(e+ X_{n+2})(e+X_{n+1}) - (e+X_{n+1}) - (e + (n+1, n+2) \cdot X_{n+1} \cdot (n+1, n+2)) + e\\
        ={}&X_{n+2} + X_{n+2}X_{n+1} - (n+1, n+2) \cdot X_{n+1} \cdot (n+1, n+2)\\
        ={}&(n+1, n+2) + X_{n+2}X_{n+1},
    \end{align*}
    where the final step uses the fact that $(n+1, n+2) \cdot X_{n+1} \cdot (n+1, n+2))$ contains all the swaps in $X_{n+2}$ except $(n+1, n+2)$.
    Multiplying this by $(1/n^2) \cdot (\Pi_{\mathrm{sym}}^n \otimes I^{\otimes 2})$ and subtracting off the lower-order term completes the proof.
\end{proof}

Now we use these operators to compute the first and second moments of the Grier--Pashayan--Schaeffer algorithm.
As before, we will compute these in the general case when their algorithm is performed on a mixed state $\psi_{\mathrm{sym}}$ inside the symmetric subspace.

\begin{lemma}[First moment]\label{lem:first-moment-gps}
    Let $\psi_{\mathrm{sym}}$ be a mixed state on $\lor^n \C^d$.
    If $\ket{\bv}$ is the output of Hayashi's measurement when performed on $\psi_{\mathrm{sym}}$, then
    \begin{equation*}
        \E[\widehat{\sigma}_{\bv}] = (\psi_{\mathrm{sym}})_{\reg{1}}.
    \end{equation*}
\end{lemma}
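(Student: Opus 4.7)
The plan is to combine \Cref{prop:moment} (with $k=1$) with the formula for $M_{\mathrm{mom}}^{(1)}$ from \Cref{lem:first-moment-operator} and then carry out a short partial trace computation exploiting the permutation symmetry of $\psi_{\mathrm{sym}}$. Explicitly, applying \Cref{prop:moment} gives
\begin{equation*}
\E[\widehat{\sigma}_{\bv}] = \tr_{\reg{1}\dots\reg{n}}\bigl(M_{\mathrm{mom}}^{(1)} \cdot \psi_{\mathrm{sym}} \otimes I_d\bigr),
\end{equation*}
and substituting the formula $M_{\mathrm{mom}}^{(1)} = \frac{1}{n} \cdot X_{n+1} \cdot (\Pi_{\mathrm{sym}}^{n} \otimes I)$ together with the fact that $\Pi_{\mathrm{sym}}^n \cdot \psi_{\mathrm{sym}} = \psi_{\mathrm{sym}}$ (since $\psi_{\mathrm{sym}}$ lives in the symmetric subspace) reduces the task to evaluating $\frac{1}{n} \tr_{\reg{1}\dots\reg{n}}(X_{n+1} \cdot \psi_{\mathrm{sym}} \otimes I_d)$.

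Next I would expand $X_{n+1} = \sum_{i=1}^{n} (i,n+1)$ and treat each transposition separately. For a single term $(i,n+1)$, I would apply \Cref{prop:no-tensor-network-diagrams} (with $\reg{A}$ playing the role of register $i$ and $\reg{C}$ the role of register $n+1$) to see that tracing out register $i$ of $(i,n+1) \cdot \psi_{\mathrm{sym}} \otimes I_d$ transfers the contents of register $i$ into register $n+1$, leaving the remaining registers $\{1,\dots,n\}\setminus\{i\}$ to be traced out ordinarily. This yields $(\psi_{\mathrm{sym}})_{\reg{i}}$ placed in register $n+1$. Because $\psi_{\mathrm{sym}}$ is permutation-invariant on its $n$ registers, all single-register reduced densities coincide: $(\psi_{\mathrm{sym}})_{\reg{i}} = (\psi_{\mathrm{sym}})_{\reg{1}}$ for every $i \in [n]$.

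Summing the $n$ identical contributions and dividing by $n$ yields $\E[\widehat{\sigma}_{\bv}] = (\psi_{\mathrm{sym}})_{\reg{1}}$. There is no substantive obstacle here: the analytic work has been absorbed into \Cref{lem:first-moment-operator}, and what remains is a bookkeeping exercise with permutations and partial traces. The only subtle point to flag is the use of symmetry to collapse the $n$ terms coming from $X_{n+1}$ into a single reduced density matrix, which is exactly the mechanism by which the $1/n$ normalization cancels.
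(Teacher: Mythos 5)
Your proposal is correct and follows essentially the same route as the paper's proof: invoke \Cref{prop:moment} with $k=1$, substitute the formula from \Cref{lem:first-moment-operator}, expand $X_{n+1}$ as a sum of transpositions, use the permutation invariance of $\psi_{\mathrm{sym}}$ to collapse the $n$ terms into one, and finish with \Cref{prop:no-tensor-network-diagrams}. No gaps.
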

\noindent Recall our notation for partial trace: $(\psi_{\mathrm{sym}})_{\reg{1}} = \tr_{\reg{2}\dots\reg{n}}(\psi_{\mathrm{sym}})$.
\begin{proof}
By \Cref{prop:moment},
\begin{align*}
    \E[\widehat{\sigma}_{\bv}]
    &= \tr_{\reg{1}\dots\reg{n}}(M_{\mathrm{mom}}^{(1)} \cdot \psi_{\mathrm{sym}} \otimes I_d)\\
    &= \tr_{\reg{1}\dots\reg{n}}\Big(\Big(\frac{1}{n} \cdot X_{n+1} \cdot (\Pi_{\mathrm{sym}}^n \otimes I)\Big) \cdot \psi_{\mathrm{sym}} \otimes I_d\Big) \tag{by \Cref{lem:first-moment-operator}}\\
    &= \frac{1}{n} \cdot \tr_{\reg{1}\dots\reg{n}}(X_{n+1} \cdot \psi_{\mathrm{sym}} \otimes I_d)\\
    &= \frac{1}{n} \cdot \sum_{i=1}^n \tr_{\reg{1}\dots\reg{n}}((i, n+1) \cdot \psi_{\mathrm{sym}} \otimes I_d)\\
    &= \frac{1}{n} \cdot \sum_{i=1}^n \tr_{\reg{1}\dots\reg{n}}((1, n+1) \cdot \psi_{\mathrm{sym}} \otimes I_d) \tag{because $\psi_{\mathrm{sym}} \in \lor^n \C^d$} \\
    &= (\psi_{\mathrm{sym}})_{\reg{1}}. \tag{by \Cref{prop:no-tensor-network-diagrams}}
\end{align*}
This completes the proof.
\end{proof}

\begin{lemma}[Second moment]\label{lem:second-moment-gps}
    Let $\psi_{\mathrm{sym}}$ be a mixed state on $\lor^n \C^d$.
    If $\ket{\bv}$ is the output of Hayashi's measurement when performed on $\psi_{\mathrm{sym}}$, then
    \begin{equation*}
    \E[\widehat{\sigma}_{\bv} \otimes \widehat{\sigma}_{\bv}]=
         \frac{n-1}{n} \cdot (\psi_{\mathrm{sym}})_{\reg{1},\reg{2}} + \frac{1}{n} \cdot \big((\psi_{\mathrm{sym}})_{\reg{1}} \otimes I + I \otimes (\psi_{\mathrm{sym}})_{\reg{1}}\big) \cdot \swap + \frac{1}{n^2} \cdot \swap - \mathrm{Lower}_{\psi_{\mathrm{sym}}},
    \end{equation*}
    where $\mathrm{Lower}_{\psi_{\mathrm{sym}}} \in \symsep(d)$.
\end{lemma}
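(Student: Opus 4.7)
The plan is to invoke \Cref{prop:moment}, which gives
\begin{equation*}
    \E[\widehat{\sigma}_{\bv} \otimes \widehat{\sigma}_{\bv}] = \tr_{\reg{1}\dots\reg{n}}\big(M_{\mathrm{mom}}^{(2)} \cdot \psi_{\mathrm{sym}} \otimes I_d^{\otimes 2}\big),
\end{equation*}
and then to plug in the closed form for $M_{\mathrm{mom}}^{(2)}$ from \Cref{lem:second-moment-operator}. Since $\psi_{\mathrm{sym}}$ lies in $\lor^n \C^d$, the factor $\Pi_{\mathrm{sym}}^n \otimes I^{\otimes 2}$ in the second-moment operator acts as the identity on $\psi_{\mathrm{sym}} \otimes I^{\otimes 2}$, so the core task is to evaluate $\tr_{\reg{1}\dots\reg{n}}(\pi \cdot \psi_{\mathrm{sym}} \otimes I^{\otimes 2})$ for each permutation $\pi$ appearing in the expansion of $X_{n+2} X_{n+1} + (n+1, n+2)$, and to separately handle the $\mathrm{Lower}_{\mathrm{mom}}$ contribution.

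Next, I would expand $X_{n+2} X_{n+1} = \sum_{i=1}^{n+1}\sum_{j=1}^n (i, n+2)(j, n+1)$ and sort the resulting permutations into three disjoint families: the 3-cycles $(j, n+1, n+2)$ for $j \in [n]$ (arising from $i = j \in [n]$), the 3-cycles $(j, n+2, n+1)$ for $j \in [n]$ (arising from $i = n+1$), and the disjoint-transposition products $(i, n+2)(j, n+1)$ for $i \neq j$ in $[n]$. The key observation is that the $S_n$-symmetry of $\psi_{\mathrm{sym}}$, combined with the invariance of the partial trace under conjugation by permutations acting only on the traced-out registers, implies that every member of a given family contributes identically. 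Hence one representative per family suffices.

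For each representative I expect a short index-tracking calculation (alternatively via the tools in \Cref{prop:no-tensor-network-diagrams,prop:weird-prop}) to yield
\begin{align*}
    \tr_{\reg{1}\dots\reg{n}}\big((j, n+1, n+2) \cdot \psi_{\mathrm{sym}} \otimes I^{\otimes 2}\big) &= \big((\psi_{\mathrm{sym}})_{\reg{1}} \otimes I\big) \cdot \swap, \\
    \tr_{\reg{1}\dots\reg{n}}\big((j, n+2, n+1) \cdot \psi_{\mathrm{sym}} \otimes I^{\otimes 2}\big) &= \big(I \otimes (\psi_{\mathrm{sym}})_{\reg{1}}\big) \cdot \swap, \\
    \tr_{\reg{1}\dots\reg{n}}\big((i, n+2)(j, n+1) \cdot \psi_{\mathrm{sym}} \otimes I^{\otimes 2}\big) &= (\psi_{\mathrm{sym}})_{\reg{1}, \reg{2}}, \\
    \tr_{\reg{1}\dots\reg{n}}\big((n+1, n+2) \cdot \psi_{\mathrm{sym}} \otimes I^{\otimes 2}\big) &= \swap.
\end{align*}
Weighting each family by its size ($n$, $n$, and $n(n-1)$ respectively, plus a single $(n+1, n+2)$ term) and multiplying by the prefactor $\tfrac{1}{n^2}$ from \Cref{lem:second-moment-operator} recovers exactly the three explicit terms in the claimed formula.

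Finally, I would define $\mathrm{Lower}_{\psi_{\mathrm{sym}}} \coloneqq \tr_{\reg{1}\dots\reg{n}}(\mathrm{Lower}_{\mathrm{mom}} \cdot \psi_{\mathrm{sym}} \otimes I^{\otimes 2})$. Substituting the explicit formula for $\mathrm{Lower}_{\mathrm{mom}}$, this partial trace becomes a nonnegative-weight integral of $\tr(\ketbra{u}^{\otimes n} \psi_{\mathrm{sym}}) \cdot \ketbra{u} \otimes \ketbra{u} \cdot \du$, which is a conical combination of matrices of the form $X \otimes X$ with $X = \ketbra{u}$ Hermitian, hence lies in $\symsep(d)$. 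I expect the main tedious step to be the index-tracking partial trace calculations for the three permutation families; the $S_n$-symmetry of $\psi_{\mathrm{sym}}$ is what makes them tractable by collapsing each family to a single representative.
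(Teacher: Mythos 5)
Your proposal is correct and follows essentially the same route as the paper's proof: invoke \Cref{prop:moment}, substitute the second moment operator from \Cref{lem:second-moment-operator}, expand $X_{n+2}X_{n+1}$ into the same three permutation families, collapse each family to one representative via the $S_n$-invariance of $\psi_{\mathrm{sym}}$, evaluate the partial traces with \Cref{prop:no-tensor-network-diagrams}, and identify the lower-order term as a conical combination of $\ketbra{u}^{\otimes 2}$ lying in $\symsep(d)$. All the representative partial-trace values and family counts you state agree with the paper's computation, so no gap remains.
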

\begin{proof}
By \Cref{prop:moment},
\begin{align}\label{eq:2nd-mom}
    \E[\widehat{\sigma}_{\bv}\otimes \widehat{\sigma}_{\bv}]
    &= \tr_{\reg{1}\dots\reg{n}}(M_{\mathrm{mom}}^{(2)} \cdot \psi_{\mathrm{sym}} \otimes I_d^{\otimes 2}).
\end{align}
From \Cref{lem:second-moment-operator}, we have that
\begin{equation*}
    M_{\mathrm{mom}}^{(2)} = \frac{1}{n^2} \cdot (X_{n+2}X_{n+1} + (n+1, n+2)) \cdot (\Pi_{\mathrm{sym}}^n \otimes I^{\otimes 2}) - \mathrm{Lower}_{\mathrm{mom}},
    \end{equation*}
    where
    \begin{equation*}
        \mathrm{Lower}_{\mathrm{mom}} = \frac{d+n}{n^2} \cdot d[n] \cdot \int_{\ket{u}} \ketbra{u}^{\otimes n + 2} \cdot du.
    \end{equation*}
    Let us begin by calculating the main terms. Plugging these in to \Cref{eq:2nd-mom} above, we get
    \begin{align}
        &\tr_{\reg{1}\dots\reg{n}}\Big(\Big(\frac{1}{n^2} \cdot (X_{n+2}X_{n+1} + (n+1, n+2)) \cdot (\Pi_{\mathrm{sym}}^n \otimes I_d^{\otimes 2})\Big) \cdot \psi_{\mathrm{sym}} \otimes I_d^{\otimes 2}\Big)\nonumber\\
        ={}& \tr_{\reg{1}\dots\reg{n}}\Big(\Big(\frac{1}{n^2} \cdot (X_{n+2}X_{n+1} + (n+1, n+2))\Big) \cdot \psi_{\mathrm{sym}} \otimes I_d^{\otimes 2}\Big),\label{eq:main-term}
    \end{align}
    because $\psi_{\mathrm{sym}} \in \lor^n \C^d$.
    Now we calculate the product of the Jucys--Murphy elements:
    \begin{align*}
        X_{n+2} X_{n+1}
        &= \sum_{i=1}^{n+1} (i, n+2) \sum_{j = 1}^n (j, n+1)\\
        &= \sum_{i=1}^n (i, n+2)(i, n+1) + \sum_{j=1}^n (n+1, n+2) (j, n+1) + \sum_{i \neq j=1}^n (i, n+2)(j, n+1)\\
        &= \sum_{i=1}^n (i, n+1, n+2) + \sum_{j=1}^n (j, n+2, n+1) +\sum_{i \neq j=1}^n (i, n+2)(j, n+1)\\
        &= \sum_{i=1}^n (i, n+1)(n+1, n+2) + \sum_{j=1}^n (j, n+2)(n+1, n+2) +\sum_{i \neq j=1}^n (i, n+2)(j, n+1).
    \end{align*}
    We will plug these sums back into \Cref{eq:main-term} above one at a time. First,
    \begin{align*}
        &\frac{1}{n^2}\cdot\sum_{i=1}^n \tr_{\reg{1}\dots\reg{n}}\Big((i,n+1)(n+1, n+2) \cdot \psi_{\mathrm{sym}} \otimes I_d^{\otimes 2}\Big)\\
        ={}&\frac{1}{n^2}\cdot\sum_{i=1}^n \tr_{\reg{1}\dots\reg{n}}\Big((i,n+1) \cdot \psi_{\mathrm{sym}} \otimes I_d^{\otimes 2}\Big) \cdot \swap\\
        ={}&\frac{1}{n^2}\cdot\sum_{i=1}^n \tr_{\reg{1}\dots\reg{n}}\Big((1,n+1) \cdot \psi_{\mathrm{sym}} \otimes I_d^{\otimes 2}\Big) \cdot \swap \tag{because $\psi_{\mathrm{sym}} \in \lor^n \C^d$}\\
        ={}&\frac{1}{n}\cdot (\psi_{\mathrm{sym}})_{\reg{1}} \otimes I \cdot \swap. \tag{by \Cref{prop:no-tensor-network-diagrams}}
    \end{align*}
    A similar calculation shows that
    \begin{equation*}
        \frac{1}{n^2}\cdot\sum_{j=1}^n \tr_{\reg{1}\dots\reg{n}}\Big((j, n+2)(n+1, n+2) \cdot \psi_{\mathrm{sym}} \otimes I_d^{\otimes 2}\Big)
        = \frac{1}{n}\cdot I \otimes (\psi_{\mathrm{sym}})_{\reg{1}} \cdot \swap.
    \end{equation*}
    Next,
    \begin{align*}
    &\frac{1}{n^2}\cdot\sum_{i\neq j=1}^n \tr_{\reg{1}\dots\reg{n}}\Big((i,n+2)(j,n+1) \cdot \psi_{\mathrm{sym}} \otimes I_d^{\otimes 2}\Big)\\
    ={}&\frac{1}{n^2}\cdot\sum_{i\neq j=1}^n \tr_{\reg{1}\dots\reg{n}}\Big((1,n+2)(2,n+1) \cdot \psi_{\mathrm{sym}} \otimes I_d^{\otimes 2}\Big) \tag{because $\psi_{\mathrm{sym}} \in \lor^n \C^d$}\\
    ={}& \frac{n-1}{n}\cdot (\psi_{\mathrm{sym}})_{\reg{1},\reg{2}}. \tag{by \Cref{prop:no-tensor-network-diagrams}}
    \end{align*}
    The final term which occurs in \Cref{eq:main-term} is
    \begin{equation*}
        \frac{1}{n^2} \cdot \tr_{\reg{1}\dots\reg{n}}\Big((n+1, n+2) \cdot \psi_{\mathrm{sym}} \otimes I_d^{\otimes 2}\Big) = \frac{1}{n^2} \cdot \swap.
    \end{equation*}
    This accounts for all the main terms in the lemma statement.
    The only remaining term we have to account for is the lower order term, which comes from
    \begin{align*}
        \tr_{\reg{1}\dots\reg{n}}\Big(\mathrm{Lower}_{\mathrm{mom}} \cdot \psi_{\mathrm{sym}} \otimes I_d^{\otimes 2}\Big)
        &= \frac{d+n}{n^2} \cdot d[n] \cdot \int_{\ket{u}} \tr_{\reg{1}\dots\reg{n}}\Big(\ketbra{u}^{\otimes n+2} \cdot \psi_{\mathrm{sym}} \otimes I_d^{\otimes 2}\Big)\cdot du\\
        &= \frac{d+n}{n^2} \cdot d[n] \cdot \int_{\ket{u}} \tr\Big(\ketbra{u}^{\otimes n} \cdot \psi_{\mathrm{sym}}\Big) \cdot \ketbra{u}^{\otimes 2}\cdot du,
    \end{align*}
    which is a nonnegative linear combination of terms of the form $\ketbra{u}^{\otimes 2}$ and is therefore in~$\symsep(d)$.
    This completes the proof.
\end{proof}

Now we specialize these results to the case of $\psi_{\mathrm{sym}} = \ketbra{\psi}^{\otimes n}$ which occurs in pure state tomography.
The following is an immediate corollary of \Cref{lem:first-moment-gps,lem:second-moment-gps}.

\begin{corollary}[Moments of Grier--Pashayan--Schaeffer; \Cref{thm:gps-moments}, restated]\label{cor:gps-moments}
If $\ket{\bv}$ is the output of Hayashi's measurement on $n$ copies of $\ket{\psi} \in \C^d$, then $\widehat{\sigma}_{\bv}$ is an unbiased estimator for $\ketbra{\psi}$, i.e.\ $\E \widehat{\sigma}_{\bv} = \ketbra{\psi}$.
In addition,
\begin{equation*}
    \E[\widehat{\sigma}_{\bv}\otimes \widehat{\sigma}_{\bv}] = \frac{n-1}{n} \cdot \ketbra{\psi}^{\otimes 2} + \frac{1}{n} \cdot \big(\ketbra{\psi} \otimes I_d + I_d \otimes \ketbra{\psi}\big) \cdot \swap + \frac{1}{n^2} \cdot \swap-\mathrm{Lower}_{\psi},
    \end{equation*}
    where $\mathrm{Lower}_{\psi} \in \symsep(d)$.
\end{corollary}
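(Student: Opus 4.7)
The plan is to directly specialize \Cref{lem:first-moment-gps} and \Cref{lem:second-moment-gps} to the case $\psi_{\mathrm{sym}} = \ketbra{\psi}^{\otimes n}$. The hypothesis is satisfied because any tensor power of a pure state lies in the symmetric subspace, so $\ketbra{\psi}^{\otimes n} \in \lor^n \C^d$ and both lemmas apply.

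For the first claim, I would observe that the partial trace calculation is immediate: $(\ketbra{\psi}^{\otimes n})_{\reg{1}} = \tr_{\reg{2}\dots\reg{n}}(\ketbra{\psi}^{\otimes n}) = \ketbra{\psi}$, since tracing out $n-1$ pure factors just returns the remaining one. Substituting this into \Cref{lem:first-moment-gps} yields $\E[\widehat{\sigma}_{\bv}] = \ketbra{\psi}$, which is the unbiasedness statement.

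For the second moment, I would apply the same reasoning to get $(\ketbra{\psi}^{\otimes n})_{\reg{1},\reg{2}} = \ketbra{\psi}^{\otimes 2}$ and $(\ketbra{\psi}^{\otimes n})_{\reg{1}} = \ketbra{\psi}$. Plugging these two identities into the formula from \Cref{lem:second-moment-gps} gives exactly the claimed expression, with $\mathrm{Lower}_\psi \coloneqq \mathrm{Lower}_{\ketbra{\psi}^{\otimes n}} \in \symsep(d)$ by that lemma.

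There is really no obstacle here: the corollary is a pure substitution, and the reason the lemmas were proved for arbitrary $\psi_{\mathrm{sym}} \in \lor^n \C^d$ rather than just pure tensor powers is precisely so they can later be reused (for instance when $\psi_{\mathrm{sym}}$ comes from the output of the purification channel, where it is no longer of product form). In the present setting the partial traces trivialize, and the result drops out in one line each for the first and second moments.
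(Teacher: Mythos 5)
Your proposal is correct and is exactly the paper's argument: the paper presents this corollary as an immediate specialization of \Cref{lem:first-moment-gps,lem:second-moment-gps} to $\psi_{\mathrm{sym}} = \ketbra{\psi}^{\otimes n}$, where the partial traces reduce to $\ketbra{\psi}$ and $\ketbra{\psi}^{\otimes 2}$ just as you describe.
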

\noindent
This matches the second moment of our algorithm from \Cref{thm:gps-moments-tight} in the case of pure states, as $\ell(\blambda) = 1$ always for pure states.

\section{A new unbiased estimator for mixed state tomography}\label{sec:unbiased-estimator}

In this section, we apply our reduction to the Grier--Pashayan--Schaeffer algorithm to obtain a new unbiased estimator for mixed state tomography. We will use the moments of $\calA_{\mathrm{GPS}}$ from~\Cref{sec:gps} (\Cref{thm:gps-moments}) to derive the moments of the resulting mixed state tomography algorithm.

We begin by applying the generic reduction to get $\mixed(\calA_{\mathrm{GPS}})$ and prove~\Cref{thm:gps-moments-loose}. In~\Cref{sec:gps-mixed-llambda}, we obtain the improved result of~\Cref{thm:gps-moments-tight} that is required for some of our applications. We do this by analyzing the algorithm $\mixed^+(\calA_{\mathrm{GPS}})$ that results from quasi-purification.

\subsection{Warmup: direct reduction to pure state tomography}\label{sec:gps-mixed}

First, let us apply our reduction to the Grier--Pashayan--Schaeffer algorithm.
This results in the following mixed state tomography algorithm.

{
\floatstyle{boxed} 
\restylefloat{figure}
\begin{figure}[H]
Given $n$ copies of $\rho$:
\begin{enumerate}
    \item\label{item:purify} First apply $\purifychan^{d, r}$ to produce $n$ copies of a random purification $\ket{\brho}_{\reg{AB}} \in \C^d \otimes \C^r$.
    \item Apply the Grier--Pashayan--Schaeffer algorithm to learn an estimate $\widehat{\sigma}_{\bv}$ of $\ketbra{\brho}$.
    \item Set $(\widehat{\rho}_{\bv})_{\reg{A}} = \tr_\reg{B}((\widehat{\sigma}_{\bv})_{\reg{AB}})$ of $\rho$. Output $\widehat{\rho}_{\bv}$.
\end{enumerate}
\caption{The mixed state tomography algorithm $\mixed(\gps)$; \Cref{fig:gps-reduction-basic}, restated.}
\label{fig:gps-reduction-basic-restate}
\end{figure}
}

\noindent
The main result of this section is a formula for the first and second moments of the output $\widehat{\rho}_{\bv}$.

\begin{theorem}[Moments of $\mixed(\gps)$; \Cref{thm:gps-moments-loose}, restated] \label{thm:gps-moments-loose-2}
    Let $\widehat{\rho}_{\bv}$ be the output of $\mixed(\calA_{\mathrm{GPS}})$ in \Cref{fig:gps-reduction-basic-restate} when run on $n$ copies of a rank-$r$ state $\rho \in \C^{d \times d}$.
    Then $\widehat{\rho}_{\bv}$ is an unbiased estimator for $\rho$ with second moment
    \begin{equation*}
    \E[\widehat{\rho}_{\bv}\otimes \widehat{\rho}_{\bv}] = \frac{n-1}{n} \cdot \rho^{\otimes 2} + \frac{1}{n} \cdot \big(\rho \otimes I_d + I_d \otimes \rho\big) \cdot \swap + \frac{r}{n^2} \cdot \swap-\mathrm{Lower}_{\rho},
    \end{equation*}
    where $\mathrm{Lower}_{\rho} \in \symsep(d)$.
\end{theorem}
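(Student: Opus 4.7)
The plan is to reduce directly to the pure-state moment formula of \Cref{cor:gps-moments} by conditioning on the random purification $\ket{\brho} \in \C^d \otimes \C^r$ produced by $\purifychan^{d,r}$. Since $\purifychan^{d,r}(\rho^{\otimes n}) = \E_{\ket{\brho}} \ketbra{\brho}^{\otimes n}$, the output of $\gps$ on this mixture is, by linearity, the average over $\ket{\brho}$ of its output on each pure input $\ketbra{\brho}^{\otimes n}$. Unbiasedness is then immediate: \Cref{cor:gps-moments} applied in dimension $D = dr$ gives $\E[\widehat{\sigma}_{\bv} \mid \ket{\brho}] = \ketbra{\brho}$, and partial-tracing over the $\reg{B}$ register and averaging over $\ket{\brho}$ yields $\rho$, because every purification of $\rho$ satisfies $\tr_{\reg{B}}\ketbra{\brho} = \rho$.

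For the second moment I would apply \Cref{cor:gps-moments} in dimension $D$ to $\ketbra{\brho}^{\otimes n}$ and then push $\tr_{\reg{B}_1\reg{B}_2}$ and $\E_{\ket{\brho}}$ through each of its four summands. The term $\frac{n-1}{n}\ketbra{\brho}^{\otimes 2}$ collapses to $\frac{n-1}{n}\rho^{\otimes 2}$ since partial traces factor across the two copies. For the cross terms $\frac{1}{n}(\ketbra{\brho}\otimes I_D + I_D \otimes \ketbra{\brho})\cdot \swap_{\reg{A}\reg{B}}$, I would factor the swap as $\swap_{\reg{A}\reg{B}} = \swap_{\reg{A}} \otimes \swap_{\reg{B}}$ via \Cref{prop:permute-factor} and then invoke \Cref{prop:weird-prop} to obtain $(\rho \otimes I_d)\cdot\swap_{\reg{A}}$ and $(I_d \otimes \rho)\cdot\swap_{\reg{A}}$ respectively. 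The bare $\frac{1}{n^2}\swap_{\reg{A}\reg{B}}$ term simplifies via $\tr_{\reg{B}_1\reg{B}_2}(\swap_{\reg{B}}) = \tr(\swap_r) = r$, producing the $\frac{r}{n^2}\swap$ coefficient that is the sole quantitative difference from the pure-state formula.

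The remaining task is to show that the new lower-order piece lies in $\symsep(d)$. Inspecting the proof of \Cref{lem:second-moment-gps}, the pure-state term $\mathrm{Lower}_{\ketbra{\brho}^{\otimes n}}$ is a nonnegative combination of rank-one matrices $\ketbra{u}^{\otimes 2}$ with $\ket{u}\in \C^D$ and therefore lies in $\symsep(D)$; averaging over $\ket{\brho}$ preserves conical membership. Then \Cref{prop:symset-partial-trace} shows that $\tr_{\reg{B}_1\reg{B}_2}(\cdot)$ maps $\symsep(D)$ into $\symsep(d)$, which is exactly the qualitative guarantee the theorem requires.

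I do not anticipate any serious obstacle: the argument is largely bookkeeping once one sees that \Cref{prop:weird-prop} and \Cref{prop:symset-partial-trace} are tailored to exactly the two nontrivial calculations above. The only place where the rank $r$ of the purifying register enters nontrivially is through $\tr(\swap_r) = r$, which is precisely what makes this basic estimator suboptimal by a factor of $r$ on the $\swap$ coefficient relative to the quasi-purified variant analyzed in \Cref{thm:gps-moments-tight}.
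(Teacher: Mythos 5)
Your proposal is correct and follows essentially the same route as the paper's proof: condition on the purification $\ket{\brho}$, apply \Cref{cor:gps-moments} in dimension $D = dr$, push $\tr_{\reg{B}_1\reg{B}_2}$ through the four terms using \Cref{prop:permute-factor}, \Cref{prop:weird-prop}, and $\tr(\swap_{\reg{B}}) = r$, and handle the lower-order term with \Cref{prop:symset-partial-trace}. No gaps.
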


\begin{proof}
We will compute the first and second moments, conditioned on the outcome of the random purification $\ket{\brho}$ in \hyperref[item:purify]{Step~\ref*{item:purify}} of the algorithm.
It will turn out that the moments don't depend on the purification, so then we will be done.
First, some notation: we will write $\reg{A}$ for the $\C^d$ register of $\ket{\brho}$ and $\reg{B}$ for its $\C^r$ register.
We can calculate the first moment using \Cref{cor:gps-moments}:
\begin{equation*}
    \E_{\bv}[\widehat{\rho}_{\bv}]
    = \E_{\bv}[\tr_\reg{B}((\widehat{\sigma}_{\bv})_{\reg{AB}})]
    = \tr_\reg{B}(\ketbra{\brho}_{\reg{AB}})
    = \rho.
\end{equation*}
As this holds for any $\ket{\brho}$,
it also holds on average for a random $\ket{\brho}$.
Thus, $\widehat{\rho}_{\bv}$ is an unbiased estimator for $\rho$.
Now for the second moment: here, we are looking at two copies of each register, $\reg{A}_1, \reg{B}_1$, and $\reg{A}_2, \reg{B}_2$.
Conditioned on the purification $\ket{\brho}$, we have that
\begin{equation*}
    \E_{\bv}[\widehat{\rho}_{\bv} \otimes \widehat{\rho}_{\bv}]
    = \E_{\bv}[\tr_{\reg{B}_1}((\widehat{\sigma}_{\bv})_{\reg{A}_1\reg{B}_1}) \otimes \tr_{\reg{B}_2}((\widehat{\sigma}_{\bv})_{\reg{A}_2\reg{B}_2})]
    = \E_{\bv}[\tr_{\reg{B}_1 \reg{B}_2}(\widehat{\sigma}_{\bv} \otimes \widehat{\sigma}_{\bv})]
    = \tr_{\reg{B}_1 \reg{B}_2}\big(\E_{\bv}[\widehat{\sigma}_{\bv} \otimes \widehat{\sigma}_{\bv}]\big).
\end{equation*}
To calculate the expectation in the partial trace, we can appeal to \Cref{cor:gps-moments}, which states that
\begin{equation*}
    \E_{\bv}[\widehat{\sigma}_{\bv}\otimes \widehat{\sigma}_{\bv}] = \frac{n-1}{n} \cdot \ketbra{\brho}^{\otimes 2} + \frac{1}{n} \cdot \big(\ketbra{\brho} \otimes I_{\reg{A}_2 \reg{B}_2} + I_{\reg{A}_1 \reg{B}_1} \otimes \ketbra{\brho}\big) \cdot \swap_{\reg{A} \reg{B}} + \frac{1}{n^2} \cdot \swap_{\reg{A} \reg{B}}-\mathrm{Lower}_{\ket{\brho}},
\end{equation*}
where $\mathrm{Lower}_{\ket{\brho}} \in \symsep(D)$ for $D = d\cdot r$.
Here $\swap_{\reg{A} \reg{B}}$ is the operator which swaps $\reg{A}_1 \reg{B}_1$ with $\reg{A}_2\reg{B}_2$; note that by \Cref{prop:permute-factor} it factorizes as
\begin{equation*}\swap_{\reg{A} \reg{B}} = \swap_{\reg{A}} \otimes \swap_{\reg{B}}.
\end{equation*}
Now we compute the partial trace of this expression term-by-term.
First,
\begin{equation*}
\tr_{\reg{B}_1\reg{B}_2}(\ketbra{\brho}^{\otimes 2})
= \tr_{\reg{B}_1}(\ketbra{\brho}_{\reg{A}_1\reg{B}_1}) \otimes \tr_{\reg{B}_2}(\ketbra{\brho}_{\reg{A}_2\reg{B}_2})
= \rho_{\reg{A}_1} \otimes \rho_{\reg{A}_2}.
\end{equation*}
For the second term, we have that
\begin{equation*}
    \tr_{\reg{B}_1\reg{B}_2}(\ketbra{\brho}_{\reg{A}_1\reg{B}_1} \otimes I_{\reg{A}_2\reg{B}_2} \cdot \swap_{\reg{A}\reg{B}})
    = (\rho_{\reg{A}_1} \otimes I_{\reg{A}_2}) \cdot \swap_{\reg{A}},
\end{equation*}
by \Cref{prop:weird-prop}.
For the third term, a similar calculation shows that
\begin{equation*}
    \tr_{\reg{B}_1\reg{B}_2}(I_{\reg{A}_1\reg{B}_1} \otimes \ketbra{\brho}_{\reg{A}_2\reg{B}_2}  \cdot \swap_{\reg{A}\reg{B}})
    = (I_{\reg{A}_1} \otimes \rho_{\reg{A}_2}) \cdot \swap_{\reg{A}}.
\end{equation*}
For the fourth term, we have that
\begin{equation*}
    \tr_{\reg{B}_1 \reg{B}_2}(\swap_{\reg{A} \reg{B}})
    = \tr_{\reg{B}_1 \reg{B}_2}(\swap_{\reg{A}} \otimes \swap_{\reg{B}})
    = \swap_{\reg{A}} \cdot \tr(\swap_{\reg{B}})
    = \swap_{\reg{A}} \cdot r.
\end{equation*}
Finally, for the lower-order term,
since $\mathrm{Lower}_{\ket{\brho}} \in \symsep(D)$,
 $\tr_{\reg{B}_1 \reg{B}_2}(\mathrm{Lower}_{\ket{\brho}}) \in \symsep(d)$
 due to \Cref{prop:symset-partial-trace}.

Putting everything together, we have
\begin{equation*}
    \E[\widehat{\rho}_{\bv}\otimes \widehat{\rho}_{\bv}] = \frac{n-1}{n} \cdot \rho^{\otimes 2} + \frac{1}{n} \cdot \big(\rho \otimes I_d + I_d \otimes \rho\big) \cdot \swap + \frac{r}{n^2} \cdot \swap-\tr_{\reg{B}_1 \reg{B}_2}(\mathrm{Lower}_{\ket{\brho}}).
    \end{equation*}
As this holds for any $\ket{\brho}$,
it also holds on average for a random $\ket{\brho}$.
This completes the proof.
\end{proof}

The second moment of $\mixed(\calA_{\mathrm{GPS}})$ nearly matches the bound in~\Cref{thm:gps-moments-tight}, except the $\E[\ell(\blambda)]$ in that expression is replaced by a larger factor of~$r$ here.
As mentioned before, the factor of $\E[\ell(\blambda)]$ is crucial for achieving our sample complexities for the applications of shadow tomography and tomography with limited entanglement.
We will be able to achieve the improved second moment bound from~\Cref{thm:gps-moments-tight} using quasi-purification,
which we explain in \Cref{sec:gps-mixed-llambda} below.

\subsection{Improving the reduction by only quasi-purifying}\label{sec:gps-mixed-llambda}

Now we give our unbiased estimator $\mixed^+(\calA_{\mathrm{GPS}})$ which improves upon the $\mixed(\calA_{\mathrm{GPS}})$ algorithm from \Cref{fig:gps-reduction-basic-restate} above.
To motivate our construction, let us take a new perspective on $\mixed(\calA_{\mathrm{GPS}})$ that does not treat the random purification channel as a black-box.
Previously, we viewed the random purification channel as a way to go from $n$ copies of a mixed state $\rho^{\otimes n}$ to a random purification $\ketbra{\brho}^{\otimes n}$, which is then fed into the Grier--Pashayan--Schaeffer algorithm.
However, under the perspective of \Cref{sec:the-channel}, the random purification channel performs weak Schur sampling on $\rho^{\otimes n}$, resulting in a Young diagram $\blambda \vdash n$, and the state collapses to $\rho|_{\blambda}$.
The purification channel $\purifychan^{d,r}$ is then applied, resulting in the state 
\begin{equation}\label{eq:surprise!-in-pi-sym}
    \purifychan^{d, r}(\rho|_{\blambda})
    = \ketbra{\blambda\blambda}{\blambda\blambda}_{\reg{Y} \reg{Y'}} \otimes \ketbra{\epr_{\blambda}}_{\reg{P} \reg{P'}} \otimes \Big(\frac{\nu^d_{\blambda}(\rho)}{s^d_{\blambda}(\rho)}\Big)_{\reg{Q}}  \otimes \Big(\frac{I_{\dim(V_{\blambda}^r)}}{\dim(V_{\blambda}^r)}\Big)_{\reg{Q'}}.
\end{equation}
Then the Grier--Pashayan--Schaeffer algorithm is applied to \emph{this} state. 
At first blush, this perspective looks a bit strange, because their algorithm typically takes as input a state of the form $\ketbra{\psi}^{\otimes n}$,
but the state in \Cref{eq:surprise!-in-pi-sym} is clearly not of this form.
But this state \emph{is} at least in the symmetric subspace $\lor^n (\C^d \otimes \C^r)$ due to \Cref{lem:double-schur-pi-sym},
and so applying the Grier--Pashayan--Schaeffer algorithm to this state is at the very least a well-defined operation.
In addition, we have seen in \Cref{sec:gps} that it is possible to analyze their algorithm for general states drawn from the symmetric subspace, rather than just states of the form $\ketbra{\psi}^{\otimes n}$.

From this perspective, there is no particular reason why, for each $\blambda$, we must purify our state using a purification register $\C^r$ of the same dimension $r$.
Instead, for each $\blambda$, our algorithm will purify to the smallest dimension possible for the purification register, which is $\ell(\blambda)$.
(That $\ell(\blambda)$ is the smallest possible purification dimension comes from the fact that the purifying irrep register $V_{\blambda}^r$ in \Cref{eq:surprise!-in-pi-sym} is only well-defined when $r \geq \ell(\blambda)$.)
Intuitively, reducing the size of the purification register
reduces the size of the Hilbert space that the Grier--Pashayan--Schaeffer algorithm needs to search over,
reducing the number of copies needed.

{
\floatstyle{boxed} 
\restylefloat{figure}
\begin{figure}[H]
Given $n$ copies of $\rho$:
\begin{enumerate}
    \item Apply the Schur transform $\schur^d$ to $\rho^{\otimes n}$.
    \item Perform weak Schur sampling. Letting $\blambda$ be the outcome, the state collapses to $\rho|_{\blambda}$.
    Set $\bell \coloneqq \ell(\blambda)$.
    \item Apply the purification channel to compute $\purifychan^{d, \bell}(\rho|_{\blambda})$. 
    \item Apply an inverse Schur transform to both registers, i.e.\ the operation $(\schur^d \otimes \schur^{\bell})^{\dagger}$.
    Write
    \begin{equation*}
        \tau_{\blambda}(\rho) \coloneqq (\schur^d \otimes \schur^{\bell})^{\dagger} \cdot \purifychan^{d, \bell}(\rho|_{\blambda}) \cdot (\schur^d \otimes \schur^{\bell})
    \end{equation*}
    for the resulting state.
    \item
    The state $\tau_{\blambda}(\rho)$ is an element of $\lor^n (\C^d \otimes \C^{\bell}) \cong \lor^n \C^{\bD}$, for $\bD = d\cdot \bell$.
    Apply the Grier--Pashayan--Schaeffer algorithm to learn an estimate $\widehat{\sigma}_{\bv}^{\blambda}$.
    \item Set $\widehat{\rho}^{\blambda}_{\bv} = \tr_2(\widehat{\sigma}_{\bv}^{\blambda})$. Output $\widehat{\rho}^{\blambda}_{\bv}$.
\end{enumerate}
\caption{Our improved mixed state tomography algorithm $\mixed^+(\calA_{\mathrm{GPS}})$.}
\label{fig:gps-reduction-improved}
\end{figure}
}

Before calculating any moments, let us first consider the intermediate state $\tau_{\blambda}(\rho)$.
It has $n$ registers of type $\C^d$,
    which we will refer to as the $\reg{A}_1, \ldots, \reg{A}_n$ registers,
    and $n$ registers of type $\C^{\bell}$,
    which we will refer to as the $\reg{B}_1, \ldots, \reg{B}_n$ registers.    
    We will use the following helper lemma.
\begin{lemma}[Partial trace helper lemma]\label{lem:partial-trace-helper}
For any $1 \leq k \leq n$,
    \begin{equation*}
        \E_{\blambda}\Big[\tr_{\substack{\reg{A}_{k+1} \ldots \reg{A_{n}} \\\reg{B}_1 \ldots \reg{B_{n}}}}(\tau_{\blambda}(\rho))\Big] = \rho^{\otimes k}.
    \end{equation*}
\end{lemma}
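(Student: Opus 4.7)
The plan is to prove this in two stages, first collapsing the $\reg{B}$ registers and then averaging over $\blambda$, leveraging the two lemmas immediately preceding the statement.

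For the first stage, I will argue that tracing out $\reg{B}_1, \ldots, \reg{B}_n$ commutes with the inverse Schur transform in a convenient way. Since $(\schur^{\bell})^\dagger$ maps the primed Schur registers $\reg{Y}' \reg{P}' \reg{Q}'$ onto $\reg{B}_1 \ldots \reg{B}_n$ unitarily while $(\schur^d)^\dagger$ acts only on the $\reg{A}$ system, we have
\begin{equation*}
\tr_{\reg{B}_1 \ldots \reg{B}_n}(\tau_{\blambda}(\rho))
= (\schur^d)^\dagger \cdot \tr_{\reg{Y}' \reg{P}' \reg{Q}'}\bigl(\purifychan^{d,\bell}(\rho|_{\blambda})\bigr) \cdot \schur^d.
\end{equation*}
Applying \Cref{lem:partial-trace-restriction} (with $r = \bell$) to the inner partial trace then gives
\begin{equation*}
\tr_{\reg{B}_1 \ldots \reg{B}_n}(\tau_{\blambda}(\rho))
= (\schur^d)^\dagger \cdot \rho|_{\blambda} \cdot \schur^d.
\end{equation*}

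For the second stage, I pull the remaining partial trace outside of the expectation and apply \Cref{lem:blambda-average}:
\begin{equation*}
\E_{\blambda}\bigl[(\schur^d)^\dagger \cdot \rho|_{\blambda} \cdot \schur^d\bigr]
= (\schur^d)^\dagger \cdot \E_{\blambda}[\rho|_{\blambda}] \cdot \schur^d
= (\schur^d)^\dagger \cdot \schur^d \cdot \rho^{\otimes n} \cdot (\schur^d)^\dagger \cdot \schur^d
= \rho^{\otimes n}.
\end{equation*}
Tracing out $\reg{A}_{k+1}, \ldots, \reg{A}_n$ from $\rho^{\otimes n}$ yields $\rho^{\otimes k}$, finishing the proof.

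The only genuine subtlety is justifying the first step cleanly: the purification dimension $\bell = \ell(\blambda)$ is itself random, so the Hilbert space carrying the $\reg{B}$ registers depends on $\blambda$. However, since the first stage is carried out for each fixed $\blambda$ before taking expectations, the dimension-matching between $\reg{Y}' \reg{P}' \reg{Q}'$ and $\reg{B}_1 \ldots \reg{B}_n$ is automatic from how $\tau_{\blambda}(\rho)$ was defined, and \Cref{lem:partial-trace-restriction} applies verbatim with $r$ replaced by $\bell$. Everything else is bookkeeping.
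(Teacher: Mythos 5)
Your proposal is correct and follows essentially the same route as the paper: for fixed $\blambda$, cancel the unitary $\schur^{\bell}$ acting on the traced-out $\reg{B}$ system, invoke \Cref{lem:partial-trace-restriction} to collapse $\purifychan^{d,\bell}(\rho|_{\blambda})$ back to $\rho|_{\blambda}$, and then average using \Cref{lem:blambda-average}. The only cosmetic difference is that you defer tracing out $\reg{A}_{k+1},\ldots,\reg{A}_n$ until after the expectation, whereas the paper performs all partial traces before averaging; these operations commute, so the arguments are interchangeable.
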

\begin{proof}
    Let us first calculate the partial trace for a fixed $\blambda$.
    \begin{align*}
        \tr_{\substack{\reg{A}_{k+1} \ldots \reg{A_{n}} \\\reg{B}_1 \ldots \reg{B_{n}}}}(\tau_{\blambda}(\rho))
        &= \tr_{\substack{\reg{A}_{k+1} \ldots \reg{A_{n}} \\\reg{B}_1 \ldots \reg{B_{n}}}}\Big((\schur^d \otimes \schur^{\bell})^{\dagger} \cdot \purifychan^{d, \bell}(\rho|_{\blambda}) \cdot (\schur^d \otimes \schur^{\bell})\Big)\\
        &= \tr_{\substack{\reg{A}_{k+1} \ldots \reg{A_{n}} \\\reg{B}_1 \ldots \reg{B_{n}}}}\Big((\schur^d \otimes I)^{\dagger} \cdot \purifychan^{d, \bell}(\rho|_{\blambda}) \cdot (\schur^d \otimes I)\Big) \tag{by the cyclic property of the partial trace}\\
        &= \tr_{\reg{A}_{k+1} \ldots \reg{A_{n}}}\Big((\schur^d)^{\dagger} \cdot \tr_{\reg{B}}(\purifychan^{d, \bell}(\rho|_{\blambda})) \cdot \schur^d \Big)\\
        &= \tr_{\reg{A}_{k+1} \ldots \reg{A_{n}}}\Big((\schur^d)^{\dagger} \cdot \rho|_{\blambda} \cdot \schur^d \Big). \tag{by \Cref{lem:partial-trace-restriction}}
    \end{align*}
    Averaging over $\blambda$ gives us
    \begin{align*}
        \E_{\blambda}\Big[\tr_{\substack{\reg{A}_{k+1} \ldots \reg{A_{n}} \\\reg{B}_1, \ldots, \reg{B_{n}}}}(\tau_{\blambda}(\rho))\Big]
        &= \E_{\blambda}\Big[\tr_{\reg{A}_{k+1} \ldots \reg{A_{n}}}\Big((\schur^d)^{\dagger} \cdot \rho|_{\blambda} \cdot \schur^d \Big)\Big]\\
        &= \tr_{\reg{A}_{k+1} \ldots \reg{A_{n}}}\Big((\schur^d)^{\dagger} \cdot \E_{\blambda}[\rho|_{\blambda}] \cdot \schur^d \Big)\\
        &= \tr_{\reg{A}_{k+1} \ldots \reg{A_{n}}}\big(\rho^{\otimes n} \big) \tag{by \Cref{lem:blambda-average}}\\
        &= \rho^{\otimes k}.
    \end{align*}
    This completes the proof.
\end{proof}

Now we calculate the first and second moments of our estimator.
We begin by showing that it does indeed give an unbiased estimator of $\rho$.

\begin{theorem}[First moment of $\mixed^+(\gps)$]
    Let $\widehat{\rho}_{\bv}^{\blambda}$ be the output of our mixed state tomography algorithm.~Then
    \begin{equation*}
        \E[\widehat{\rho}_{\bv}^{\blambda}] = \rho.
    \end{equation*}
\end{theorem}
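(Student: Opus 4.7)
The plan is to condition on the weak Schur sampling outcome $\blambda$, apply the general first-moment formula for $\gps$ from \Cref{lem:first-moment-gps}, and then average over $\blambda$ using the partial trace helper \Cref{lem:partial-trace-helper}. The key conceptual observation is that even though the intermediate state $\tau_{\blambda}(\rho)$ is not of the form $\ketbra{\psi}^{\otimes n}$, it is a mixed state lying inside the symmetric subspace $\lor^n(\C^d \otimes \C^{\bell})$, which is precisely the setting in which we already analyzed the Grier--Pashayan--Schaeffer algorithm.

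First I would verify that $\tau_{\blambda}(\rho) \in \lor^n(\C^d \otimes \C^{\bell})$: by construction, $\purifychan^{d,\bell}(\rho|_{\blambda})$ in the Schur basis is supported entirely on the $\lambda = \blambda$ component described by \Cref{lem:double-schur-pi-sym}, so after applying the inverse double-Schur transform we land in the symmetric subspace of $(\C^d \otimes \C^{\bell})^{\otimes n}$. Then, conditioning on $\blambda$ and applying \Cref{lem:first-moment-gps} with $\psi_{\mathrm{sym}} = \tau_{\blambda}(\rho)$ gives
\begin{equation*}
    \E_{\bv}\big[\widehat{\sigma}_{\bv}^{\blambda}\big] = (\tau_{\blambda}(\rho))_{\reg{A}_1 \reg{B}_1}.
\end{equation*}

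Next I would push the partial trace through the expectation: since $\widehat{\rho}_{\bv}^{\blambda} = \tr_{\reg{B}_1}(\widehat{\sigma}_{\bv}^{\blambda})$ and the partial trace is linear,
\begin{equation*}
    \E_{\bv}\big[\widehat{\rho}_{\bv}^{\blambda}\big] = \tr_{\reg{B}_1}\big((\tau_{\blambda}(\rho))_{\reg{A}_1 \reg{B}_1}\big) = \tr_{\reg{A}_2 \ldots \reg{A}_n \reg{B}_1 \reg{B}_2 \ldots \reg{B}_n}(\tau_{\blambda}(\rho)).
\end{equation*}
Finally I would take the expectation over $\blambda$ and invoke \Cref{lem:partial-trace-helper} with $k = 1$, which says exactly that the right-hand side, averaged over $\blambda$, equals $\rho$. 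Combining gives $\E[\widehat{\rho}_{\bv}^{\blambda}] = \rho$.

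There is no real obstacle here; all the work has been done in the prior lemmas. The only point requiring a moment of care is the interchange of partial trace with expectation (both over $\bv$ and over $\blambda$) and the bookkeeping to match the indices of the partial trace in \Cref{lem:partial-trace-helper} (which traces out all $\reg{B}$ registers and all but the first $\reg{A}$ register) to the composition of tracing out the ``second factor'' in GPS with the implicit tracing over the remaining registers already baked into $(\tau_{\blambda}(\rho))_{\reg{A}_1 \reg{B}_1}$.
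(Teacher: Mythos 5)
Your proposal is correct and follows essentially the same route as the paper's proof: condition on $\blambda$, apply \Cref{lem:first-moment-gps} to $\tau_{\blambda}(\rho)$, absorb the trace over $\reg{B}_1$ into the partial trace, and finish with \Cref{lem:partial-trace-helper} at $k=1$. The extra remark that $\tau_{\blambda}(\rho)$ lies in $\lor^n(\C^d \otimes \C^{\bell})$ is a point the paper handles in the algorithm's description rather than in the proof, but it is the same justification.
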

\begin{proof}
    First, let us calculate the expectation of $\widehat{\rho}_{\bv}^{\blambda}$ conditioned on the result $\blambda$ of weak Schur sampling (which, in turn, conditions on the values $\bell$ and $\bD$).
    This estimate is the result of applying the Grier--Pashayan--Schaeffer algorithm to the state $\tau_{\blambda}(\rho)$.
    Then by \Cref{lem:first-moment-gps},
    \begin{equation*}
        \E[\widehat{\sigma}_{\bv}^{\blambda} \mid \blambda] = \tr_{\substack{\reg{A}_2 \ldots \reg{A_{n}} \\\reg{B}_2 \ldots \reg{B_{n}}}}\big(\tau_{\blambda}(\rho)\big).
    \end{equation*}
    Then because $\widehat{\rho}_{\bv}^{\blambda}$ is the result of tracing out $\widehat{\sigma}_{\bv}^{\blambda}$'s purifying register, we have
    \begin{align*}
        \E[\widehat{\rho}_{\bv}^{\blambda} \mid \blambda] &= \tr_{\substack{\reg{A}_2 \ldots \reg{A_{n}} \\\reg{B}_1 \ldots \reg{B_{n}}}}\big(\tau_{\blambda}(\rho)\big).
    \end{align*}
    Note the addition of the $\reg{B}_1$ register to the partial trace.
    Averaging over $\blambda$ gives us
    \begin{align*}
        \E[\widehat{\rho}_{\bv}^{\blambda}]
        = \E_{\blambda}\Big[\tr_{\substack{\reg{A}_2 \ldots \reg{A_{n}} \\\reg{B}_1 \ldots \reg{B_{n}}}}\big(\tau_{\blambda}(\rho)\big)\Big]
        = \rho,
    \end{align*}
    by \Cref{lem:partial-trace-helper}.
    This completes the proof.
\end{proof}

Next, we compute its second moment.

\begin{theorem}[Second moment of $\mixed^+(\gps)$]
    Let $\widehat{\rho}_{\bv}^{\blambda}$ be the output of our mixed state tomography algorithm. Then
    \begin{equation*}
        \E[\widehat{\rho}_{\bv}^{\blambda} \otimes \widehat{\rho}_{\bv}^{\blambda}]
        = \frac{n-1}{n} \cdot \rho^{\otimes 2} + \frac{1}{n} \cdot \big(\rho \otimes I_d + I_d \otimes \rho\big) \cdot \swap + \frac{\E[\ell(\blambda)]}{n^2} \cdot \swap-\mathrm{Lower}_{\rho},
    \end{equation*}
    where $\mathrm{Lower}_{\rho} \in \symsep(d)$.
\end{theorem}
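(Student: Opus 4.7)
The plan is to condition on the outcome $\blambda$ of the weak Schur sampling step in the definition of $\mixed^+(\gps)$, apply the general second-moment formula for the Grier--Pashayan--Schaeffer algorithm (\Cref{lem:second-moment-gps}) to the state $\tau_{\blambda}(\rho) \in \lor^n \C^{\bD}$ with $\bD = d \cdot \bell$, then trace out the purification registers $\reg{B}_1 \reg{B}_2$ and finally average over $\blambda$. Since the only difference between $\mixed^+(\gps)$ and $\mixed(\gps)$ is that the dimension of the purification register is now $\bell = \ell(\blambda)$ rather than $r$, this is precisely what should turn the $r/n^2$ factor in \Cref{thm:gps-moments-loose-2} into an $\E[\ell(\blambda)]/n^2$ factor.

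Conditioned on $\blambda$, the estimator $\widehat{\sigma}_{\bv}^{\blambda}$ is the output of $\gps$ on the symmetric state $\tau_{\blambda}(\rho) \in \lor^n \C^{\bD}$, so \Cref{lem:second-moment-gps} gives
\begin{align*}
    \E[\widehat{\sigma}_{\bv}^{\blambda}\otimes\widehat{\sigma}_{\bv}^{\blambda} \mid \blambda]
    ={}& \frac{n-1}{n}\cdot (\tau_{\blambda}(\rho))_{\reg{A}_1\reg{B}_1,\reg{A}_2\reg{B}_2} + \frac{1}{n}\cdot \big((\tau_{\blambda}(\rho))_{\reg{A}_1\reg{B}_1} \otimes I_{\reg{A}_2\reg{B}_2} + I_{\reg{A}_1\reg{B}_1}\otimes (\tau_{\blambda}(\rho))_{\reg{A}_1\reg{B}_1}\big)\cdot \swap_{\reg{A}\reg{B}} \\
    &+ \frac{1}{n^2}\cdot \swap_{\reg{A}\reg{B}} - \mathrm{Lower}_{\tau_{\blambda}(\rho)},
\end{align*}
with $\mathrm{Lower}_{\tau_{\blambda}(\rho)} \in \symsep(\bD)$. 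I would then apply $\tr_{\reg{B}_1\reg{B}_2}$ term-by-term, using \Cref{prop:weird-prop} on the cross terms and the factorization $\swap_{\reg{A}\reg{B}} = \swap_{\reg{A}}\otimes\swap_{\reg{B}}$ from \Cref{prop:permute-factor} on the third term, and finally take expectation over $\blambda$ using the partial-trace helper lemma \Cref{lem:partial-trace-helper}.

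Explicitly, the first (main) term becomes $\tr_{\reg{A}_3\dots\reg{A}_n,\reg{B}_1\dots\reg{B}_n}(\tau_{\blambda}(\rho))$ after partial trace, which averages to $\rho^{\otimes 2}$ by \Cref{lem:partial-trace-helper} with $k=2$. The two cross terms become $(\rho\otimes I_d)\cdot\swap$ and $(I_d\otimes\rho)\cdot\swap$ via \Cref{prop:weird-prop} and \Cref{lem:partial-trace-helper} with $k=1$. The decisive term is $\tr_{\reg{B}_1\reg{B}_2}(\swap_{\reg{A}\reg{B}}) = \tr(\swap_{\reg{B}})\cdot \swap_{\reg{A}} = \bell\cdot \swap_{\reg{A}}$, and its $\blambda$-average delivers exactly the $\E[\ell(\blambda)]/n^2$ factor. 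For the lower-order term, since $\mathrm{Lower}_{\tau_{\blambda}(\rho)} \in \symsep(\bD)$ and partial trace preserves the $\symsep$ cone (\Cref{prop:symset-partial-trace}), the quantity $\tr_{\reg{B}_1\reg{B}_2}(\mathrm{Lower}_{\tau_{\blambda}(\rho)})$ lies in $\symsep(d)$; averaging over $\blambda$ is a nonnegative combination and stays in $\symsep(d)$, giving the claimed $\mathrm{Lower}_{\rho}$.

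I do not anticipate a serious obstacle: every ingredient has been assembled earlier in the paper, and the argument is essentially a careful bookkeeping exercise. The one mild subtlety is that the dimension $\bD$ of the intermediate Hilbert space depends on $\blambda$, so one must be careful to perform the $\reg{B}_1\reg{B}_2$ partial trace before taking the expectation over $\blambda$; once this is done, every remaining operator lives on the fixed space $\C^d\otimes\C^d$, so averaging commutes with the reductions and the entire computation goes through as a direct analogue of the proof of \Cref{thm:gps-moments-loose-2}.
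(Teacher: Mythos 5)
Your proposal is correct and follows essentially the same route as the paper's proof: condition on $\blambda$, apply \Cref{lem:second-moment-gps} to $\tau_{\blambda}(\rho)$, trace out the $\reg{B}$ registers term-by-term using \Cref{prop:weird-prop}, \Cref{prop:permute-factor}, and \Cref{prop:symset-partial-trace}, and average over $\blambda$ via \Cref{lem:partial-trace-helper}. The key observation that $\tr_{\reg{B}_1\reg{B}_2}(\swap_{\reg{A}\reg{B}}) = \ell(\blambda)\cdot\swap_{\reg{A}}$ is exactly where the paper extracts the $\E[\ell(\blambda)]$ factor as well.
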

\begin{proof}
    Following the proof of the first moment case, we will calculate the expectation of $\widehat{\rho}_{\bv}^{\blambda}\otimes \widehat{\rho}_{\bv}^{\blambda}$ conditioned on the result $\blambda$ of weak Schur sampling.
    By \Cref{lem:second-moment-gps},
    \begin{align*}
        \E[\widehat{\sigma}_{\bv}^{\blambda}\otimes \widehat{\sigma}_{\bv}^{\blambda} \mid \blambda]
        =~&
         \frac{n-1}{n} \cdot \tr_{\substack{\reg{A}_3 \ldots \reg{A_{n}} \\\reg{B}_3 \ldots \reg{B_{n}}}}(\tau_{\blambda}(\rho))\tag{term 1}\\
         &+ \frac{1}{n} \cdot \big(\tr_{\substack{\reg{A}_2 \ldots \reg{A_{n}} \\\reg{B}_2 \ldots \reg{B_{n}}}}(\tau_{\blambda}(\rho)) \otimes I_{\bD}\big)  \cdot \swap_{\reg{A} \reg{B}}\tag{term 2}\\
         &+ \frac{1}{n} \cdot \big( I_{\bD} \otimes (\tr_{\substack{\reg{A}_2 \ldots \reg{A_{n}} \\\reg{B}_2 \ldots \reg{B_{n}}}}(\tau_{\blambda}(\rho))\big) \cdot \swap_{\reg{A} \reg{B}}\tag{term 3}\\
         &+ \frac{1}{n^2} \cdot \swap_{\reg{A} \reg{B}}\tag{term 4}\\
         &- \mathrm{Lower}_{\tau_{\blambda}(\rho)},\tag{term 5}
    \end{align*}
    where $\mathrm{Lower}_{\tau_{\blambda}(\rho)} \in \symsep(\bD)$.
    Next, to compute $ \E[\widehat{\rho}_{\bv}^{\blambda}\otimes \widehat{\rho}_{\bv}^{\blambda} \mid \blambda]$, we trace out both purifying registers.
    For term 1, this results in 
    \begin{equation*}
        \frac{n-1}{n} \cdot \tr_{\substack{\reg{A}_3 \ldots \reg{A_{n}} \\\reg{B}_1 \ldots \reg{B_{n}}}}(\tau_{\blambda}(\rho)).
    \end{equation*}
    For term 2, this results in
    \begin{equation*}
        \tr_{\reg{B}}\Big(\frac{1}{n} \cdot \big(\tr_{\substack{\reg{A}_2 \ldots \reg{A_{n}} \\\reg{B}_2 \ldots \reg{B_{n}}}}(\tau_{\blambda}(\rho)) \otimes I_{\bD}\big)  \cdot \swap_{\reg{A} \reg{B}}\Big)
        = \frac{1}{n} \cdot \big(\tr_{\substack{\reg{A}_2 \ldots \reg{A_{n}} \\\reg{B}_1 \ldots \reg{B_{n}}}}(\tau_{\blambda}(\rho)) \otimes I_{d}\big)  \cdot \swap_{\reg{A}},
    \end{equation*}
    by \Cref{prop:weird-prop}. Term 3 results in a similar expression.
    For term 4, we have
    \begin{equation*}
        \tr_{\reg{B}}\Big(\frac{1}{n^2}\cdot \swap_{\reg{A} \reg{B}}\Big) = 
        \tr\Big(\frac{1}{n^2}\cdot \swap_{\reg{B}}\Big) \cdot \swap_{\reg{A}}
        = \frac{\ell(\blambda)}{n^2} \cdot \swap_{\reg{A}},
    \end{equation*}
    by \Cref{prop:permute-factor}. Finally, for term 5, $\mathrm{Lower}_{\blambda}' = \tr_{\reg{B}}(\mathrm{Lower}_{\tau_{\blambda}(\rho)})$ is in $\symsep(d)$, by \Cref{prop:symset-partial-trace}.
    In total, we have
    \begin{align*}
        \E[\widehat{\rho}_{\bv}^{\blambda}\otimes \widehat{\rho}_{\bv}^{\blambda} \mid \blambda]
        =~&
         \frac{n-1}{n} \cdot \tr_{\substack{\reg{A}_3 \ldots \reg{A_{n}} \\\reg{B}_1 \ldots \reg{B_{n}}}}(\tau_{\blambda}(\rho))\tag{term 1}\\
         &+ \frac{1}{n} \cdot \big(\tr_{\substack{\reg{A}_2 \ldots \reg{A_{n}} \\\reg{B}_1 \ldots \reg{B_{n}}}}(\tau_{\blambda}(\rho)) \otimes I_{d}\big)  \cdot \swap_{\reg{A}}\tag{term 2}\\
         &+ \frac{1}{n} \cdot \big( I_{d} \otimes (\tr_{\substack{\reg{A}_2 \ldots \reg{A_{n}} \\\reg{B}_2 \ldots \reg{B_{n}}}}(\tau_{\blambda}(\rho))\big) \cdot \swap_{\reg{A}}\tag{term 3}\\
         &+ \frac{\ell(\blambda)}{n^2} \cdot \swap_{\reg{A} }\tag{term 4}\\
         &- \mathrm{Lower}_{\blambda}'.\tag{term 5}
    \end{align*}
    Now we take the expectation of this with respect to $\blambda$. By \Cref{lem:partial-trace-helper}, this is 
    \begin{equation*}
        \E[\widehat{\rho}_{\bv}^{\blambda} \otimes \widehat{\rho}_{\bv}^{\blambda}]
        = \frac{n-1}{n} \cdot \rho^{\otimes 2} + \frac{1}{n} \cdot \big(\rho \otimes I_d + I_d \otimes \rho\big) \cdot \swap + \frac{\E[\ell(\blambda)]}{n^2} \cdot \swap-\E_{\blambda}[\mathrm{Lower}_{\blambda}'].
    \end{equation*}
    Note that $\E_{\blambda}[\mathrm{Lower}_{\blambda}'] \in \symsep(d)$,
    as it is a convex combination of matrices in $\symsep(d)$.
    This completes the proof.
\end{proof}

\section*{Acknowledgments}

A.P.\ is supported by DARPA under Agreement No.\ HR00112020023. 
J.S.\ and J.W.\ are supported by the NSF CAREER award CCF-233971.
E.T.\ is supported by the Miller Institute for Basic Research in Science, University of California, Berkeley.

\bibliographystyle{alpha}
\bibliography{wright}

\appendix

\section{Viewing our algorithms as pretty good measurements} \label{sec:pgm}

Any tomography algorithm---whether it performs multiple measurements, introduces ancillary systems, or applies intermediate channels---is equivalent to another tomography algorithm that performs a (possibly complicated) POVM directly on the input state $\rho^{\otimes n}$.
This raises the question: what POVMs are our algorithms performing? In this section, we show that $\mixed(\gps)$ implements a pretty good measurement (PGM) over a natural distribution known as the \emph{Hilbert--Schmidt measure} (defined below). Furthermore, we show that $\mixed^+(\gps)$ is only slightly more elaborate: it performs a PGM over one of several such distributions, conditioned on the outcome of weak Schur sampling. 

Ours is not the first tomography algorithm which can be viewed as performing a PGM on the input.
Indeed, Hayashi's pure state tomography algorithm~\cite{Hay98} can be viewed as performing a PGM over a distribution induced by a Haar random unit vector. 
In addition, one of the two mixed state tomography algorithms in~\cite{HHJ+16} performs a PGM over a particular distribution on mixed states,
and the other, like $\mixed^+(\gps)$, performs a PGM conditioned on the outcome of weak Schur sampling.
To our knowledge, however, the particular choice of PGM we study, based on the Hilbert--Schmidt measure, has not appeared in the literature prior to our work.

\subsection{PGM preliminaries} A PGM is defined in terms of a set of states $\{\rho_i\}$ and a prior probability distribution on these states $\{\alpha_i\}$. The corresponding PGM has measurement operators $\{M_i\}$ with 
\begin{equation*}
    M_i \coloneq N^{-1/2} \cdot \alpha_i \rho_i \cdot N^{-1/2},
\end{equation*}
where $N \coloneq \sum_{i} \alpha_i \rho_i$. Here, $N^{-1/2}$ is the \emph{Moore-Penrose pseudoinverse} of $N^{1/2}$, i.e.\ the inverse restricted to the support of $N^{1/2}$. 

The PGM was originally introduced for its application to the problem of \emph{quantum state discrimination} \cite{Bel75,Hol79,HW94}. In this problem, we are given one copy of a state $\brho$ drawn randomly from the set $\{\rho_i\}$, with $\rho_i$ sampled with probability $\alpha_i$, and asked to identify the state. The pretty good measurement is \emph{pretty good} at this task. In particular, let $P_\mathrm{PGM}$ be the success probability of the algorithm that first measures $\brho$ using the PGM, obtaining outcome $\bi$, and then outputs $\rho_{\bi}$. Then $P_{\mathrm{PGM}} \geq P_{\mathrm{OPT}}^2$, where $P_{\mathrm{OPT}}$ is the optimal success probability across all possible measurement schemes \cite{BK02}.

\subsection{The PGM over the Hilbert--Schmidt measure} Our PGM will have measurement outcomes indexed by mixed states $\sigma \in \C^{d \times d}$. The state indexed by $\sigma$ is the $n$-fold product $\sigma^{\otimes n}$. We will need to define a measure on mixed states to state the probability associated to this state.  

\begin{definition}[Rank-$r$ Hilbert--Schmidt measure]\label{def:HS-measure}
   The \emph{rank-$r$ Hilbert--Schmidt measure}, denoted $\mu^{d,r}_{\mathrm{HS}}$, is the measure on mixed states $\sigma \in \C^{d \times d}$ induced by the Haar measure on pure states $\ket{u}_{\reg{A_1B_1}} \in \C^d \otimes \C^r$, obtained by setting $\sigma_{\reg{A_1}} = \tr_{\reg{B_1}} ( \ketbra{u}_{\reg{A_1B_1}} )$.\footnote{This measure is also known as the \emph{induced measure} in the literature, and, in the $r = d$ special case, coincides with the \emph{Hilbert--Schmidt measure} \cite{ZS01}}
\end{definition}

Equivalently, the rank-$r$ Hilbert--Schmidt measure is the \emph{pushforward} of the Haar measure on pure states under the map $\tr_{\reg{B_1}}$. Since $\tr_{\reg{B_1}}$ is measurable, for every measurable $g$ on the space of $d\times d$ density matrices we have the change-of-variables formula
\begin{equation}\label{eq:pushfoward_change_of_variables}
\int_{\tr_{\reg{2}}(X)} g(\sigma) \cdot d\mu_{\mathrm{HS}}(\sigma) = \int_{X} g( \tr_{\reg{2}}(\ketbra{u} ) \cdot du,
\end{equation}
whenever one side is well-defined \cite[Theorem 3.6.1]{Bog07}.

For our PGM, the prior probability of $\sigma^{\otimes n}$ will be $d\mu^{d,r}_{\mathrm{HS}}(\sigma)$. The measurement operators of the PGM are then given by
\begin{equation}\label{eq:M_HS_start}
    M^{d,r}_{\mathrm{HS}} (\sigma) \coloneq N^{-1/2} \cdot \sigma^{\otimes n} \cdot d\mu^{d,r}_{\mathrm{HS}}(\sigma) \cdot N^{-1/2}.
\end{equation}
We now compute $S$. Letting $\ket{u}_{\reg{A}\reg{B}} \in \C^d \otimes \C^r$ and $(\sigma_u)_{\reg{A}} \coloneq \tr_{\reg{B}}(\ketbra{u}_{\reg{A}\reg{B}})$, we have
\begin{equation}\label{eq:S_intermediate}
    N = \int_{\sigma} \sigma^{\otimes n} \cdot d\mu^{d,r}_{\mathrm{HS}}(\sigma) = \int_{\ket{u}} \sigma_u^{\otimes n} \cdot du = \tr_{\reg{B}_1\dots\reg{B}_n} \Big( \int_{\ket{u}} \ketbra{u}^{\otimes n} \cdot du \Big) = \tr_{\reg{B}_1\dots\reg{B}_n} \Big( \frac{1}{D[n]} \cdot \Pi_{\mathrm{sym}}^{n,D} \Big).
\end{equation}
Here, $D \coloneq d \cdot r$ and $D[n]=\dim(\vee^n \C^{D})$. However, by \Cref{lem:double-schur-pi-sym}, we know
\begin{equation*}
        (\schur^d \otimes \schur^r) \cdot \Pi_{\mathrm{sym}}^{n, D} \cdot (\schur^d \otimes \schur^r)^\dagger
        = \sum_{\lambda \vdash n, \ell(\lambda) \leq r} \ketbra{\lambda\lambda}{\lambda\lambda}_{\reg{Y}\reg{Y'}} \otimes \ketbra{\epr_{\lambda}}{\epr_{\lambda}}_{\reg{P}\reg{P'}} \otimes I_{\reg{Q}\reg{Q'}}.
    \end{equation*}
Tracing out the $\reg{B}$ registers (which now correspond to $\reg{Y}'\reg{P}'\reg{Q}'$) and plugging the result back into \Cref{eq:S_intermediate} gives us an expression for $N$ in the Schur basis:
\begin{equation}\label{eq:S_in_Schur}
    \schur^d \cdot N \cdot (\schur^{d})^{\dagger} = \frac{1}{D[n]} \cdot \sum_{\lambda \vdash n, \ell(\lambda) \leq r} \ketbra{\lambda}_{\reg{Y}} \otimes \Big( \frac{I_{\dim(\lambda)}}{\dim(\lambda)}\Big)_{\reg{P}} \otimes \Big( \dim(V^r_\lambda) \cdot I_{\dim(V^d_\lambda)} \Big)_{\reg{Q}}.
\end{equation}
Moreover, $\sigma^{\otimes n}$ can also be expressed in the Schur basis as
\begin{equation}\label{eq:sigma_in_Schur}
    \schur^d \cdot \sigma^{\otimes n} \cdot (\schur^d)^{\dagger} = \sum_{\lambda \vdash n, \ell(\lambda) \leq r} \ketbra{\lambda}_{\reg{Y}} \otimes (I_{\dim(\lambda)})_{\reg{P}} \otimes (\nu^d_\lambda(\sigma))_{\reg{Q}}.
\end{equation}
Combining \Cref{eq:M_HS_start,eq:S_in_Schur,eq:sigma_in_Schur}, we get:
\begin{align}\label{eq:M_HS_in_Schur}
\schur^d \cdot M_{\mathrm{HS}}^{d,r}(\sigma) \cdot (\schur^d)^\dagger & = \sum_{\lambda \vdash n, \ell(\lambda) \leq r} \frac{D[n] \cdot \dim(\lambda)}{\dim(V^r_\lambda)} \cdot \ketbra{\lambda}_{\reg{Y}} \otimes (I_{\dim(\lambda)})_{\reg{P}} \otimes \nu^d_\lambda(\sigma)_{\reg{Q}} \cdot d\mu_{\mathrm{HS}}^{d,r}(\sigma).
\end{align}

We summarize this construction with the following definition. 

\begin{definition}
    The \emph{pretty good measurement over the rank-$r$ Hilbert--Schmidt measure} is a PGM with operators labeled by mixed states $\sigma \in \C^{d \times d}$. The state corresponding to $\sigma$ is $\sigma^{\otimes n}$, and the corresponding probability is $d\mu^{d,r}_{\mathrm{HS}}(\sigma)$. The resulting measurement operators are given by \Cref{eq:M_HS_in_Schur}. 
\end{definition}

\subsection{Viewing $\mixed(\gps)$ as a PGM} 

In this section, we show that the measurement performed by $\mixed(\gps)$ is equivalent to the PGM over the rank-$r$ Hilbert--Schmidt measure. This PGM is also the measurement performed by $\mixed(\hayashi)$, as $\mixed(\gps)$ and $\mixed(\hayashi)$ differ only in their post-processing of the measurement outcome. 

The algorithm $\mixed(\gps)$, described in \Cref{fig:gps-reduction-basic}, first applies $\purifychan^{d,r}$ to the input state, and then applies Hayashi's measurement. Recall that this is the POVM with measurement operators: 
\begin{equation*}
    \{ D[n] \cdot \ketbra{u}^{\otimes n} \cdot du \, : \, \ket{u} \in \C^{d} \otimes \C^r \}.
\end{equation*}
Finally, the algorithm proceeds by processing $\sigma_u = \tr_{\reg{2}}(\ketbra{u})$. Let $X \subseteq \C^{d\times d}$ be a subset of mixed states (measureable with respect to the Hilbert--Schmidt measure), and let $Y \subseteq \C^{D}$ be the preimage of $X$ under tracing out the second register, i.e.\ $X = \tr_{\reg{2}}(Y)$ (since $X$ is measurable, $Y$ is measurable with respect to the Haar measure). The probability that we obtain a $\bsigma \in X$ after purifying, measuring, and tracing out a given, generic, input $\psi\in \C^{d^n \times d^n}$ (i.e.\ not necessarily in the symmetric subspace) is 
\begin{equation*}
    \Pr_{\mixed(\gps)}[ \bsigma \in X | \psi ] =  \tr \Big( \purifychan^{d,r}( \psi ) \cdot \int_{\ket{u} \in Y} D[n] \cdot \ketbra{u}^{\otimes n}  \cdot du\Big).
\end{equation*}
On the other hand, the probability that our PGM obtains $\bsigma \in X$, given input $\psi$, is
\begin{equation*}
    \Pr_{\mathrm{PGM}} [ \bsigma \in X | \psi] = \int_{\sigma \in X} \tr\big( \psi \cdot M^{d,r}_{\mathrm{HS}}(\sigma) \big). 
\end{equation*}

\begin{proposition} \label{prop:base_implements_pgm}
    For any input $\psi$ and any measurable set $X \subseteq \C^{d\times d}$ (measured with $\mu_{\mathrm{HS}}^{d,r}$), 
    \begin{equation*}
        \Pr_{\mixed(\gps)}[ \bsigma \in X| \psi ] = \Pr_{\mathrm{PGM}} [ \bsigma \in X| \psi].
    \end{equation*}
    Thus, $\mixed(\gps)$ implements the pretty good measurement over the rank-$r$ Hilbert--Schmidt measure. 
\end{proposition}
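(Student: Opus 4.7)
The plan is to verify the equality at the level of effective POVM elements. Since both probabilities are linear in $\psi$, the proposition reduces to showing
\begin{equation*}
(\purifychan^{d,r})^*\left(\int_{\ket{u} \in Y} D[n] \cdot \ketbra{u}^{\otimes n} \, du\right) = \int_{\sigma \in X} M^{d,r}_{\mathrm{HS}}(\sigma),
\end{equation*}
which we do by computing both sides in the double Schur basis and matching them block by block. Concretely, writing $F \coloneq \int_Y D[n] \cdot \ketbra{u}^{\otimes n} \, du$, it suffices to show that $\tr(\purifychan^{d,r}(\psi) \cdot F) = \tr\big(\psi \cdot \int_X M^{d,r}_{\mathrm{HS}}(\sigma)\big)$ for every $\psi$.

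First I would rewrite $F$ via a twirling argument. The set $Y = \tr_{\reg{2}}^{-1}(X)$ is invariant under $\ket{u} \mapsto (I_d \otimes U)\ket{u}$ for every $U \in U(r)$, so averaging the integrand over Haar-random $U$ does not change $F$. The twirled integrand depends on $\ket{u}$ only through $\sigma_u = \tr_{\reg{2}}(\ketbra{u})$, and its value is precisely the random-purification expression computed in \Cref{lem:final-formula}. Applying the pushforward change-of-variables \eqref{eq:pushfoward_change_of_variables} then converts the pure-state integral into a mixed-state integral over $X$ and yields, in the double Schur basis,
\begin{equation*}
(\schur^d \otimes \schur^r) F (\schur^d \otimes \schur^r)^\dagger = D[n] \int_X \sum_{\lambda:\ell(\lambda) \leq r} \dim(\lambda) \cdot \ketbra{\lambda\lambda}_{\reg{Y}\reg{Y'}} \otimes \ketbra{\epr_\lambda}_{\reg{P}\reg{P'}} \otimes \nu^d_\lambda(\sigma)_{\reg{Q}} \otimes \Big(\tfrac{I_{\dim(V^r_\lambda)}}{\dim(V^r_\lambda)}\Big)_{\reg{Q'}} d\mu^{d,r}_{\mathrm{HS}}(\sigma).
\end{equation*}

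Next I would compute $\tr(\purifychan^{d,r}(\psi) \cdot F)$ directly. By \Cref{fig:purify-channel}, the channel output has a double-Schur-basis form of the same shape as $F$ but with $\nu^d_\lambda(\sigma)_{\reg{Q}}$ replaced by the $\psi$-dependent block $\tau_\lambda(\psi)_{\reg{Q}} \coloneq \tr_{\reg{P}}(\Pi_\lambda \cdot \schur^d \psi (\schur^d)^\dagger \cdot \Pi_\lambda)$ and no $\sigma$-integral. Pairing the two expressions in trace, only diagonal $\lambda$-blocks survive, and the $\ketbra{\lambda\lambda}$ and $\ketbra{\epr_\lambda}$ factors each contribute $1$ while the two copies of $I/\dim(V^r_\lambda)$ on $\reg{Q'}$ contribute $\tr((I/\dim(V^r_\lambda))^2) = 1/\dim(V^r_\lambda)$, giving
\begin{equation*}
\tr(\purifychan^{d,r}(\psi) \cdot F) = \int_X \sum_\lambda \frac{D[n] \cdot \dim(\lambda)}{\dim(V^r_\lambda)} \cdot \tr(\tau_\lambda(\psi) \cdot \nu^d_\lambda(\sigma)) \, d\mu^{d,r}_{\mathrm{HS}}(\sigma).
\end{equation*}
This matches $\tr\big(\psi \cdot \int_X M^{d,r}_{\mathrm{HS}}(\sigma)\big)$ by \eqref{eq:M_HS_in_Schur}, since tracing out the $I_{\dim(\lambda)}$ factor on $\reg{P}$ inside $M^{d,r}_{\mathrm{HS}}(\sigma)$ turns the $\lambda$-diagonal block $(\schur^d \psi (\schur^d)^\dagger)_{\lambda\lambda}$ into exactly $\tau_\lambda(\psi)$.

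The main obstacle is the first step: the integrand $\ketbra{u}^{\otimes n}$ is not constant on the fibers of $\tr_{\reg{2}}$, so the pushforward formula \eqref{eq:pushfoward_change_of_variables} cannot be applied directly. The twirling trick, which exploits that $Y$ is a union of $U(r)$-orbits on the purification register, is what lets us first symmetrize the integrand into a fiber-constant quantity, after which \eqref{eq:pushfoward_change_of_variables} and the random-purification identity of \Cref{lem:final-formula} can be invoked. The remaining Schur-basis computation is routine bookkeeping; the only mildly subtle factor is the $1/\dim(V^r_\lambda)$ arising from squaring the maximally-mixed state on $\reg{Q'}$, which combines with the $\dim(\lambda)/\dim(V^r_\lambda)$ already present in the twirled integrand to produce precisely the prefactor $D[n]\dim(\lambda)/\dim(V^r_\lambda)$ appearing in \eqref{eq:M_HS_in_Schur}.
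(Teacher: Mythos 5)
Your proof is correct, but it routes around the paper's argument in a genuinely different way. The paper works on the dual side: it writes down an explicit Kraus decomposition of $\purifychan^{d,r}$ and computes the adjoint channel applied to a single $\ketbra{u}^{\otimes n}$ (its \Cref{lem:purifychan_adjoint_on_n_fold_u}), observes that the result depends on $\ket{u}$ only through $\sigma_u$, and then integrates and applies the pushforward \eqref{eq:pushfoward_change_of_variables} to recover $\int_X M_{\mathrm{HS}}^{d,r}(\sigma)$ directly as the pullback of Hayashi's POVM. You instead stay on the primal side: your key move is the twirling step, exploiting that $Y = \tr_{\reg{2}}^{-1}(X)$ is a union of $U(r)$-orbits and that the Haar measure is $(I_d \otimes U)$-invariant, so the integrand of $F$ can be replaced by the random-purification mixture of $\sigma_u$; this makes the integrand fiber-constant (which, as you correctly note, is exactly the obstacle to applying \eqref{eq:pushfoward_change_of_variables} naively) and lets you quote \Cref{lem:final-formula} rather than redo a Kraus computation. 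Your pairing of the twirled $F$ against the explicit Schur-basis output of \Cref{fig:purify-channel}, including the $1/\dim(V_\lambda^r)$ from squaring the maximally mixed state on $\reg{Q}'$, checks out and reproduces the prefactor in \eqref{eq:M_HS_in_Schur}. What your approach buys is reuse of an already-proved lemma and no need for Kraus operators; what the paper's adjoint-based approach buys is that the identity $\purifychan^{d,r,\dagger}(F) = \int_X M_{\mathrm{HS}}^{d,r}(\sigma)$ is exhibited directly at the level of POVM elements, making the ``equivalent measurement'' interpretation immediate without having to evaluate $\tr(\purifychan^{d,r}(\psi)\cdot F)$ for general $\psi$. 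One small notational caution: your $\tau_\lambda(\psi)$ collides with the paper's $\tau_{\blambda}(\rho)$ from the quasi-purification section, which denotes a different object.
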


Before proving this statement, it will be useful to note that we can decompose $\purifychan^{d,r}$ into Kraus operators $\{K_{\lambda S T}\}$ as $\purifychan^{d,r}(\psi) = \sum_{\lambda, S, T} K_{\lambda S T} \cdot \psi \cdot K_{\lambda S T}^\dagger$ for any generic state $\psi$ written in the Schur basis of $(\C^d)^{\otimes n}$, with
\begin{equation}
        K_{\lambda S T} \coloneq \ket{\lambda \lambda}_{\reg{YY'}}\bra{\lambda}_{\reg{Y}} \otimes \ket{\epr_\lambda}_{\reg{PP'}}\bra{S}_{\reg{P}} \otimes \frac{\ket{T}_{\reg{Q'}}}{\sqrt{\dim(V^r_\lambda)}} \otimes (I_{\dim(V^d_\lambda)})_{\reg{Q}} . \label{eq:kraus_purifychan}
\end{equation}
Here, $\lambda \vdash n$ and $\ell(\lambda) \leq r$; $S$ is an SYT of shape $\lambda$; $T$ is an SSYT of shape $\lambda$ and alphabet $[r]$. We can consider the adjoint $\purifychan^{d,r,\dagger}$, which can be expanded into Kraus operators as $\purifychan^{d,r,\dagger}(\varphi) = \sum_{\lambda,S,T} K_{\lambda S T}^\dagger \cdot \varphi \cdot K_{\lambda S T}$, where $\varphi$ is a state written in the Schur basis of $(\C^D)^{\otimes n}$.

\begin{lemma} \label{lem:purifychan_adjoint_on_n_fold_u}
    For any $\ket{u} \in \C^{d} \otimes \C^r$, we have
    \begin{equation*}\purifychan^{d,r,\dagger} \Big( (\schur^{\otimes 2}) \cdot \ketbra{u}^{\otimes n} \cdot (\schur^{\otimes 2})^\dagger \Big) = \sum_{\lambda \vdash n, \ell(\lambda)\leq r} \frac{\dim(\lambda)}{\dim(V^r_\lambda)} \cdot \ketbra{\lambda} \otimes I_{\dim(\lambda)} \otimes \nu^d_\lambda(\sigma_u).\end{equation*}
\end{lemma}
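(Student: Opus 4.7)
My plan is to evaluate the left-hand side by directly applying the Kraus decomposition of $\purifychan^{d,r,\dagger}$ given in \Cref{eq:kraus_purifychan} to the double-Schur expansion of $\ketbra{u}^{\otimes n}$ provided by \Cref{lem:double_schur_transform_pure_state}. Concretely, I would write
\begin{equation*}
    \purifychan^{d,r,\dagger}\Big( \schur^{\otimes 2} \cdot \ketbra{u}^{\otimes n} \cdot (\schur^{\otimes 2})^{\dagger}\Big) = \sum_{\lambda', S', T'} K_{\lambda' S' T'}^{\dagger} \cdot \Big(\sum_{\lambda, \mu} \ketbra{\lambda\lambda}{\mu\mu}_{\reg{YY'}} \otimes \ketbra{\epr_\lambda}{\epr_\mu}_{\reg{PP'}} \otimes \ketbra{u_{\lambda\lambda}}{u_{\mu\mu}}_{\reg{QQ'}}\Big) \cdot K_{\lambda' S' T'}
\end{equation*}
and then simplify register-by-register.

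The first step is to observe that sandwiching by $K_{\lambda' S' T'}^{\dagger}$ and $K_{\lambda' S' T'}$ projects the $\reg{YY'}$ register onto $\ket{\lambda'\lambda'}$ on both sides, which forces $\lambda = \mu = \lambda'$ and collapses the double sum on $(\lambda,\mu)$ to a single sum on $\lambda'$. On the $\reg{PP'}$ register, the same sandwich contracts $\ketbra{\epr_{\lambda'}}_{\reg{PP'}}$ between $\bra{\epr_{\lambda'}}$ and $\ket{\epr_{\lambda'}}$, which produces a scalar factor of $1$ and leaves $\ketbra{S'}_{\reg{P}}$. On the $\reg{QQ'}$ register, the sandwich yields $\frac{1}{\dim(V^r_{\lambda'})} \cdot \bra{T'}_{\reg{Q'}} \ketbra{u_{\lambda'\lambda'}}_{\reg{QQ'}} \ket{T'}_{\reg{Q'}}$ on $\reg{Q}$.

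The second step is to perform the sums over $S'$ and $T'$. The SYTs of shape $\lambda'$ form an orthonormal basis of the Specht module $\mathrm{Sp}_{\lambda'}$, so $\sum_{S'} \ketbra{S'}_{\reg{P}} = (I_{\dim(\lambda')})_{\reg{P}}$. Likewise, the SSYTs of shape $\lambda'$ and alphabet $[r]$ form an orthonormal basis of $V^r_{\lambda'}$, so $\sum_{T'} \bra{T'}_{\reg{Q'}} \ketbra{u_{\lambda'\lambda'}}_{\reg{QQ'}} \ket{T'}_{\reg{Q'}} = \tr_{\reg{Q'}}(\ketbra{u_{\lambda'\lambda'}}_{\reg{QQ'}})$. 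Invoking \Cref{eq:double_schur_n_fold_pure_state_partial_trace_restriction} evaluates this partial trace to $\dim(\lambda') \cdot \nu^d_{\lambda'}(\sigma_u)_{\reg{Q}}$.

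Combining all three factors and renaming $\lambda' \to \lambda$ yields exactly the claimed formula. There is no real obstacle here: the identity is essentially bookkeeping once the ingredients from \Cref{lem:double_schur_transform_pure_state} and \Cref{eq:kraus_purifychan} are in hand, with the only minor subtlety being to track which Kronecker deltas arise from the $\reg{YY'}$ projectors and which summations (over $S'$ and $T'$) produce the identity on $\reg{P}$ and the partial trace on $\reg{Q'}$, respectively.
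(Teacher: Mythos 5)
Your proposal is correct and follows essentially the same route as the paper's proof: expand $\purifychan^{d,r,\dagger}$ via the Kraus operators of \Cref{eq:kraus_purifychan}, let the $\reg{YY'}$ projectors force $\lambda=\mu=\lambda'$, resolve the sums over $S'$ and $T'$ into $I_{\dim(\lambda)}$ and $\tr_{\reg{Q'}}(\ketbra{u_{\lambda\lambda}})$, and finish with \Cref{eq:double_schur_n_fold_pure_state_partial_trace_restriction}. All the bookkeeping (the normalization of $\ket{\epr_\lambda}$ giving a factor of $1$, the $1/\dim(V^r_\lambda)$ from the two Kraus factors) matches the paper exactly.
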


\begin{proof}
    By \Cref{lem:double_schur_transform_pure_state}, we have 
    \begin{equation*}
        (\schur^{\otimes 2}) \cdot \ketbra{u}^{\otimes n} \cdot (\schur^{\otimes 2})^{\dagger} = \sum_{\substack{\lambda \vdash n, \ell(\lambda) \leq r \\ \mu \vdash n, \ell(\mu) \leq r}} \ketbra{\lambda\lambda}{\mu\mu}_{\reg{YY'}} \otimes \ketbra{\epr_\lambda}{\epr_\mu}_{\reg{PP'}} \otimes \ketbra{u_{\lambda\lambda}}{u_{\mu \mu}}_{\reg{QQ'}},
    \end{equation*}
    for some vectors $\ket{u_{\lambda\lambda}}_{\reg{QQ'}}$ which satisfy $\tr_{\reg{Q'}}(\ketbra{u_{\lambda\lambda}}_{\reg{QQ'}}) = \dim(\lambda) \cdot \nu_\lambda^d(\sigma_u)$, with $\sigma_u = \tr_{\reg{2}}(\ketbra{u})$. 
We then have
    \begin{align*}
        & \purifychan^{d,r,\dagger}\Big( (\schur^d \otimes \schur^r) \cdot \ketbra{u}^{\otimes n} \cdot (\schur^d \otimes \schur^r)^\dagger \Big) \\
        & = \sum_{\lambda, S, T} K_{\lambda S T}^\dagger \cdot \Big( (\schur^d \otimes \schur^r) \cdot \ketbra{u}^{\otimes n} \cdot (\schur^d \otimes \schur^r)^\dagger \Big) \cdot K_{\lambda S T} \\
        & = \sum_{\lambda,S,T} K^\dagger_{\lambda S T} \cdot \Big( \sum_{\sigma, \tau} \ketbra{\sigma \sigma}{\tau \tau}_{\reg{YY'}} \otimes \ketbra{\epr_\sigma}{\epr_{\tau}}_{\reg{PP'}} \otimes \ketbra{u_{\sigma \sigma}}{{u_{\tau \tau}}}_{\reg{QQ'}} \Big) \cdot K_{\lambda S T} \\
        & = \sum_{\lambda, S,T} \frac{1}{\dim(V^r_\lambda)} \cdot \ketbra{\lambda}_{\reg{Y}} \otimes \ketbra{S}_{\reg{P}} \otimes \big(\bra{T}_{\reg{Q}'}\cdot \ketbra{u_{\lambda \lambda}}_{\reg{QQ'}} \cdot \ket{T}_{\reg{Q}'}\big) \tag{\Cref{eq:kraus_purifychan}} \\
        & = \sum_{\lambda} \frac{1}{\dim(V^r_\lambda)} \cdot \ketbra{\lambda} \otimes I_{\dim(\lambda)} \otimes \tr_{\reg{Q'}}(\ketbra{u_{\lambda \lambda}}_{\reg{QQ'}} )  \\
        & =  \sum_{\lambda} \frac{\dim(\lambda)}{\dim(V^r_\lambda)} \cdot \ketbra{\lambda} \otimes I_{\dim(\lambda)} \otimes \nu_\lambda^d(\sigma_u).
    \end{align*}
    This completes the proof. 
\end{proof}

\begin{proof}[Proof of \Cref{prop:base_implements_pgm}]
    Note that 
    \begin{align*}
        & \int_{\ket{u} \in Y} D[n] \cdot \purifychan^{d,r,\dagger}(\ketbra{u}^{\otimes n})  \cdot du \\
        & = (\schur^d)^\dagger \cdot \Big(\sum_{\lambda \vdash n, \ell(\lambda) \leq r} \frac{D[n] \cdot \dim(\lambda)}{\dim(V^r_\lambda)} \cdot \ketbra{\lambda} \otimes I_{\dim(\lambda)} \otimes \int_{\ket{u} \in Y} \nu^d_\lambda(\sigma_u) \cdot du \Big) \cdot \schur^d \tag{\Cref{lem:purifychan_adjoint_on_n_fold_u}}\\
        & = (\schur^d)^\dagger \cdot \Big(\sum_{\lambda \vdash n, \ell(\lambda) \leq r} \frac{D[n] \cdot \dim(\lambda)}{\dim(V^r_\lambda)} \cdot \ketbra{\lambda} \otimes I_{\dim(\lambda)} \otimes \int_{\sigma \in X} \nu^d_\lambda(\sigma) \cdot d\mu_{HS}^{d,r}(\sigma)  \Big) \cdot \schur^d \tag{\Cref{eq:pushfoward_change_of_variables}}\\
        & = \int_{\sigma \in X} (\schur^d)^\dagger \cdot \Big( \sum_{\lambda \vdash n, \ell(\lambda) \leq r} \frac{D[n] \cdot \dim(\lambda)}{\dim(V^r_\lambda)} \cdot \ketbra{\lambda} \otimes I_{\dim(\lambda)} \otimes \nu^d_\lambda(\sigma) \cdot d\mu_{\mathrm{HS}}^{d,r}(\sigma)\Big) \cdot \schur^d \\
        & = \int_{\sigma \in X} M^{d,r}_{\mathrm{HS}}(\sigma). \tag{\Cref{eq:M_HS_in_Schur}} 
    \end{align*}
    Thus,
    \begin{align*}
        \Pr_{\mixed(\gps)}[ \bsigma \in X| \psi ] & = \tr_{\reg{AB}} \Big( \purifychan^{d,r}( \psi ) \cdot \int_{\ket{u} \in Y} D[n] \cdot \ketbra{u}^{\otimes n}  \cdot du\Big) \\
        & = \tr_{\reg{A}} \Big( \psi \cdot \int_{\ket{u} \in Y} D[n] \cdot \purifychan^{d,r,\dagger}(\ketbra{u}^{\otimes n})  \cdot du\Big) \\
        & = \tr_{\reg{A}} \Big( \psi \cdot \int_{\sigma \in X} M^{d,r}_{\mathrm{HS}}(\bsigma) \Big) \\
        & = \Pr_{\mathrm{PGM}} [ \bsigma \in X | \psi]. \qedhere
    \end{align*}
\end{proof}

\paragraph{Viewing $\mixed^+(\gps)$ as a PGM.} Note that $\mixed^+(\gps)$, described in \Cref{fig:gps-reduction-improved}, can be regarded as the following two-step process: first, weak Schur sample $\rho^{\otimes n}$, to obtain $\blambda \vdash n$ and a state $\rho|_{\blambda}$;  second, apply $\mixed(\gps)$ to $\rho|_{\blambda}$ with $r$ set to $\ell(\blambda)$. As a consequence of \Cref{prop:base_implements_pgm}, we therefore have the following~result. 

\begin{corollary}
    $\mixed^+(\gps)$ implements weak Schur sampling, and, conditioned on the Young diagram $\blambda$ observed, followed by a pretty good measurement over the rank-$\ell(\blambda)$ Hilbert--Schmidt measure.
\end{corollary}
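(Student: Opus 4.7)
The plan is to show that $\mixed^+(\gps)$ factorizes naturally into two stages and then invoke \Cref{prop:base_implements_pgm} on the second stage. Looking at \Cref{fig:gps-reduction-improved}, Step~2 is exactly weak Schur sampling on $\rho^{\otimes n}$, which produces a random Young diagram $\blambda$ and collapses the state to $\rho|_{\blambda}$. Steps~3--6, viewed in isolation and conditioned on $\blambda$ (which fixes $\bell = \ell(\blambda)$ and hence $\bD = d\cdot\bell$), are precisely the recipe of $\mixed(\gps)$ from \Cref{fig:gps-reduction-basic-restate}, with the rank parameter $r$ instantiated to $\bell$ and the input state being $\rho|_{\blambda}$ (written in the Schur basis before the purification channel is applied, and Schur-inverted afterwards). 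So the whole algorithm is the two-step process: weak-Schur-sample, then run $\mixed(\gps)$ with rank parameter $\bell$.

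Given this decomposition, the corollary is immediate from \Cref{prop:base_implements_pgm}, which asserts that $\mixed(\gps)$ with parameter $r$ implements the pretty good measurement over the rank-$r$ Hilbert--Schmidt measure $\mu_{\mathrm{HS}}^{d,r}$ on arbitrary inputs. Applied conditionally with $r \leftarrow \bell$ and input $\rho|_{\blambda}$, the post-Schur-sampling stage of $\mixed^+(\gps)$ is the PGM associated with $\mu_{\mathrm{HS}}^{d,\ell(\blambda)}$, which is exactly what the corollary claims.

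The main thing to check—and the only place any care is needed—is that \Cref{prop:base_implements_pgm} genuinely applies when the input is $\rho|_{\blambda}$ rather than $\rho^{\otimes n}$, and when the rank parameter of the purification channel is set to $\bell = \ell(\blambda) < r$. But \Cref{prop:base_implements_pgm} is stated for arbitrary (possibly non-symmetric) inputs $\psi$ and arbitrary $r$ (with no constraint tying $r$ to the rank of the input), so both generalizations are already covered. The use of $\ell(\blambda)$ rather than $r$ is also consistent: conditioned on $\blambda$, we have $\ell(\blambda) \leq \ell(\blambda)$ trivially, so the random purification channel $\purifychan^{d,\ell(\blambda)}$ is well-defined on $\rho|_{\blambda}$ (every $\mu \vdash n$ appearing with nonzero weight in $\rho|_{\blambda}$ is just $\blambda$ itself, which has length $\bell$). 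With those observations in place, the corollary follows by combining these two steps and no further computation is required.
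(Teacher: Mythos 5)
Your proposal is correct and follows essentially the same route as the paper: the paper likewise observes that $\mixed^+(\gps)$ is just weak Schur sampling followed by $\mixed(\gps)$ run on $\rho|_{\blambda}$ with the rank parameter set to $\ell(\blambda)$, and then cites \Cref{prop:base_implements_pgm}, which is stated for arbitrary (not necessarily symmetric) inputs and arbitrary rank parameter. Your extra check that $\purifychan^{d,\ell(\blambda)}$ is well-defined on $\rho|_{\blambda}$ is a reasonable precaution but adds nothing beyond what the paper's one-line argument already implicitly uses.
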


\end{document}